\theoremstyle{lemma}
\newtheorem{lemma}{Lemma}
\newtheorem{theorem}{Theorem}
\newtheorem{definition}{Definition}
\title{Configuring Random Graph Models with Fixed Degree Sequences\thanks{All authors contributed equally to this work.}}
\author{
  Bailey K. Fosdick\thanks{Department of Statistics, Colorado State University, Ft. Collins, CO 80523 USA (bailey@stat.colostate.edu)}
  \and
  Daniel B. Larremore\thanks{Santa Fe Institute, 1399 Hyde Park Rd. Sante Fe, NM, 87501 USA; Department of Computer Science, University of Colorado, Boulder, CO 80309, USA; BioFrontiers Institute, University of Colorado, Boulder, CO 80303, USA. (daniel.larremore@colorado.edu)}
  \and
  Joel Nishimura\thanks{School of Mathematical and Natural Sciences, Arizona State University, Glendale, AZ 85306 USA (joel.nishimura@asu.edu)}
  \and
  Johan Ugander\thanks{Management Science \& Engineering, Stanford University, Stanford, CA, 94305 USA (jugander@stanford.edu)}
}
\begin{document}
\maketitle

\begin{abstract}
Random graph null models have found widespread application in diverse research communities analyzing network datasets, including social, information, and economic networks, as well as food webs, protein-protein interactions, and neuronal networks. The most popular family of random graph null models, called configuration models, are defined as uniform distributions over a space of graphs with a fixed degree sequence. Commonly, properties of an  empirical network are compared to properties of an ensemble of graphs from a configuration model in order to quantify whether empirical network properties are meaningful or whether they are instead a common consequence of the particular degree sequence. In this work we study the subtle but important decisions underlying the specification of a configuration model, and investigate the role these choices play in graph sampling procedures and a suite of applications. We place particular emphasis on the importance of specifying the appropriate graph labeling---stub-labeled or vertex-labeled---under which to consider a null model, a choice that closely connects the study of random graphs to the study of random contingency tables. We show that the choice of graph labeling is inconsequential for studies of simple graphs, but can have a significant impact on analyses of multigraphs or graphs with self-loops. The importance of these choices is demonstrated through a series of three in-depth vignettes, analyzing three different network datasets under many different configuration models and observing substantial differences in study conclusions under different models. We argue that in each case, only one of the possible configuration models is appropriate. While our work focuses on undirected static networks, it aims to guide the study of directed networks, dynamic networks, and all other network contexts that are suitably studied through the lens of random graph null models.
\end{abstract}

\section{Introduction}\label{sec:introduction}

A configuration model is a uniform distribution over graphs with a specific degree sequence. For researchers studying network data, it is common to employ a configuration model as a degree-preserving null model that holds fixed the degree sequence of an empirical graph while randomizing all other structure. In other domains, researchers study the properties of graph algorithms, dynamical models, or optimization routines on ``realistic" graphs by sampling random graphs from a configuration model with an empirically relevant degree sequence.

There is a tendency in the literatures of graph mining, machine learning, and network science to think of and study {\it one} configuration model---{\it the} configuration model---without specifying or reflecting upon the defining properties of the space of graphs over which the uniform distribution is considered. As a consequence, misunderstandings have developed within a number of domain sciences surrounding {\it the} configuration model, at times because discussions refer to uniform distributions over subtly but importantly different spaces of graphs. In this paper, we clarify the differences between eight commonly arising graph spaces and their corresponding uniform distributions, aiming to provide an orderly review and guide for the diverse fields of study where configuration models have found application.

In some circumstances, differences between particular graph spaces are asymptotically small in the limit of large and sparse graphs with restricted degree sequences. However, as we will demonstrate, not all differences between graph spaces are asymptotically small, and perhaps more importantly, a great deal of modern graph analysis is performed on graphs that are well short of fulfilling these asymptotic promises. 

We begin by reviewing eight common graph spaces over which one might seek a uniform distribution. These spaces can be organized according to the answers to three binary questions, which we describe in Section \ref{subsec:choosing}. We then provide a detailed overview of the subtleties involved in uniformly sampling from these different spaces in Sections \ref{sec:sampling} and \ref{sec:othersampling}, primarily through correctly specified Markov chains, with brief discussions of other related graph spaces, including connected, directed, and weighted graphs\footnote{See also  \cite{carstens2016switching} whose publication followed this work's submission.}. After establishing formal sampling results we then turn to a series of three vignettes in Section \ref{sec:vignettes} that illustrate the scientific importance of choosing the correct graph space as a null model. In particular, we argue that the common default choice of studying configuration models over stub-labeled graphs (where each half-edge is labeled) is an inappropriate choice for most analyses of non-simple graphs. Importantly, we demonstrate that this choice of null model leads to different conclusions than more appropriate null models based on vertex-labeled graphs.

\subsection{Basic definitions}
Recall the basic definition of a {\it graph} as an ordered pair $G=(V,E)$, consisting of a vertex set $V$ and an edge set $E \subseteq V \times V$. The edge set $E$ is understood to be a simple set, but if $E$ is a multiset (where a vertex pair $(u,v)$ can appear several times in $E$) then the graph is instead called a {\it multigraph}. Depending on the context, a graph or multigraph may allow or disallow the presence of self-loops (edges of the form $(u,u)$, connecting a vertex to itself). A graph is also often represented as a $|V| \times |V|$ \textit{adjacency matrix}, such that the $(i,j)$th entry $w_{ij}$ is equal to the number of edges between vertices $i$ and $j$. For undirected graphs, as considered here, the adjacency matrix is symmetric. 

The choices to allow or disallow self-loops or multiedges are the first two choices in specifying a configuration model's graph space. In order to be precise about the properties of each graph space, we briefly review four definitions. First, a {\it simple graph} is a graph without self-loops or multiedges. Second, there is no established name in the literature for a graph allowing self-loops but without multiedges, so we refer to such a graph plainly as a {\it loopy graph}. In the literature, multigraphs are sometimes taken to have self-loops or not; we adopt the more conventional name {\it multigraph} to refer specifically to multigraphs without self-loops, and use {\it loopy multigraph} to refer to a multigraph that allows self-loops (also sometimes called a pseudograph). See Figure \ref{fig1}(a) for a diagram illustrating the basic relationships between these graph spaces.

\subsection{Vertex- and stub-labeled graph spaces}

A graph $G=(V,E)$ consists of two sets: a vertex set $V$ and an edge set $E$. These sets can be unlabeled or labeled, motivating the following  definitions that will be used throughout the paper. 
\begin{definition}[Vertex-labeled graph]
	A \underline{vertex-labeled graph} is a graph in which each vertex has a distinct label.
\end{definition}
For vertex-labeled graphs, there is a bijection between graphs and adjacency matrices, i.e.~each vertex-labeled graph can be uniquely identified by its adjacency matrix and vice versa. However, in addition to vertices, the two endpoints of each edge (where they connect to vertices), can also be labeled separately. The case when these half-edges or ``stubs'' are labeled motivates the following definition.
\begin{definition}[Stub-labeled graph]
	A \underline{stub-labeled graph} is a graph in which each half-edge ({\it stub}) has a distinct label, and thus each edge has a pair of distinct labels.
\end{definition}
Note that a stub-labeled graph also has implicitly labeled vertices, since each vertex is distinctly labeled by the set of labeled stubs attached to it.  However, in contrast with vertex-labeled graphs, there is not a bijection between stub-labeled graphs and adjacency matrices, i.e.~multiple stub-labeled graphs can correspond to the same adjacency matrix. An {\it unlabeled graph} is a graph in which neither edges nor vertices are labeled. An unlabeled graph can be thought of as an isomorphism class in a space of labeled graphs, where there exists a set of labeled graphs that all correspond to the same unlabeled graph. Similarly, there exists a set of stub-labeled graphs which correspond to the same vertex-labeled graph, motivating the following definition.
\begin{definition}[Stub-isormorphism]
	A \underline{stub-isomorphism equivalence class} is the set of all stub-labeled graphs which, upon removal of stub labels, results in the same vertex-labeled graph. Equivalently, a stub-isomorphism class is the set of all stub-labeled graphs which are represented by the same adjacency matrix. Two graphs in the same stub-isomorphism class are said to be \underline{stub-isomorphic}.
\end{definition}

For the space of simple graphs with a given degree sequence $\{k_i\}_{i \in V}$, where $k_i$ is the degree of vertex $i$---and only for simple graphs, as we shall see---the number of stub-isomorphic graphs corresponding to a given vertex-labeled graph is a constant that depends only on the degree sequence (which is fixed). As a result, each vertex-labeled graph appears the same number of times in the space of stub-labeled graphs, and hence, the uniform distributions over both spaces are equivalent in most practical contexts where analyses ignore explicit stub labels. On the other hand, for non-simple graphs with loops and/or multiedges, this is not the case, and the choice of labeling can radically change the space of graphs, and thereby, a resulting/downstream/derivative analysis.

We visualize the differently labeled spaces for an example degree sequence, $\{2,2,1,1\}$, in Figure \ref{fig1}(c-e). In the vertex-labeled space, half the graphs (3 of 6) have self-loops and only a third of the graphs (2 of 6) are simple; in the stub-labeled space, the majority of the graphs (8 of 15) are simple. As we will show in Section \ref{sec:enumeration}, self-loops and multiedges are always more common in vertex-labeled graphs, and for many degree sequences they are vastly more common. Uniform distributions over these differently labeled spaces can therefore produce wildly different answers to straightforward questions. For example, if one asks, ``What fraction of graphs with the given degree sequence form a single connected component?''for this degree sequence, the answer varies considerably---1/4, 2/6, or 8/15---depending on the space.

\begin{figure}[t]
	\centering
	\includegraphics[width=0.95\linewidth]{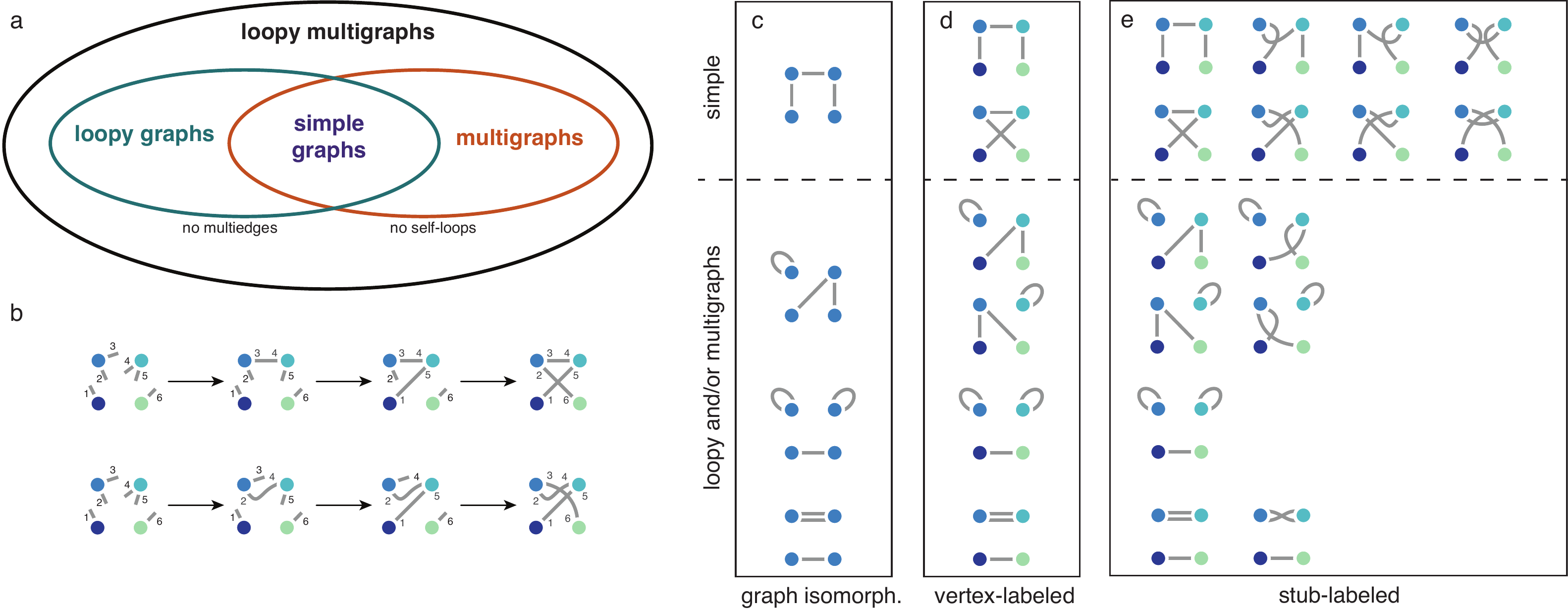}
	\caption{{\bf Graph spaces.} (a) Nested and overlapping graph spaces, defined by allowing or prohibiting self loops or multiedges. (b) Two instances of stub-matching resulting in the same vertex-labeled graph but different stub-labeled graphs. (c-e) For the degree sequence $\{k_i\}=\{2,2,1,1\}$, the (c) set of graph isomorphism classes, (d) set of vertex-labeled graphs, and (e) set of stub-labeled graphs, (where the stub labels are delineated by the locations where they protrude from a vertex). For the two simple graphs in panel (d), they are both ``stub-isomorphic'' to the same number of stub-labeled graphs in panel (e), in particular, to exactly $\prod_{i} k_i ! = 4$ graphs. However, the sizes of the stub-isomorphism classes differ for graphs with self-loops or multiedges, illustrating why vertex- and stub-labeled spaces may not be treated as equivalent. Note that both graphs shown in panel (b) fall in the same row of panel (e).}
	\label{fig1}
\end{figure}

\subsection{A brief history of stubs}
\label{sec:stub}

Stub-labeled graphs arise naturally from a relatively simple stub matching process. The first step 
assigns a specific number of stubs to each vertex, ensuring that each vertex will have exactly the desired number of edges as specified by the degree sequence. To guarantee vertex $i$ will have the correct degree $k_i$, we force one endpoint of each of $k_i$ edges to be vertex $i$ while the other endpoint is left floating, unassigned. In this way, each vertex $i$ has $k_i$ half-edges or stubs. Joining two such stubs produces an edge. Note that by construction, every vertex has the correct number of edges, so repeatedly joining pairs of stubs results in a graph with the correct degree sequence, shown in Fig.~\ref{fig1}(b).

More precisely, the stub matching process takes a specified degree sequence $\{k_i\}_{i \in V}$ and generates a graph using the following randomized process. Each vertex $i$ is assigned exactly $k_i$ stubs, and pairs of stubs are chosen uniformly at random and connected until there are no remaining unpaired stubs. This process, which only requires that the total number of stubs be even, creates a loopy multigraph with exactly the specified degree sequence. Due to the fact that stubs are chosen uniformly at random, this {\it stub-matching} procedure (also called the {\it pairing model} \cite{blitzstein2011sequential}) samples uniformly from the space of stub-labeled loopy multigraphs, as discussed further in Section~\ref{sec:directsample}.

Stub matching was first introduced by Bollob\'as \cite{bollobas1980probabilistic} as a method for enumerating the number of vertex-labeled simple graphs with certain degree sequences \cite{bekessy1972asymptotic,bender1978asymptotic}. Although stub matching draws from the space of stub-labeled loopy multigraphs, Bollob\'as assumed that the degrees of all vertices did not grow too quickly, relative to the size of the graph, and then showed that the number of stub-labeled graphs with self-loops and/or multiedges was asymptotically small relative to the number of stub-labeled simple graphs. By the fact that every vertex-labeled simple graph is stub-isomorphic to exactly $\prod_{i} k_i!$ stub-labeled graphs (see Section \ref{sec:enumeration} and Figure~\ref{fig1}(d-e)), Bollob\'as provided an asymptotically tight estimate (for large graphs) of the number of vertex-labeled simple graphs. Of note, Bollob\'as called each stub-labeled graph a {\it configuration}, the origin of the name {\it configuration model} for these uniform distributions.

Bollob\'as' analysis contains two subtleties that are major sources of confusion about configuration models. First, as noted above, every vertex-labeled simple graph is isomorphic to a fixed number of stub-labeled simple graphs (e.g.~this number is four for the degree sequence $\{2,2,1,1\}$ in Figure \ref{fig1}), but the same cannot be said for graphs with self-loops or multiedges. Second, many analyses assume conditions on the degree sequence (e.g., adequately bounded growth) under which the number of non-simple graphs is asymptotically small relative the number of simple graphs, but for any finite degree sequence the number of non-simple graphs can represent a substantial fraction of the graph space. The mathematical literature is almost always precise regarding these two points. However, as configuration model random graphs have spread into diverse fields due to waves of interest in graph analysis and network science methods, these points have often caused confusion in the broader literature, as we discuss below. We hope that this work helps mark a turning point in that confusion. In the remainder of this introduction, we briefly survey the history of different applications of fixed-degree-sequence random graph null models, and then summarize the concrete decisions that underlie the choices of different configuration model null models.

\subsection{A brief history of applications of random graphs with fixed degree sequence} \label{subsec:history}

The practice of comparing an observation to a randomized null model has its origins in R.~A.~Fisher's foundational work on randomization for hypothesis testing \cite{fisher1935design}.
Random graph null models extend this practice to the space of graphs. They allow comparisons between properties of real-world graphs and properties of graphs drawn at random from a graph space, ultimately allowing us to quantify what is surprising and what is expected. However, as with any hypothesis test, the choice of randomized null model directly affects the conclusions that can be drawn from the test. For this reason, the classic but overly simplistic Erd\H{o}s-R\'enyi random graph model, in which each possible edge exists independently with probability $p$, or its near equivalent, in which a fixed number of edges are placed between random pairs of vertices, are usually avoided. Compared to an Erd\H{o}s-R\'enyi null model, 
 real-world networks often appear rich in structure by comparison. Instead, due to the fact that many key properties of networks are strongly constrained by the distribution of vertex degrees \cite{newman2001random,boguna2004cut,callaway2000network,cohen2002percolation,larremore2011predicting,restrepo2005onset}, it is far more common and appropriate to use as a null model a space of graphs in which the degrees of all the vertices are fixed, but where the edges are otherwise placed between vertices uniformly at random. This family of degree-preserving random graph models, which we call configuration models throughout this paper, have been discovered independently and used as null models in sociology, ecology, systems biology, combinatorics, statistics, psychology and network science, spanning over 80 years of applied research. We detail some of this rich history here.

{\it Null models in sociology: chance sociograms, 1930s}. In 1934 Jacob Moreno initiated the quantitative study of social networks through his influential book {\it Who Shall Survive?} \cite{moreno1934shall}. Soon thereafter, in 1938, Moreno and Jennings published {\it Statistics of Social Configurations}, which introduced statistics to social network analysis through the use of so-called {\it chance sociograms}, i.e.~randomly sampled adjacency matrices with fixed out-degrees (i.e.~one fixed margin) \cite{moreno1938statistics}. Moreno and Jennings argued that in order to establish the statistical significance of an analysis, one should compare an observed social network with a network constructed through a chance experiment.\footnote{Moreno and Jennings in fact frequently used the word ``configurations'' to describe their chance sociograms, several decades before Bollob\'as' work: ``Study of the findings of sociometric tests showed that the resulting configurations, in order to be compared with one another, were in need of some common reference base from which to measure the deviations. It appeared that the most logical ground for establishing such a reference could be secured by ascertaining the characteristics of typical configurations produced by chance balloting for a similar size population with a like number of choices.'' That said, the term configuration model is generally accepted to stem from Bollob\'as' usage of the word.} Moreno and Jennings demonstrated their procedure by studying a population of 26 children at the New York State Training School for Girls in Hudson, NY. The children were surveyed for their three preferred dining partners, creating a directed network of dining partner preferences. This observed network was compared to a small set of seven manually randomized directed graphs restricted such that each vertex had three outgoing edges and no multiedges (as in the observed network). Moreno and Jennings contrasted their empirical graph with their small ensemble of graphs drawn from their null model, and concluded that some observed network features were statistically significant while others were not. While our focus in this work is on undirected (as opposed to directed) configuration models, directed configuration models are discussed briefly in Section~\ref{sec:othermarkov}.
Another significant early use of a random graph null model in sociology is contained in Davis and Leinhardt's work testing Homans' structural theory of social hierarchy from the 1950's \cite{homans1950human}. The study tested the theory by studying social network subgraph frequencies \cite{davis1971structure}, contrasting empirical frequencies with those of an Erd\H{o}s-R\'enyi random graph null model.

{\it Null models in ecology: species co-occurrence patterns, 1970s.} A configuration model arose independently in ecology when, in 1975, Jared Diamond published an analysis of bird species co-occurrence on the islands of the Bismarck Archipelago and argued that, based on the patterns of species presence and absence observed across the islands, the presence of some species precluded the presence of others \cite{diamond1975assembly}. In 1979, Connor and Simberloff argued that the patterns themselves were not sufficient evidence for such conclusions; they argued that a null model of randomly assigned species to islands, in which the number of species per island and number of islands per species are exactly preserved, should be used to assess the possibility that the empirical patterns are the result of random chance \cite{connor1979assembly}. In other words, Connor and Simberloff argued that observed patterns should be compared against a null model, and in particular against a degree-preserving configuration model, based on the observed presence/absence matrix. This methodological debate has continued for over 40 years regarding both the correct null model and appropriate test statistics for quantifying patterns of species presence/absence patterns (see \cite{gotelli1996null} for a partial review). 

Many contributions to the ongoing ecological discussion have been made in the years since. In 1987, Wilson contributed a fixed marginal null model, which required that any matrix in the ensemble have the same number of sites per species and species per site as the observed data, corresponding directly to an undirected bipartite configuration model with fixed degrees \cite{wilson1987methods}.\footnote{A bipartite network is a network where edges only occur between two distinct sets of vertices. For example, a plant-pollinator network contains both plants and insects as vertices and edges connecting pollinating insects to plants, and no edges between pairs of insects or pairs of plants.} Wilson's 1987 fixed marginal null model assembled the network via a stub matching procedure. He found that often, the stub matching was unable to finish without creating a double edge, and so he found better success rates by using a heuristic nearly equivalent to the Havel-Hakimi algorithm \cite{havel1955remark,hakimi1963realizability} (though Wilson states that he was unable to find any proof in the literature of his method). This debate illustrates the disconnect between the ecology and mathematics literatures at the time.

{\it Null models for tables: matrix counting and contingency tables, 1970s - 1990s.} Contingency tables are rectangular matrices with integer entries, representing a tabulation of entities along two dimensions, e.g.~the number of college graduates by major and institution. These tables, when viewed as adjacency matrices, characterize an undirected bipartite multigraph. There are straightforward analogous connections between the binary tables in ecology and the more general (non-binary) contingency tables studied in statistics \cite{chen2005sequential}. As in the network literature, contingency table analyses often involve asking whether table properties are interesting compared to random tables with the same row and column totals (the same marginal totals). An initial focus of this literature was on enumerating the number of matrices with fixed marginals \cite{gail1977counting,diaconis1995rectangular}. Compared to presence/absence matrices, where the entries are restricted to be either 0 or 1, analyzing adjacency matrices corresponding to contingency tables is much more straightforward. Many direct sampling procedures have been proposed \cite{patefield1981algorithm}, as well as procedures which \textit{exactly} characterize the null distribution of tables with fixed marginals and do not rely on sampling (see \cite{verbeek1985survey,agresti1992survey} for reviews of these methods).

{\it Null models in systems biology: network motifs, 2000s.} As the large-scale study of both genetic regulatory networks and neuronal networks emerged in the early 2000s, lengthy debates were held in the literature regarding the choices of (and technical means for sampling from) null models. The debate on genetic regulatory networks began with a study by Milo {\it et al.}~that found specific network motifs (regulatory patterns) that were more frequent than expected in a configuration model null model \cite{milo2002network,itzkovitz2003subgraphs}. Soon after that work was published, King issued a commentary that called attention to choices in the design of the random graph sampling algorithms in these works, noting that they did not sample uniformly from any graph spaces of reasonable interest \cite{king2004comment}. A series of responses by the original authors led to corrected algorithms for sampling from the stub-labeled spaces of random graphs with fixed degree sequences \cite{milo2003uniform, itzkovitz2004reply}. It is worth mentioning that other work on configuration model null models of genetic regulatory networks, using correct sampling techniques, was also being conducted in parallel to the above controversy \cite{maslov2002specificity}.

A parallel debate in the literature on neuronal networks noted that the study of network motifs in neuronal networks \cite{milo2002network,milo2004superfamilies} involving comparisons between observed structures and configuration model random graphs was flawed at a deeper conceptual level, as it overlooked the role of spatial structure in brains \cite{artzy2004comment}. A series of published exchanges followed \cite{milo2004response,artzy2005generating}, leading to the study of specific spatial network null models for studying brain networks \cite{sporns2004motifs}. A similar adaptation, known as distance modularity \cite{liu2014detecting}, has recently been introduced to the broader literature on network community detection.

Other applications of configuration model random graph null models include studies of patterns in the structure of the world wide web \cite{newman2001random}, the Internet \cite{maslov2004detection}, food webs \cite{stouffer2007evidence}, academic career trajectories \cite{malmgren2010role}, the dynamics of social contagion \cite{centola2007cascade}, disease propagation \cite{st2017susceptible}, opinion dynamics \cite{watts2007influentials}, and economic network effects \cite{sundararajan2007local}. As we discuss at length in Section~\ref{sec:modularity}, these null models also underlie all community detection methods based on modularity maximization \cite{newman2004finding}. Across these diverse applications as well as the earlier literatures, different applications have tended to employ slightly different null models, and these variations make it very difficult to compare and contrast findings. In the next subsection we introduce a sequence of concrete choices that formalize the decisions underlying the choice of a graph space, and hence a configuration model. Consequences of these decisions are discussed at length in Section~\ref{sec:vignettes} through a series of application vignettes.

\subsection{Choosing a graph space}\label{subsec:choosing}
It is often impossible to unambiguously identify an empirical graph as coming from a particular space of graphs; additional knowledge about the system that produced the graph is almost always required. For example, as shown in Figure~\ref{fig1}, simple graphs are a subset of the other graph spaces, and thus a given simple graph may plausibly lie within any of the spaces, defined by the presence or absence of self-loops, multiedges, and stub-labels. Therefore, in order to choose the appropriate graph space for a null model, we introduce three questions about the graph and the system that produced it. \\

\begin{figure}[t]
	\centering
	\includegraphics[width=0.75\linewidth]{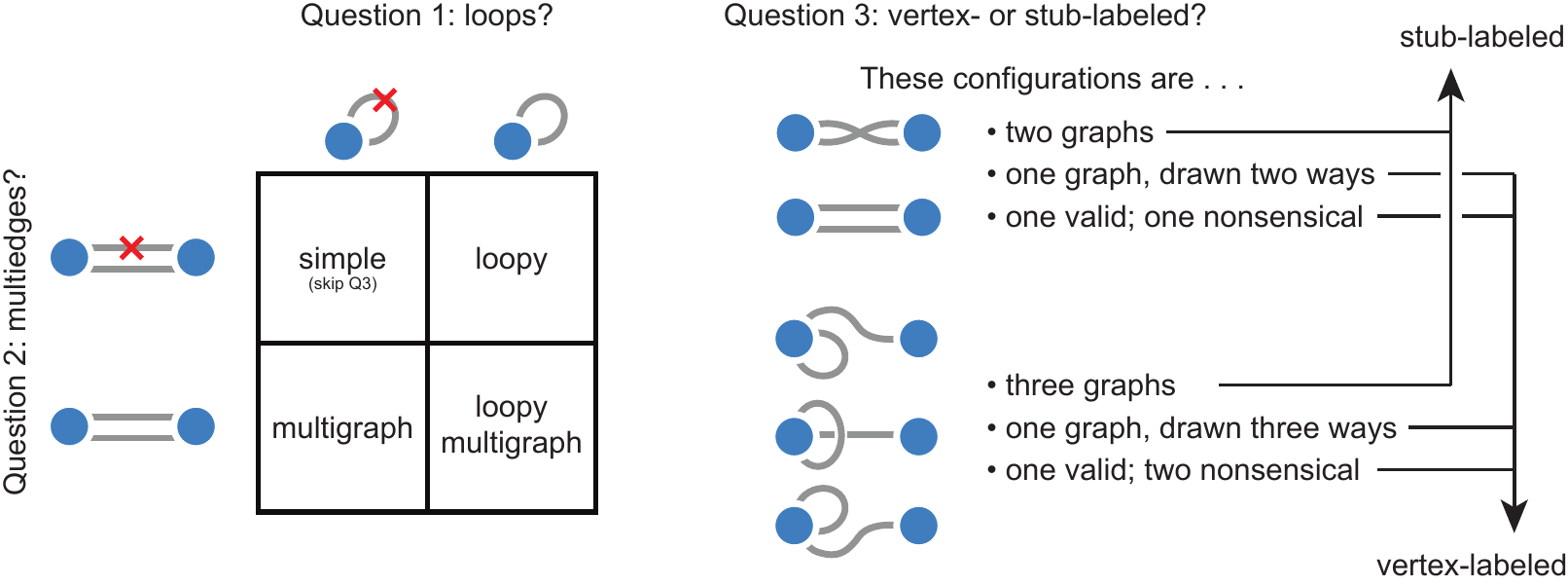}
	\caption{{\bf Choosing a graph space.} Three questions must be answered in order to correctly choose a configuration model graph space. Questions 1 and 2 address whether the graph has, or could possibly have, self-loops and multiedges. If the space permits self-loops, multiedges, or both, then Question 3 addresses whether the space is vertex-labeled or stub-labeled. These questions are explained in detail in the text, Section~\ref{subsec:choosing}.}
	\label{fig2}
\end{figure}

{\bf Question 1: Are there self-loops in the graph?} 
For example, a citation network consisting of papers (as vertices) and their citation relationships (as edges) cannot have self-loops since a single paper can never cite itself. On the other hand, a network of authors (as vertices) and their citation relationships (as edges) may very well have self-loops since authors can, and do, cite their own work. Note that an observed network of authors and their citations ought to reside within a graph space allowing self-loops, even if a particular observed network has no self-loops. However, in some cases, the method of data collection or recording itself may preclude self-loops---even if a self-loop would be reasonable and interpretable---and in such cases, the relevant graph space should not include self-loops.

{\bf Question 2: Are there multiedges in the graph?}
For example, a network of contacts among barn swallows---analyzed in Section~\ref{sec:swallows}---in which each edge represents an observed interaction between a pair of birds, may have multiedges corresponding to multiple observations of an interaction between the same pair of birds. On the other hand, a protein-protein interaction network, in which two proteins are connected if they interact, cannot ever have a multiedge since interactions in this context are conceptually boolean. Note that an observed network may reside within a graph space allowing multiedges, even if a particular observed network has no multiedges. However, as in Question 1, in some cases, the method of data collection or recording itself may preclude multiedges---even if a multiedge would be reasonable and interpretable---and in such cases, the relevant graph space should not include multiedges.

If the answers to the first two questions are both no, then the space of simple graphs is the appropriate space. For the purposes of sampling from a simple configuration model, there is then no meaningful difference between vertex- and stub-labeled spaces. One need only to ensure that the graph sampling algorithm correctly samples from the space of simple graphs (a non-trivial task further discussed in Section \ref{sec:sampling}), due to the fact that any ensemble of vertex-labeled simple graphs can easily be converted into an ensemble of stub-labeled simple graphs, and vice versa (see Section \ref{sec:enumeration} for further discussion).  However, if the answer to either of the previous questions was yes, indicating that the graph space contains self-loops, multiedges, or both, we pose a key third question.

{\bf Question 3: Is the graph space stub-labeled or vertex-labeled?} Consider a pair of vertices connected by two edges. If swapping the edges so that they cross, as shown in Figure~\ref{fig2}, produces a distinct graph, the space is stub-labeled. Alternatively, if crossing the edges either produces a graph with the same interpretation or produces a nonsensical graph, the space is vertex-labeled.

There are a number of instances where a graph should be treated as vertex-labeled rather than stub-labeled. For example, if the stubs are ordered (e.g.~temporally) in a way that would make swapping nonsensical, the space of graphs is vertex-labeled in spite of the fact that the stubs have identities. Such a situation is commonly encountered when studying a telephone network (also called a call detail record or CDR), where edges represent phone calls between individuals. If a pair of individuals are recorded sharing two phone calls, it is meaningless to consider the crossed graph that connects the stub associated with the first call and the first individual to the stub associated with the second call and the second individual, as this swap represents a graph that could never have been observed. See Section~\ref{sec:swallows} for a concrete exploration of these differences. If, on the other hand, the crossed edges and parallel edges as shown in Figure~\ref{fig2} are distinguishable and plausible, the space of graphs should be stub-labeled. For example, in a network of intermarriages between families or villages, an edge may correspond to an individual from one village marrying an individual from another village. Here, different sets of marital pairings are meaningful and distinct, indicating that the graph space is stub-labeled. 

One alternative approach to answering Question 3 involves considering the adjacency matrix of the graph. For a vertex-labeled space, each graph corresponds to a single, unique adjacency matrix, and each adjacency matrix corresponds to a single, unique vertex-labeled graph. On the other hand, multiple stub-labeled graphs have identical adjacency matrices, and a valid adjacency matrix corresponds to a stub-isomorphism class of stub-labeled graphs, as shown in Figure~\ref{fig1}. Thus, Question 3 may be answered by considering whether the adjacency matrices corresponding to the graph space are unique and distinct objects, or whether repeated adjacency matrices are allowed in the ensemble.\\

Answers to the first two questions in this section fully specify whether the graph space is simple, loopy, multigraph, or loopy multigraph, and the answer to the third question determines whether the space is stub-labeled or vertex-labeled. Since, for the purposes of sampling simple graphs or analyzing network properties that are functions of the adjacency matrix, there is no practical difference stub-labeled and vertex-labeled spaces, we may often treat these as equivalent and focus on the seven distinct and non-interchangeable spaces of graphs just described. 
\\

{\bf Organization.}
In Section~\ref{sec:sampling} we describe space-specific Markov chain Monte Carlo algorithms that provably generate uniform samples from the graph spaces discussed above. Alternative methods for sampling random graph null models are discussed in Section~\ref{sec:othersampling}, and related questions about counting the number of graphs in a given graph space are covered in Section~\ref{sec:enumeration}. Section~\ref{sec:vignettes} employs the samplers from Section~\ref{sec:sampling}, examining the questions and decisions outlined in this introduction in the context of three separate applications of configuration model null models to study empirical network structure. Readers whose primary interest is understanding the practical consequences of configuration model choices are invited to skip Sections~\ref{sec:sampling}--\ref{sec:enumeration} and go directly to Section~\ref{sec:vignettes}, though the earlier sections establish the procedures employed therein.

\section{Markov chain Monte Carlo Sampling} 
\label{sec:sampling}

In this section we establish theoretical justifications for the use of Markov chain Monte Carlo (MCMC) methods to uniformly sample from graph spaces with a fixed degree sequence, with specific considerations for multiedges, self-loops, and vertex- or stub-labeling. In all methods presented in this section, a Markov chain over the desired space of graphs is designed to have a stationary distribution that is uniform over the entire space. We emphasize key differences between sampling stub-labeled and vertex-labeled graph spaces, and furnish pseudocode for all the MCMC sampling algorithms that we analyze.\footnote{Implementations in Python are available at \href{https://github.com/joelnish/double-edge-swap-mcmc}{https://github.com/joelnish/double-edge-swap-mcmc}}

We begin by reviewing the double edge swap Markov chain method for sampling stub-labeled loopy multigraphs, the easiest space for understanding the validity of the sampling procedure. We outline the three sufficient conditions (regularity, aperiodicity, connectivity) that combine to establish that random double edge swaps on stub-labeled loopy multigraphs have a unique and uniform stationary distribution. The corresponding lemmas and theorems are then reported, with references provided for known proofs, for stub-labeled simple graphs and stub-labeled multigraphs (without loops). 

Following the treatment of stub-labeled graph spaces, we then characterize Markov chains with stationary distributions that are uniform over vertex-labeled graph spaces. These chains have not previously been described, though they are closely related to existing methods for sampling the space of contingency tables with fixed marginals \cite{verbeek1985survey}, a problem from the statistics literature and discussed in the introduction. 

Sampling from spaces of loopy graphs (without multiedges) is not discussed in this section. Such spaces lack certain key properties necessary for sampling methods involving double edge swap routines to succeed. We elaborate on this matter in Section \ref{sec:othersampling}, where we also discuss other methods for graph sampling, including alternative Markov chains as well as direct sampling techniques.

\subsection{Edge swap Markov chains}

First developed for bipartite simple graphs \cite{besag1989generalized} and directed simple graphs \cite{rao1996markov}, Markov chain traversals of graph spaces are popular ways to sample from a variety of graph spaces \cite{miklos2004randomization,artzy2005generating,newman2003mixing}. If the Markov chain is constructed so that the stationary distribution of the chain is the uniform distribution over the desired graph space, samples taken from this chain at sufficiently spaced intervals (see the discussion of mixing times in Section \ref{sec:mixingtime}) can be treated as independent uniform samples from the space. 

The fundamental gadget underlying the approach is a randomized way of generating new graphs from existing graphs. Seemingly rediscovered multiple times \cite{hakimi1963realizability,ryser1957combinatorial,newman2003mixing,bienstock1994degree}, the most popular way to alter a graph without changing the degree sequence is the double edge swap, first suggested by Petersen in 1891 \cite{petersen1891theorie}, and depicted in Figure~\ref{doubleEdgeSwapFig}. Let $\{u_1,...,u_{k_u}\}$ denote the set of edge stubs for a vertex $u$ with degree $k_u$. Across the literature, double edge swaps are also sometimes referred to as degree-preserving rewirings \cite{cafieri2010loops,taylor1981constrained}, checkerboard swaps\footnote{Checkerboard swaps are frequently implemented by selecting 4 vertices at random \cite{artzy2005generating} while double edge swaps choose 2 edges at random. We focus on selecting edges at random as it is more efficient on sparse graphs. } \cite{stone1990checkerboard,gotelli1996null,artzy2005generating}, tetrads \cite{verhelst2008efficient}or alternating rectangles \cite{rao1996markov}. 

\begin{figure}
\centering
	\includegraphics[width=0.8\linewidth]{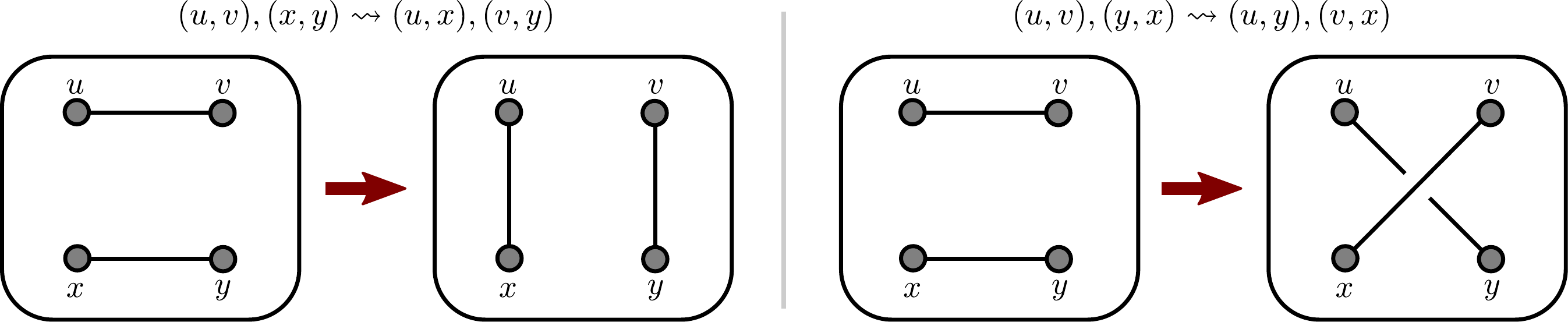}
	\vspace{-3mm}
	\caption{{\bf Double edge swaps.} Double edge swaps alter a graph's structure without changing the degree sequence. Each pair of edges may be swapped in two different ways: (left) $(u,v),(x,y)\leadsto (u,x),(v,y)$ and (right) $(u,v),(y,x) \leadsto (u,y),(v,x)$.}
	\label{doubleEdgeSwapFig}  
\end{figure}

\begin{definition}[Double Edge Swap, stub-labeled]
A \underline{stub-labeled double edge swap} replaces a pair of stub-labeled edges $(u_i,v_j)$ and $(x_p,y_q)$ with stub-labeled edges $(u_i,x_p)$ and $(v_j,y_q)$.
\end{definition}

Explicitly labeling stubs emphasizes that the stub-labeled double edge swap differs from its vertex-labeled version. That said, the notation of tracking stubs is largely unnecessary as the exact labels of stubs can be inferred in context and standard network analyses (of assortativity, modularity, etc.)~do not consider stub labels. For a pair of edges $(u,v)$ and $(x,y)$ there are two possible swaps, as shown in Figure~\ref{doubleEdgeSwapFig}. As a shorthand, we denote these possible swaps as $(u,v),(x,y)\leadsto (u,x),(v,y)$ and $(u,v),(y,x) \leadsto (u,y),(v,x)$.

In contrast to arbitrary edge rewires \cite{burda2003uncorrelated}, double edge swaps preserve the degree distribution of the graph. Notice, however, that some double edge swaps can create self-loops, e.g.~$(u,x),(u,y) \leadsto (u,u),(x,y)$, as well as multiedges, e.g.~when any produced edge replicates an existing edge. The way such swaps are handled has important consequences for the stationary distribution of the Markov chain.

Many of the properties of the double edge swap can be understood as graphical properties of the \textit{graph of graphs}, the state diagram of the Markov chain in the space of graphs. We construct the graph of graphs associated with a degree sequence by letting each graph with the specified degree sequence be a vertex and connecting two vertices (i.e.~graphs) with an edge if one double edge swap can transform one graph into the other. We use $\mathcal{G}(k)$ or $\mathcal{G}$ to generically denote a graph of graphs with a specified degree sequence $k= \{k_i\}_{i \in V}$. Throughout the text we only consider graph spaces with a given degree sequence, and as a consequence we almost always suppress the degree sequence $k$ from the notation, denoting a graph of graphs as simply $\mathcal{G}$.
With a few simple yet crucial modifications, sampling graphs using a random walk on $\mathcal{G}$ creates a Markov chain with a stationary distribution that is uniform over a desired graph space with a given degree sequence.

The statements in the following sections can be stated either in the language of Markov chains or in the language of graph properties of $\mathcal{G}$. To prove that samples from the Markov chain asymptotically obey a uniform distribution over a space of graphs, we show that by correctly specifying state transition probabilities, the chain satisfies three conditions:
\vspace{1mm}
\begin{itemize}
	\item[(i)] that the transition matrix of the chain is doubly stochastic ($\mathcal{G}$ is regular\footnote{A weighted directed graph is {\it regular} if every vertex has the same total out-degree weight and total in-degree weight.  For unweighted graphs, regularity implies all vertices have equal degree.}), 
	\item[(ii)] that the chain is irreducible (equivalently, $\mathcal{G}$ is strongly connected\footnote{A graph is {\it strongly connected} if every vertex can be reached from any other vertex. }),
	\item[(iii)] and that the chain is aperiodic ($\mathcal{G}$ is aperiodic\footnote{A graph is {\it aperiodic} if the greatest common divisor of the lengths of all cycles in the graph is one.}).
\end{itemize}
\vspace{1mm}
\noindent The regularity of $\mathcal{G}$ implies that the stationary distribution is uniform. A Markov chain that is both irreducible and aperiodic ($\mathcal{G}$ is connected and aperiodic) is said to be {\it ergodic}. This property guarantees that there is an unique stationary distribution that fully describes the long term behavior of the chain. Aperiodicity of $\mathcal{G}$ is often immediate and is particularly important if one wishes to subsample a Markov chain, a common strategy where only an infrequent set of samples (less sequentially correlated than the full set of samples) is retained. Once regularity and aperiodicity are established for loopy multigraphs, we show that with the appropriate modifications to transition probabilities, these properties also hold for the graph of graphs associated with any subspace of loopy multigraphs with a fixed degree sequence, whether vertex-labeled or stub-labeled. In contrast, connectivity of $\mathcal{G}$ (the irreducibility of the Markov chain) is not always guaranteed, and requires a non-trivial proof for many graph spaces, but is critical to ensuring that all possible graphs are sampled. 

\subsection{Markov chains on stub-labeled loopy multigraphs}

We begin by considering the simplest graph space for constructing and analyzing double edge swaps, $\mathcal{G}^{stub}_{l,m}$, where $stub$ denotes stub-labeled, $m$ denotes an allowance for multiedges, and $l$ denotes an allowance for loops. Further, let $M=\frac{1}{2} \sum_{i\in V} k_i$ denote the total number of edges in any graph in the graph space.

\begin{definition}[Graph of loopy multigraphs, stub-labeled]
 For some predefined degree sequence $k=\{k_i\}$, the \underline{graph of stub-labeled loopy multigraphs} $\mathcal{G}^{stub}_{l,m}=\{ \mathcal{V}^{stub}_{l,m},\mathcal{E}^{stub}_{l,m} \}$ is a directed graph, where the vertex set $\mathcal{V}^{stub}_{l,m}$ is the set of all stub-labeled loopy multigraphs with degree sequence $k$ and there is a directed edge $(G_1\to G_2)\in \mathcal{E}^{stub}_{l,m}$ iff there exists a stub-labeled double edge swap that transforms $G_1 \in \mathcal{V}^{stub}_{l,m}$ into $G_2 \in \mathcal{V}^{stub}_{l,m}$.
\end{definition}

For the space of loopy multigraphs, all edges in the graph of graphs $\mathcal{G}^{stub}_{l,m}$ are reciprocated: any double edge swap of distinct edges leads to a graph in the space and the double edge swap on $(u,v),(x,y) \leadsto (u,x),(v,y)$ can be undone by the ``reciprocal'' double edge swap $(u,x),(v,y) \leadsto (u,v),(x,y)$. 
Note however, that double edge swaps in other spaces are not necessarily reciprocated by the same number of swaps. 

We now show the three necessary conditions: that $\mathcal{G}^{stub}_{l,m}$ is regular, connected and aperiodic.

\begin{lemma} 
 \label{stub_multi_regular}
 $\mathcal{G}^{stub}_{l,m}$ is a regular graph. 
\end{lemma}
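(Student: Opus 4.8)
The plan is to show that the weighted graph of graphs $\mathcal{G}^{stub}_{l,m}$ has the same total out-degree at every vertex and a symmetric weighted adjacency matrix; together these give regularity and hence a uniform stationary distribution. I would count out-degrees directly, exploiting the fact that in the loopy multigraph space no double edge swap is ever forbidden.

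First, fix an arbitrary $G \in \mathcal{V}^{stub}_{l,m}$ and count the double edge swaps available from it. Each swap selects an unordered pair of distinct edges---$\binom{M}{2}$ choices---and recombines their four distinct, stub-labeled endpoints. Of the three perfect matchings on four stubs, one is the original pairing, so exactly two nontrivial swaps arise per pair of edges, giving $2\binom{M}{2}=M(M-1)$ candidate moves. The crucial point specific to this space is that every such move lands in $\mathcal{V}^{stub}_{l,m}$: since self-loops and multiedges are permitted, no move is rejected. Hence the out-degree of $G$ in $\mathcal{G}^{stub}_{l,m}$ equals the constant $M(M-1)$, which depends only on the (fixed) degree sequence and not on $G$. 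I would also observe that, because the four stubs are distinct, neither nontrivial recombination reproduces the original edge set, so each move yields a graph genuinely different from $G$ (there are no self-loops in $\mathcal{G}^{stub}_{l,m}$).

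Next I would establish symmetry of the weighted adjacency matrix of $\mathcal{G}^{stub}_{l,m}$, i.e.\ that the number of swaps carrying $G_1 \to G_2$ equals the number carrying $G_2 \to G_1$. This follows from the reciprocity noted just before the lemma: a swap taking edges $\{a,b\},\{c,d\}$ to $\{a,c\},\{b,d\}$ is inverted by the swap on $\{a,c\},\{b,d\}$ that restores $\{a,b\},\{c,d\}$. Sending each forward swap to its reciprocal is a well-defined bijection between the swaps $G_1 \to G_2$ and the swaps $G_2 \to G_1$, so the two counts agree and the weighted adjacency matrix is symmetric; consequently the in-degree weight of every vertex equals its out-degree weight.

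Combining the two steps finishes the argument: every vertex has out-degree weight $M(M-1)$, and by symmetry in-degree weight $M(M-1)$ as well, so $\mathcal{G}^{stub}_{l,m}$ is regular. Normalizing transition probabilities by $M(M-1)$ then yields a doubly stochastic transition matrix whose stationary distribution is uniform. I expect the only real care to lie in the bookkeeping of the first step---confirming that in the loopy multigraph space every choice of two distinct edges and every recombination of their stubs produces an admissible graph, so that the out-degree count is genuinely independent of $G$. In the more restrictive spaces treated later this independence fails, since forbidden moves must be rejected, which is precisely why regularity there demands additional modifications to the transition probabilities.
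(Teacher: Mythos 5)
Your proof is correct and takes essentially the same approach as the paper's: count the $2\binom{M}{2}=M(M-1)$ swaps available from any graph (noting that none is rejected in the loopy multigraph space) and use the reciprocal-swap pairing---which the paper states just before the lemma---to equate in-degree and out-degree at every vertex of $\mathcal{G}^{stub}_{l,m}$. You simply make explicit the bookkeeping (distinctness of the four stubs, the bijection between forward and reverse swaps) that the paper's terser proof leaves implicit.
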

\begin{proof}
 For each graph $G_j \in \mathcal{V}^{stub}_{l,m}$ there are $\binom{M}{2}$ pairs of edges and $M(M-1)$ possible double edge swaps that each correspond to a unique graph-graph transition edge into and out of $G_j$. We immediately see that $\mathcal{G}^{stub}_{l,m}$ is $M(M-1)$ regular, where each vertex has $M(M-1)$ incoming and outgoing edges.
\end{proof}

Next, the following lemma, first proved by \cite{eggleton1979graph} and largely provided by Newman in \cite{newman2003mixing}, gives connectivity for stub-labeled loopy multigraphs with any specified degree sequence.
\begin{lemma} 
 \label{stub_multi_connected}
 $\mathcal{G}^{stub}_{l,m}$ is a strongly connected graph.
\end{lemma}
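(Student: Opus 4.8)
The plan is to exploit the fact that a stub-labeled loopy multigraph with degree sequence $k$ is nothing more than a perfect matching on the set of $2M$ labeled stubs: each vertex $i$ contributes $k_i$ distinctly labeled stubs, and a graph is determined by a pairing of these $2M$ stubs into $M$ edges (a self-loop is a pair of stubs on the same vertex, a multiedge is two pairs joining the same two vertices). Under this identification, a stub-labeled double edge swap is exactly the operation that takes two matched pairs and recombines their four endpoints into a different pairing. Because every double edge swap of two distinct edges is reciprocated in this space (as already observed), it suffices to prove that the underlying undirected graph of $\mathcal{G}^{stub}_{l,m}$ is connected; strong connectivity then follows immediately.

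First I would fix any two graphs $G, G' \in \mathcal{V}^{stub}_{l,m}$, viewed as matchings, and argue that $G$ can be driven toward $G'$ by strictly increasing the number of shared edges $|E(G) \cap E(G')|$ one swap at a time. If $G \neq G'$, pick an edge $\{a,b\}$ of $G'$ that is not present in $G$. In $G$ the stub $a$ is matched to some stub $c$ and the stub $b$ to some stub $d$; since $a$ and $b$ are not matched to each other in $G$, a short check shows that the four stubs $a,b,c,d$ are all distinct and that $\{a,c\}$ and $\{b,d\}$ are two distinct edges of $G$. Applying the double edge swap $(a,c),(b,d) \leadsto (a,b),(c,d)$ produces a valid graph in $\mathcal{V}^{stub}_{l,m}$, precisely because the loopy-multigraph space places no constraint on the resulting edges.

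The key step is to verify that this swap makes monotone progress toward $G'$. Because $\{a,b\}$ is an edge of the matching $G'$, neither $\{a,c\}$ nor $\{b,d\}$ can belong to $G'$, so the two edges removed from $G$ were not shared with $G'$; meanwhile the newly created edge $\{a,b\}$ is shared with $G'$ (and the incidentally created edge $\{c,d\}$ can only increase agreement further). Hence $|E(G) \cap E(G')|$ strictly increases with each swap, and after at most $M$ swaps we reach $G'$. This exhibits a path between any two graphs, proving connectivity and therefore strong connectivity.

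I expect the only real subtlety to be the bookkeeping in the middle step: confirming that the four stubs $a,b,c,d$ are genuinely distinct and that the two removed edges are disjoint from the target $G'$, so that each swap is simultaneously legal and strictly progress-making. This is exactly where the loopy-multigraph space serves as the easiest base case---there are no forbidden self-loops or multiedges that could obstruct an otherwise-desired swap, in contrast to the simple-graph setting, where establishing connectivity requires a more delicate argument to route around such obstructions.
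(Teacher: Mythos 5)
Your proof is correct, but it takes a genuinely different route from the paper's. The paper argues in two stages: first it shows that double edge swaps can realize arbitrary permutations of stub labels within a stub-isomorphism class (by swapping two labeled stubs at a vertex of degree at least two), and then it drops stub labels and runs a vertex-level induction, using the fact that degrees agree in $G_1$ and $G_2$ to find edges $(u,x),(v,y)\in E_1^*\setminus E_2^*$ whose swap creates a desired edge $(u,v)\in E_2^*\setminus E_1^*$, so that the difference $\epsilon_{1,2}$ strictly decreases. You instead identify stub-labeled loopy multigraphs with perfect matchings on the $2M$ labeled stubs and run a single monotone-progress argument directly in that picture: the target edge $\{a,b\}\in G'$ determines the partners $c,d$ of $a,b$ in $G$ automatically (no degree-counting needed, since every stub is matched), and the matching structure gives both the distinctness of $a,b,c,d$ and the fact that the removed edges $\{a,c\},\{b,d\}$ cannot lie in $G'$. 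Your approach buys simplicity and a cleaner target: because progress is measured by agreement between labeled matchings, you converge to the exact stub-labeled graph $G'$ in one pass, whereas the paper must handle stub relabelings as a separate preliminary step, and you also avoid the multiset bookkeeping implicit in the paper's $E_1^*\setminus E_2^*$. What the paper's two-stage, vertex-level formulation buys is reusability: its second stage is precisely what gets adapted in Lemma \ref{stub_multi_noloops_connected} to the loop-free multigraph space, where swaps must route around forbidden self-loops---a constrained setting in which, as you correctly note, your matching argument does not directly apply.
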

\begin{proof}
 First, we note that it is possible to permute stub labels using double edge swaps: for a graph $G_i \in \mathcal{V}^{stub}_{l,m}$ with vertex $u$ with degree at least $2$ (vertices with degree 1 have only a single possible stub labeling), a double edge swap  $(u_i,a_k),(b_\ell,u_j)\leadsto (u_i,b_\ell),(u_j,a_k)$ swaps two labeled stubs of $u$. Since double edge swaps allow for pairwise swaps of stubs, all possible stub-labelings within a given stub-isomorphism class of graphs are connected within $\mathcal{G}^{stub}_{l,m}$ (or any other stub-labeled space we discuss). The remainder of the proof therefore only requires showing that every stub-isomorphism class is connected to every other.
 
To complete the proof, we drop stub labels and will show how to construct a path from any $G_1=(V_1,E_1) \in \mathcal{G}^{stub}_{l,m}$ to any non-isomorphic $G_2=(V_2,E_2) \in \mathcal{G}^{stub}_{l,m}$ such that each step in the path creates and does not eliminate, edges in $E_2$. Let $\epsilon_{1,2} = |E^*_1 \setminus E^*_2|$, where the asterisks denote that the stub labels have been dropped from the edge sets. Since $\epsilon_{1,2}=0$ if and only if $G_1$ is isomorphic to $G_2$, it suffices to show that for any non-isomorphic graphs $G_1$ and $G_2$ there exists a neighbor of $G_1$, $G_3$, with $\epsilon_{3,2} \le \epsilon_{1,2} -1$. 

 Since $\epsilon_{1,2}>0$ there exists $(u,v)\in E^*_2\setminus E^*_1$. However, since the degrees of $u$ and $v$ are, respectively, the same in both $G_1$ and $G_2$, there must be edges $(u,x)$ and $(v,y)$ in $E^*_1\setminus E^*_2$. Performing the double edge swap $(u,x),(v,y)\leadsto(u,v),(x,y)$ creates a graph $G_3$ with edge $(u,v)$ and thus with $\epsilon_{3,2}\le \epsilon_{1,2}-1$. Since $\epsilon_{1,2}$ is finite, a repeated application of this argument eventually produces a path, and therefore $\mathcal{G}^{stub}_{l,m}$ is connected.
\end{proof}

\begin{lemma} 
 \label{stub_multi_aperiodic}
 $\mathcal{G}^{stub}_{l,m}$ is an aperiodic graph. 
\end{lemma}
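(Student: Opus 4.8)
The plan is to verify aperiodicity directly from the stated definition, by exhibiting cycles in $\mathcal{G}^{stub}_{l,m}$ whose lengths have greatest common divisor one. The tempting first move---locating a self-loop in the graph of graphs, which would immediately force the period to be $1$---turns out to be unavailable here, and recognizing this is the first conceptual hurdle. A self-loop would require a double edge swap that returns a stub-labeled graph to itself, but any two distinct edges involve four \emph{distinct} stubs (each stub belongs to exactly one edge), and a short case check shows that neither of the two swap orientations on such a pair can reproduce the original stub-labeled edge set. Hence $\mathcal{G}^{stub}_{l,m}$ has no self-loops, and I must instead produce cycles of coprime lengths.

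First I would record that $\mathcal{G}^{stub}_{l,m}$ contains closed walks of length $2$. This is immediate from the reciprocity already noted for this space: the swap $(u,v),(x,y)\leadsto(u,x),(v,y)$ is undone by the reciprocal swap $(u,x),(v,y)\leadsto(u,v),(x,y)$, so every directed edge $G_1\to G_2$ is matched by a directed edge $G_2\to G_1$, giving a $2$-cycle $G_1\to G_2\to G_1$.

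The core of the argument is to exhibit a closed walk of length $3$. Given any graph with $M\ge 2$ edges, I would fix two of its edges, on four distinct stubs $\{s_1,s_2\}$ and $\{s_3,s_4\}$, hold the rest of the graph fixed, and consider the three ways of pairing these four stubs: $G_a$ carrying $\{s_1,s_2\},\{s_3,s_4\}$, then $G_b$ carrying $\{s_1,s_3\},\{s_2,s_4\}$, and $G_c$ carrying $\{s_1,s_4\},\{s_2,s_3\}$. Because the space is that of loopy multigraphs, all three lie in $\mathcal{V}^{stub}_{l,m}$ regardless of whether any pairing creates a loop or a multiedge, and since the four stub labels are distinct the three pairings are genuinely distinct stub-labeled graphs. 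A single double edge swap carries $G_a\to G_b$, a second carries $G_b\to G_c$, and a third carries $G_c\to G_a$ (in each step one must simply select which of the two swap orientations produces the desired target), yielding a $3$-cycle $G_a\to G_b\to G_c\to G_a$.

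With both a $2$-cycle and a $3$-cycle present in $\mathcal{G}^{stub}_{l,m}$, the greatest common divisor of the lengths of all its cycles divides $\gcd(2,3)=1$, so the graph is aperiodic; the only degenerate case, $M\le 1$, leaves the chain with a single state and is trivial. I expect the main obstacle to be precisely the point that the usual self-loop shortcut is unavailable for a stub-labeled space, which is what forces the explicit coprime-cycle construction, together with the bookkeeping needed to confirm that each of the three transitions in the $3$-cycle is realized by a genuine double edge swap.
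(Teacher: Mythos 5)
Your proof is correct and follows essentially the same route as the paper's: a $2$-cycle from reciprocity of swaps, the $3$-cycle obtained by cycling through the three pairings of four distinct stubs, and $\gcd(2,3)=1$, with the single-edge case handled trivially. Your additional observation that $\mathcal{G}^{stub}_{l,m}$ has no self-loops (so the self-loop shortcut used for the other spaces is unavailable) is a correct and worthwhile remark, but it does not change the substance of the argument.
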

\begin{proof}
 If $G\in \mathcal{V}^{stub}_{l,m}$ has only a single edge, $\mathcal{G}^{stub}_{l,m}$ is trivially aperiodic since $|\mathcal{V}^{stub}_{l,m}|=1$. If $G$ has two edges $(u,v)$ and $(x,y)$ then $\mathcal{G}^{stub}_{l,m}$ contains both a cycle of length 2 (because all transitions are reciprocated) and also a cycle of length 3: $(u,v),(x,y) \leadsto (u,x),(v,y)$ followed by $(u,x),(y,v) \leadsto (u,y),(x,v)$ and $(u,y),(v,x) \leadsto (u,v),(x,y)$. The greatest common divisor of the cycle lengths $2$ and $3$ is $1$, and therefore $\mathcal{G}^{stub}_{l,m}$ is aperiodic. 
\end{proof}

The following theorem assembles the above properties to establish the desired uniformity of the MCMC sampler.

\begin{theorem}
 \label{stub_MCMC} 
 A random walk on $\mathcal{G}^{stub}_{l,m}$ is ergodic and has a uniform stationary distribution.
\end{theorem}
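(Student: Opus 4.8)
The plan is to assemble the three preceding lemmas through the standard theory of finite Markov chains. First I would pin down the random walk itself: at each step, from a graph $G_j$ one selects one of the $M(M-1)$ double edge swaps uniformly at random and follows the corresponding transition edge of $\mathcal{G}^{stub}_{l,m}$. This makes the transition matrix $P$ have entry $P_{ij}$ equal to $1/[M(M-1)]$ times the number of swaps carrying $G_i$ to $G_j$. By Lemma~\ref{stub_multi_regular}, every vertex of $\mathcal{G}^{stub}_{l,m}$ has both in-degree and out-degree exactly $M(M-1)$, so each row and each column of $P$ sums to $1$; that is, $P$ is doubly stochastic.

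Next I would extract the uniform stationary distribution from double stochasticity. Writing $\mathbf{1}$ for the all-ones vector indexed by $\mathcal{V}^{stub}_{l,m}$, double stochasticity is exactly the statement $\mathbf{1}^{\top} P = \mathbf{1}^{\top}$, so the uniform distribution $\pi = \mathbf{1}^{\top}/|\mathcal{V}^{stub}_{l,m}|$ satisfies $\pi P = \pi$ and is therefore \emph{a} stationary distribution of the chain.

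It then remains to promote this to \emph{the} stationary distribution and to establish convergence. By Lemma~\ref{stub_multi_connected} the chain is irreducible, and by Lemma~\ref{stub_multi_aperiodic} it is aperiodic; a finite Markov chain with both properties is ergodic, so the fundamental theorem of Markov chains (equivalently, Perron--Frobenius applied to $P$) guarantees that it possesses a \emph{unique} stationary distribution to which the law of the walk converges from every starting state. Combining uniqueness with the previous paragraph forces that unique stationary distribution to coincide with the uniform one, which is the claim.

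I do not anticipate a substantive obstacle here, since the three conditions have already been discharged as separate lemmas and the remaining work is bookkeeping within the theory of finite Markov chains. The only points that merit care are the two translations between the graph-of-graphs picture and the Markov-chain picture: verifying that regularity of $\mathcal{G}^{stub}_{l,m}$ yields a \emph{doubly} (not merely row-) stochastic $P$, and correctly invoking ergodicity to upgrade ``uniform is stationary'' to ``uniform is the unique stationary distribution.'' Self-loops in the graph of graphs (swaps that return the same stub-labeled graph) are harmless, as they neither disturb double stochasticity nor affect the already-established aperiodicity.
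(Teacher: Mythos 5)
Your proposal is correct and follows essentially the same route as the paper's own proof: assemble Lemma~\ref{stub_multi_regular} (regularity, giving a doubly stochastic transition matrix and hence a uniform stationary distribution) with Lemmas~\ref{stub_multi_connected} and~\ref{stub_multi_aperiodic} (irreducibility and aperiodicity, giving ergodicity and hence uniqueness). You simply spell out the bookkeeping---the $\mathbf{1}^{\top}P=\mathbf{1}^{\top}$ identity and the appeal to the fundamental theorem of finite Markov chains---that the paper leaves implicit in its two-sentence proof.
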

\begin{proof}
 Since $\mathcal{G}^{stub}_{l,m}$ is strongly connected (Lemma \ref{stub_multi_connected}) and aperiodic (Lemma \ref{stub_multi_aperiodic}) random walks on $\mathcal{G}^{stub}_{l,m}$ are ergodic. Since $\mathcal{G}^{stub}_{l,m}$ is also regular (Lemma \ref{stub_multi_regular}) it has the unique stationary distribution $\vec{\frac{1}{|\mathcal{V}^{stub}_{l,m}|}}$.
\end{proof}

Thus, we conclude that a Markov chain defined as a random walk on $\mathcal{G}^{stub}_{l,m}$ in fact samples from the uniform distribution of stub-labeled loopy multigraphs, as desired. A similar MCMC approach can sample the other graph spaces under analysis here, though the proofs are slightly more involved. 

\subsection{Markov chains on other stub-labeled graph spaces}

We now show that with some care it is possible to construct Markov chains defined over the other stub-labeled graph spaces we have discussed such that their stationary distributions are also uniform. We establish this uniformity by deriving state transitions that ensure the chains are regular, connected, and aperiodic. Our results here apply to spaces of either simple graphs or multigraphs with a given degree sequence. The space of loopy graphs (without multiedges) with a given degree sequence is {\it not} connected by double edge swaps for all degree sequences and so we do not discuss it here; see Section~\ref{sec:othersampling} for more details on that space.

\begin{definition}[Graph of multigraphs and graph of simple graphs, stub-labeled]
 For a degree sequence $k = \{k_i\}$, the \underline{graph of stub-labeled simple graphs} $\mathcal{G}^{stub}_{s}=\{\mathcal{V}^{stub}_{s},\mathcal{E}^{stub}_{s} \}$ is a directed graph of simple graphs. For distinct $G_i$ and $G_j$ in $\mathcal{V}^{stub}_{s}$, a directed edge $(G_i \to G_j)$ is in $\mathcal{E}^{stub}_{s}$ if and only if there exists a double edge swap that transforms $G_i$ into $G_j$; for any double edge swap that would transform $G_i$ to a graph $G_j$ that is not in $\mathcal{V}^{stub}_{s}$, there instead exists a directed self-loop $G_i \to G_i$. The \underline{graph of stub-labeled multigraphs} $\mathcal{G}^{stub}_{m}$ is defined similarly for multigraphs, with subscripts of $m$ where appropriate.
\end{definition}

A critical difference between the definitions of $\mathcal{G}^{stub}_{s}$ and $\mathcal{G}^{stub}_{m}$ compared with the earlier definition of $\mathcal{G}^{stub}_{l,m}$ is the inclusion of directed self-loops $G_i \to G_i$ for each swap that would leave the space. This modification essentially employs the ``swap and hold'' \cite{artzy2005generating} (also called ``trial swap'' \cite{miklos2004randomization}) method to ensure the graph of graphs is regular.\footnote{In spaces featuring graphs without self-loops, each graph will have exactly $\sum_{i\in V} \binom{k_i}{2} $ swaps that could create self-loops; thus regularity is preserved if swaps that create self-loops either resample the current graph or are all ignored as possible swaps. There is a computational benefit from ignoring self-loop-creating edge swaps (as opposed to resampling the current graph), but it is likely small for most degree sequences.}

Indeed, we will now show that $\mathcal{G}^{stub}_{s}$ and $\mathcal{G}^{stub}_{m}$ are both regular and both aperiodic. As a result, extending Theorem \ref{stub_MCMC} only requires space-specific proofs of connectivity, which we provide.

\begin{lemma} 
 $\mathcal{G}^{stub}_{s}$ and $\mathcal{G}^{stub}_{m}$ are regular graphs. \label{stub_regular} 
\end{lemma}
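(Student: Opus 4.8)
The plan is to show that every vertex of $\mathcal{G}^{stub}_{s}$ (and, by an identical argument, of $\mathcal{G}^{stub}_{m}$) has both total out-degree weight and total in-degree weight equal to $M(M-1)$, matching the count established for $\mathcal{G}^{stub}_{l,m}$ in Lemma~\ref{stub_multi_regular}. The whole argument rests on the ``swap and hold'' convention built into the definition of these graphs of graphs: every one of the $M(M-1)$ double edge swaps available at a graph $G_i$ contributes exactly one unit of out-degree weight, either as a genuine transition edge (when the swap lands on a distinct graph still in the space) or as a directed self-loop (when the swap would create a forbidden self-loop, or, in the simple case, a forbidden multiedge).

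First I would establish the out-degree. Exactly as in the loopy multigraph case, a graph with $M$ edges admits $\binom{M}{2}$ unordered pairs of edges and hence $2\binom{M}{2}=M(M-1)$ double edge swaps. Here one small observation is worth making explicit: in a stub-labeled space, swapping two distinct edges always yields a distinct stub-labeled edge set, so no swap is a ``do-nothing'' transition back to $G_i$. Each of the $M(M-1)$ swaps is therefore accounted for exactly once---either as an edge to a valid neighbor, or (by the definition of $\mathcal{G}^{stub}_{s}$ and $\mathcal{G}^{stub}_{m}$) as a self-loop. The total out-degree weight at every vertex is thus $M(M-1)$, independent of how many of its swaps happen to leave the space.

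The crux is the in-degree count, which I would handle using the reciprocal (involution) structure of double edge swaps already exploited in the loopy multigraph proof: the swap $(u,v),(x,y)\leadsto(u,x),(v,y)$ is undone by $(u,x),(v,y)\leadsto(u,v),(x,y)$. I would split the swaps at $G_i$ into those landing in-space and those landing out-of-space. For an in-space swap $G_i\to G_j$ with $G_j$ distinct and in the space, its reciprocal is a swap $G_j\to G_i$; since $G_i$ itself is in the space, this reciprocal is a genuine in-space transition and contributes a unit of in-degree weight to $G_i$. This matches genuine outgoing edges with genuine incoming edges one-for-one, so the genuine in-degree of $G_i$ equals its genuine out-degree. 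For an out-of-space swap, the convention replaces it by a self-loop at $G_i$, and a self-loop contributes one unit to \emph{both} the out-degree and the in-degree of $G_i$ simultaneously. Summing the two contributions yields in-degree weight $M(M-1)$ as well.

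I expect the main obstacle to be the careful bookkeeping in this in-degree argument---specifically, verifying that the reciprocal of every in-space swap is itself an in-space swap (so that genuine incoming edges are correctly matched to genuine outgoing ones), and that each self-loop introduced by the hold convention is counted once in the in-degree tally and once in the out-degree tally, rather than double-counted or dropped. Once these points are pinned down, regularity of both $\mathcal{G}^{stub}_{s}$ and $\mathcal{G}^{stub}_{m}$ is immediate, since every vertex then has common in- and out-degree weight $M(M-1)$.
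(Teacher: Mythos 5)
Your proposal is correct and follows essentially the same argument as the paper: count the $M(M-1)$ swaps to get out-degree, then use the reciprocity of in-space swaps together with the fact that each ``hold'' self-loop contributes to both in- and out-degree to conclude the in-degree is also $M(M-1)$. Your extra observation that in a stub-labeled space no swap of two distinct edges can return the same stub-labeled graph is a worthwhile clarification of the bookkeeping, but it does not change the route, which matches the paper's proof of Lemma~\ref{stub_regular}.
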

\begin{proof}
 As in Lemma \ref{stub_multi_regular}, a graph $G_i$ in either space has $\binom{M}{2}$ pairs of edges, which correspond with $M(M-1)$ possible double edge swaps. Notice that any possible swap from $G_i$ to another graph $G_j$ in the space is reciprocated, while any swap that would go to a graph outside of the space corresponds with an incoming self-loop as constructed in the definition of $\mathcal{G}^{stub}_{s}$ and $\mathcal{G}^{stub}_{m}$. Thus, any graph $G_i$ in either of these two spaces has in-degree and out-degree $M(M-1)$.
\end{proof}

\begin{lemma}
    $\mathcal{G}^{stub}_{s}$ and $\mathcal{G}^{stub}_{m}$ are aperiodic graphs.
    \label{stub_aperiodic} 
\end{lemma}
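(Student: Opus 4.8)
The plan is to exploit the ``swap-and-hold'' self-loops built into the definitions of $\mathcal{G}^{stub}_{s}$ and $\mathcal{G}^{stub}_{m}$: because a single directed self-loop $G_i \to G_i$ is a cycle of length $1$, and $\gcd(1,\ell)=1$ for every other cycle length $\ell$, it suffices to exhibit one vertex of the graph of graphs that carries such a self-loop. I would therefore split on the degree sequence.

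First, suppose some vertex $u$ has degree at least $2$. Then $G_i$ contains two distinct edges incident to $u$, with other endpoints $a$ and $b$ (where possibly $a=b$ in a multigraph), and the double edge swap that pairs the two $u$-stubs together, $(u,a),(u,b)\leadsto(u,u),(a,b)$, produces a self-loop at $u$. The resulting graph lies outside both $\mathcal{V}^{stub}_{s}$ and $\mathcal{V}^{stub}_{m}$, since neither space permits self-loops, so by the definitions of these graphs of graphs the attempted swap instead contributes a directed self-loop $G_i \to G_i$. Hence each of $\mathcal{G}^{stub}_{s}$ and $\mathcal{G}^{stub}_{m}$ contains a cycle of length $1$ and is aperiodic. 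This argument covers every degree sequence having a degree of at least $2$, which is the generic case.

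It remains to treat degree sequences in which every degree is $0$ or $1$; here each graph is a matching together with isolated vertices, the simple and multigraph spaces coincide, and no swap can create a self-loop or multiedge, so no self-loop appears in $\mathcal{G}$ and a different argument is needed. If the space contains at most one edge, then $|\mathcal{V}^{stub}_{s}|=1$ and the claim is trivial. Otherwise there exist two vertex-disjoint edges $(u,v)$ and $(x,y)$ on four distinct vertices, and I would reuse the construction of Lemma \ref{stub_multi_aperiodic} verbatim: the reciprocal pair of swaps gives a cycle of length $2$, while the three swaps $(u,v),(x,y)\leadsto(u,x),(v,y)$, then $(u,x),(y,v)\leadsto(u,y),(x,v)$, then $(u,y),(v,x)\leadsto(u,v),(x,y)$ give a cycle of length $3$. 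Because $u,v,x,y$ are distinct, every intermediate graph is again a matching and hence lies in both spaces, so these cycles genuinely live in $\mathcal{G}$; since $\gcd(2,3)=1$, aperiodicity follows.

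The only real subtlety is this degenerate all-degrees-$\leq 1$ regime, which is precisely where the definitional self-loops vanish and one must fall back to the explicit coprime-cycle argument of Lemma \ref{stub_multi_aperiodic}, together with the check that those cycles never leave the space. Everywhere else, a single self-loop-creating swap settles the matter immediately, so I expect the write-up to be short once the case split is made explicit.
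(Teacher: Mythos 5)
Your proof is correct and follows essentially the same route as the paper's: the generic case is handled by a self-loop in the graph of graphs (arising from a rejected self-loop-creating swap, which exists exactly when some vertex has degree at least $2$), and the degenerate case $\max_i k_i \le 1$ is reduced to the $2$-cycle/$3$-cycle construction of Lemma~\ref{stub_multi_aperiodic}. The only difference is explicitness: the paper phrases the case split as ``rejected swaps exist vs.\ none'' and simply notes that in the latter case the graph of graphs coincides with $\mathcal{G}^{stub}_{l,m}$, whereas you spell out the degree-sequence dichotomy and verify that the reused cycles never leave the space---details the paper leaves implicit.
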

\begin{proof}
    If there are any self-loops in the graph of graphs (where self-loops correspond to rejected swaps) and the graph of graphs is also connected then it is aperiodic. Meanwhile, if the graph of graphs does not have any rejected swaps (e.g.~when $\max_{i\in V} k_i<2$), then it has the exact same structure as $\mathcal{G}^{stub}_{l,m}$ and is thus aperiodic by Lemma \ref{stub_multi_aperiodic}.
\end{proof}

Before proving connectivity of the graph of graphs in the next lemma, we note that the proofs of Lemmas \ref{stub_regular} and \ref{stub_aperiodic} are easily and directly applied to any subspace of stub-labeled loopy multigraphs with fixed degree sequence (e.g., subspaces of graphs consisting of a single connected component, or subspaces with a constrained number of triangle motifs). However, despite the fact that regularity and aperiodicity are easy to establish for the graphs of graphs corresponding to such subspaces, proofs of their connectivity, if they are possible at all, require more complicated and subspace-specific constructions, and are considerably more involved. In fact, as noted above, for loopy graphs (without multiedges) connectivity does not hold for all degree sequences; see Section~\ref{sec:othersampling}. Below we establish the connectivity of $\mathcal{G}^{stub}_{m}$ and $\mathcal{G}^{stub}_s$ for any given degree sequence.

\begin{lemma}
 \label{stub_multi_noloops_connected}
 $\mathcal{G}^{stub}_{m}$ is a strongly connected graph.
\end{lemma}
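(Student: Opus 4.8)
The plan is to follow the template of the loopy-multigraph connectivity proof (Lemma~\ref{stub_multi_connected}) as closely as possible, isolating precisely the one place where the allowance of self-loops was silently used. As in that proof, I would first observe that stub-label permutations within a single stub-isomorphism class are realized by double edge swaps that swap two stubs of a common vertex $u$, and that such swaps stay inside $\mathcal{G}^{stub}_{m}$: they only ever produce edges of the form $(u,a),(u,b)$ with $a,b\neq u$, so no self-loop is created. This reduces the problem, exactly as before, to connecting distinct stub-isomorphism classes. I would then drop stub labels, work with the potential $\epsilon_{1,2}=|E^*_1\setminus E^*_2|$, and aim to show that for non-isomorphic multigraphs $G_1,G_2$ there is a neighbor $G_3$ of $G_1$ in $\mathcal{G}^{stub}_{m}$ with $\epsilon_{3,2}\le\epsilon_{1,2}-1$, reachable by a swap (or short sequence of swaps) that never leaves the multigraph space.

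For the generic reducing step I would reuse the argument of Lemma~\ref{stub_multi_connected} verbatim: pick $(u,v)\in E^*_2\setminus E^*_1$, and since the degrees of $u$ and $v$ agree in both graphs, find surplus edges $(u,x),(v,y)\in E^*_1\setminus E^*_2$, then apply $(u,x),(v,y)\leadsto(u,v),(x,y)$. Because multiedges are \emph{permitted} in $\mathcal{G}^{stub}_{m}$, the only thing that can go wrong relative to the loopy-multigraph case is the creation of a self-loop; here $(u,v)$ is never a loop since $u\neq v$, and $(x,y)$ is a loop exactly when $x=y$. Thus this swap is valid and strictly decreases $\epsilon$ whenever the surplus endpoints can be chosen with $x\neq y$.

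The sole obstruction—and the step I expect to be the main obstacle—is the case in which the only surplus neighbor of \emph{both} $u$ and $v$ is a single common vertex $x$, so that every admissible choice forces $x=y$ and the direct swap would create the forbidden self-loop $(x,x)$. My plan is to resolve this with a two-swap detour that routes the freed $x$-stub onto a \emph{deficit} neighbor $w$ of $x$ rather than onto an arbitrary vertex: since $x$ carries surplus at $(u,x),(v,x)$ and its degree is preserved, it must have some deficit edge $(x,w)$ with $w\notin\{u,v,x\}$, and using an auxiliary edge incident to $w$ one can perform $(v,x),(w,d)\leadsto(v,d),(x,w)$ followed by $(u,x),(v,d)\leadsto(u,v),(x,d)$. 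The net effect removes both surplus edges $(u,x),(v,x)$ while the two newly created $x$-edges are $(x,w)$ (a deficit position, adding no surplus) and at most one further edge $(x,d)$, giving $\epsilon_{3,2}\le\epsilon_{1,2}-1$; crucially, choosing $d\notin\{x,v\}$ keeps every intermediate graph loop-free.

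The delicate points I would need to discharge carefully are (i) that a suitable auxiliary edge exists with endpoints avoiding the forbidden set, and (ii) that the detour genuinely nets a strict decrease at every invocation. For (i) the key structural observation is that if \emph{no} edge of $G_1$ avoided $x$, then $G_1$ would be a star centered at $x$, forcing $k_x=M$; but a star realizing a fixed degree sequence is unique, so this would give $G_1=G_2$ and contradict non-isomorphism—hence a non-$x$ edge, and with a little more work an edge at $w$ with admissible far endpoint, must exist. I anticipate that fully nailing down (i) may require a small amount of sub-case analysis (e.g.\ when $w$'s only neighbors are among $\{x,v\}$, choosing a different deficit neighbor or a slightly longer detour), and this bookkeeping, rather than any single conceptual leap, is where the proof becomes genuinely more involved than Lemma~\ref{stub_multi_connected}. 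Once the $\epsilon$-reducing neighbor is established in all cases, a finite induction on $\epsilon_{1,2}$ yields a path from $G_1$ to $G_2$ entirely within $\mathcal{G}^{stub}_{m}$, proving strong connectivity.
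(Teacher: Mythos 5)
Your reduction to stub-isomorphism classes, your potential $\epsilon_{1,2}$, and your identification of the single obstruction (all surplus edges at $u$ and at $v$ forced to end at a common vertex $x$) all match the paper's proof of this lemma. But your resolution of that obstruction has a genuine gap at exactly the point you flag as delicate: the existence of an auxiliary edge $(w,d)\in E^*_1$ incident to the deficit neighbor $w$ with $d\notin\{x,v\}$. Neither of your proposed patches establishes it. The star argument has the wrong quantifier: it produces \emph{some} edge of $G_1$ avoiding $x$, not an edge incident to the specific vertex $w$ whose far endpoint avoids $\{x,v\}$. And ``choose a different deficit neighbor'' is unsupported, since you give no reason why any deficit neighbor should carry an admissible edge. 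As written, the $\epsilon$-reducing step is asserted, not proved.

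The missing idea is that the obstruction hypothesis must be invoked a second time, and it then hands you the auxiliary edge for free. The obstruction says precisely that \emph{every} surplus edge at $u$ and \emph{every} surplus edge at $v$ ends at $x$. Take any deficit neighbor $w$ of $x$, i.e.~$(x,w)\in E^*_2\setminus E^*_1$ (you correctly note $w\notin\{u,v,x\}$). Since $w$ has the same degree in both graphs, it carries a surplus edge $(w,d)\in E^*_1\setminus E^*_2$, and this $d$ automatically avoids everything: $d\neq x$ (else $(w,x)$ would be simultaneously surplus and deficit), $d\neq v$ and $d\neq u$ (else $v$ or $u$ would have a surplus edge ending at $w\neq x$, contradicting the obstruction), and $d\neq w$ (no loops). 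With that choice your two-swap detour is valid and strictly decreases $\epsilon$, so your route is completable---but this observation is the actual content of the obstruction case, and it is absent from your proposal. For comparison, the paper dispenses with the auxiliary edge entirely by performing a single swap on $G_2$'s side: pair the deficit edge $(x,z)\in E^*_2\setminus E^*_1$ with $(u,v)\in E^*_2$ and swap $(u,v),(x,z)\leadsto(u,x),(v,z)$ in $G_2$, which is loop-free since $z\notin\{u,v\}$ by the same surplus/deficit clash, and which creates $(u,x)\in E^*_1$, giving a neighbor $G_3$ of $G_2$ with $\epsilon_{1,3}\le\epsilon_{1,2}-1$; reciprocity of swaps in the multigraph space then yields strong connectivity with no sub-cases at all.
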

\begin{proof}
 The proof that $\mathcal{G}^{stub}_{l,m}$ is connected (Lemma \ref{stub_multi_connected}) can be adjusted very slightly for the absence of self-loops. In the proof of Lemma \ref{stub_multi_connected}, if the two edges being considered for a double edge swap share an endpoint vertex then rewiring $(u,x)$ and $(v,x)$ creates the desired edge $(u,v)$ but also the self-loop $(x,x)$, and thus is not a valid swap as it would not stay within the space of loop-free multigraphs. But since $x$ has two edges contained in $E_1\setminus E_2$ and $x$ has the same degree in both the graph $G_2$ and $G_1$, there must exist at least one edge $(x,z)\in E_2\setminus E_1$, where $z \ne u$, $z \ne v$. Rewiring $(u,v)$ and $(x,z)$ in $G_2$ produces a neighboring graph $G_3$ with edge $(u,x)$ and thus $\epsilon_{1,3} \le \epsilon_{1,2}-1$.
\end{proof}

\begin{lemma}
 \label{stub_simple_connected} 
 $\mathcal{G}^{stub}_s$ is a strongly connected graph.
\end{lemma}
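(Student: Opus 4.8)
The plan is to follow the same overall strategy as in Lemmas \ref{stub_multi_connected} and \ref{stub_multi_noloops_connected}. First I would reduce to connecting stub-isomorphism classes: the stub-label permutation argument of Lemma \ref{stub_multi_connected} applies verbatim, since the relabeling swap $(u_i,a_k),(b_\ell,u_j)\leadsto(u_i,b_\ell),(u_j,a_k)$ leaves the underlying adjacency matrix unchanged and is therefore automatically valid in the simple space. After dropping stub labels, it then suffices to exhibit, for any two non-isomorphic simple graphs $G_1$ and $G_2$ with the same degree sequence, a path that decreases the potential $\epsilon_{1,2}=|E_1^*\setminus E_2^*|$: as before, I would show that whenever $\epsilon_{1,2}>0$ there is a neighbor $G_3$ of $G_1$ in $\mathcal{G}^{stub}_s$ with $\epsilon_{3,2}\le\epsilon_{1,2}-1$, and finiteness of $\epsilon_{1,2}$ yields the path.

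I would set up the candidate swap exactly as in the earlier lemmas. Since $\epsilon_{1,2}>0$, pick $(u,v)\in E_2^*\setminus E_1^*$; because $u$ and $v$ have equal degrees in $G_1$ and $G_2$, there are edges $(u,x),(v,y)\in E_1^*\setminus E_2^*$, and the swap $(u,x),(v,y)\leadsto(u,v),(x,y)$ creates the desired edge. Relative to the loopy-multigraph and multigraph cases, staying in the simple space now imposes \emph{two} obstructions: the swap must not create the self-loop $(x,x)$ (so we need $x\ne y$) and must not create a multiedge (so we need $(x,y)\notin E_1^*$). If some choice of $x\in N_{G_1}(u)\setminus N_{G_2}(u)$ and $y\in N_{G_1}(v)\setminus N_{G_2}(v)$ avoids both, the swap is valid and $\epsilon$ strictly decreases. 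The self-loop obstruction taken alone is dispatched exactly as in Lemma \ref{stub_multi_noloops_connected}: if $x=y$ is forced, the shared vertex has two $E_1^*\setminus E_2^*$ edges and matching degree, hence an edge $(x,z)\in E_2^*\setminus E_1^*$ with $z\notin\{u,v\}$ to reroute through.

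The genuinely new and hardest case is the multiedge obstruction, especially the configuration in which the two threats conspire: $G_1$ contains the length-three path $(u,x),(x,y),(y,v)$ with $(u,v)\notin E_1^*$, and every single swap that would close it into $(u,v)$ either recreates the already-present edge $(x,y)$ or creates a self-loop. I expect this to be the main obstacle, because it cannot be resolved by looking only at $u,x,y,v$: any swap among their incident edges either rebuilds $(x,y)$ or merely relabels the same blocked path. The plan is to resolve it with a bounded detour that borrows structure from the rest of the graph. The key observation I would exploit is that, because $(u,x)\in E_1^*\setminus E_2^*$, vertex $x$ cannot have identical neighborhoods in $G_1$ and $G_2$, so there exists an edge $(x,z)\in E_2^*\setminus E_1^*$; I would use this edge to perform an auxiliary swap that removes the blocking edge $(x,y)$ (or otherwise breaks the path) \emph{without increasing} $\epsilon_{1,2}$, after which the reducing swap creating $(u,v)$ becomes valid.

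Finally I would confirm termination and validity. Each round uses only a constant number of swaps and strictly decreases the nonnegative integer $\epsilon_{1,2}$, so the process halts; combined with the within-class stub permutations, this connects every stub-labeled simple graph to every other and shows $\mathcal{G}^{stub}_s$ is strongly connected. The bookkeeping I would be most careful about is verifying that each auxiliary swap itself stays simple, so that the entire detour lies inside $\mathcal{G}^{stub}_s$, and that the auxiliary step provably does not raise $\epsilon_{1,2}$, which is precisely the delicate part of the argument.
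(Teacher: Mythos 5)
The paper itself does not prove Lemma~\ref{stub_simple_connected}: it explicitly defers to the several independent published proofs (Berge 1962; Eggleton--Holton and Taylor 1981; Bienstock--G\"unl\"uk 1994; Zhang 2010), precisely because extending the potential-decreasing argument of Lemmas~\ref{stub_multi_connected} and~\ref{stub_multi_noloops_connected} to simple graphs is where the real work lies. Your reduction to stub-isomorphism classes is correct (the stub-permutation swap preserves the adjacency matrix, hence stays simple), your setup of $\epsilon_{1,2}$ is correct, and you have correctly located the crux: the reducing swap $(u,x),(v,y)\leadsto(u,v),(x,y)$ can be blocked because $(x,y)\in E_1^*$. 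But your proposal stops exactly at that crux. The ``auxiliary swap that removes the blocking edge $(x,y)$ without increasing $\epsilon_{1,2}$'' is asserted to exist, not constructed, and its existence is the entire content of the lemma.

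To see that this is a genuine gap and not a routine verification: any auxiliary swap removing $(x,y)$ must itself be a double edge swap inside the simple space, i.e.\ it pairs $(x,y)$ with some $(z,w)\in E_1^*$ and creates $(x,z)$ and $(y,w)$, which requires $(y,w)\notin E_1^*$ --- exactly the same species of obstruction you are trying to eliminate. It fails, for instance, whenever every $G_1$-neighbor $w$ of $z$ lies in $N_{G_1}(y)\cup\{y\}$, so your procedure can be blocked again, and nothing in the proposal bounds this regress or proves it terminates. Relatedly, your claim that the self-loop case is dispatched ``exactly as in Lemma~\ref{stub_multi_noloops_connected}'' is not right in the simple space: that detour rewires $(u,v)$ and $(x,z)$ in $G_2$, creating $(v,z)$, and nothing guarantees $(v,z)\notin E_2^*$; the multiedge obstruction infects that case too, so the two obstructions cannot be treated as independent. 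The published proofs close this hole with a genuinely global device rather than local patching: either decompose the symmetric difference $E_1^*\triangle E_2^*$ into alternating circuits of $E_1^*\setminus E_2^*$ and $E_2^*\setminus E_1^*$ edges and induct on a \emph{shortest} such circuit (a blocking chord then either admits a valid swap or yields a strictly shorter alternating circuit, contradicting minimality), or show that every realization can be swapped into a fixed canonical (Havel--Hakimi) realization. Some argument of this kind is required; as written, your proof is incomplete at its decisive step.
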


We do not provide a proof here as this result has been proven independently many times: in 1962 \cite{berge1962theory}, stated without proof in 1973 \cite{eggleton1973graphic}, proved twice in the same monograph but by different authors in 1981 \cite{eggleton1981simple,taylor1981constrained}, in 1994 \cite{bienstock1994degree}, and most recently in 2010 \cite{zhang2010traversability}.

\begin{theorem}
\label{stub_MCMC_all} 
A random walk on $\mathcal{G}^{stub}_{m}$ or $\mathcal{G}^{stub}_{s}$ is ergodic and has a uniform stationary distribution.
\end{theorem}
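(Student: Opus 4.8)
The plan is to prove this theorem by the same assembly argument used for Theorem~\ref{stub_MCMC}, since the three structural ingredients have, by design, already been established for both spaces in the preceding lemmas. I would begin by recalling the standard fact from finite Markov chain theory that a chain on a finite state space that is both irreducible and aperiodic is ergodic, and therefore possesses a unique stationary distribution that governs its long-run behavior; and that if, in addition, the transition matrix is doubly stochastic, then the uniform distribution is a stationary distribution. Combining these two facts, a random walk whose graph of graphs is regular, strongly connected, and aperiodic has the uniform distribution as its \emph{unique} stationary distribution. The entire proof therefore reduces to invoking the three relevant lemmas for each of $\mathcal{G}^{stub}_{m}$ and $\mathcal{G}^{stub}_{s}$.

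For $\mathcal{G}^{stub}_{m}$, I would cite Lemma~\ref{stub_multi_noloops_connected} for strong connectivity and Lemma~\ref{stub_aperiodic} for aperiodicity to conclude that the random walk is ergodic, and then cite Lemma~\ref{stub_regular} for regularity to conclude that its unique stationary distribution is uniform over $\mathcal{V}^{stub}_{m}$. For $\mathcal{G}^{stub}_{s}$ the argument is identical, substituting Lemma~\ref{stub_simple_connected} for strong connectivity. It remains only to make explicit why regularity yields a doubly stochastic transition matrix: the walk selects one of the $M(M-1)$ candidate double edge swaps uniformly at random, so each outgoing edge of $G_i$ (including the directed self-loops that represent rejected swaps) carries probability $\tfrac{1}{M(M-1)}$; since Lemma~\ref{stub_regular} guarantees that every graph also has in-degree exactly $M(M-1)$, each column of the transition matrix sums to $M(M-1)\cdot \tfrac{1}{M(M-1)}=1$, giving double stochasticity and hence a uniform stationary vector $\vec{\tfrac{1}{|\mathcal{V}^{stub}_{m}|}}$ (respectively $\vec{\tfrac{1}{|\mathcal{V}^{stub}_{s}|}}$).

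There is no genuine obstacle remaining at the level of the theorem itself; all of the real difficulty has been absorbed into the connectivity lemmas, and in particular into Lemma~\ref{stub_simple_connected}, whose proof is deferred to the literature. The one point that warrants care---rather than difficulty---is the role of the directed self-loops introduced in the definitions of $\mathcal{G}^{stub}_{s}$ and $\mathcal{G}^{stub}_{m}$, which distinguish these spaces from $\mathcal{G}^{stub}_{l,m}$. I would emphasize that these self-loops serve two purposes already accounted for upstream: they preserve regularity, so that the swap-and-hold rejection mechanism does not disturb double stochasticity (as verified in Lemma~\ref{stub_regular}), and they make aperiodicity immediate whenever rejections occur (Lemma~\ref{stub_aperiodic}). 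Because these properties are inherited directly from the earlier lemmas, the proof is a short paragraph of citation and assembly, exactly paralleling Theorem~\ref{stub_MCMC}.
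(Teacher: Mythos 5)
Your proposal is correct and matches the paper's own proof, which likewise assembles Lemma~\ref{stub_regular} (regularity), Lemmas~\ref{stub_multi_noloops_connected} and~\ref{stub_simple_connected} (connectivity), and Lemma~\ref{stub_aperiodic} (aperiodicity) to conclude ergodicity with the uniform stationary distributions $\vec{\frac{1}{|\mathcal{V}^{stub}_{m}|}}$ and $\vec{\frac{1}{|\mathcal{V}^{stub}_{s}|}}$. Your additional remarks spelling out why regularity yields double stochasticity and the role of the self-loops are sound elaborations of details the paper leaves implicit, not a departure in approach.
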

\begin{proof}
Being regular (by Lemma \ref{stub_regular}), connected (by Lemmas \ref{stub_multi_noloops_connected} and \ref{stub_simple_connected}), and aperiodic (by Lemma \ref{stub_aperiodic}) graphs, random walks on $\mathcal{G}^{stub}_{m}$ and $\mathcal{G}^{stub}_{s}$ are ergodic and have the unique stationary distributions $\vec{\frac{1}{|\mathcal{V}^{stub}_{m}|}}$ and $\vec{\frac{1}{|\mathcal{V}^{stub}_{s}|}}$ respectively.
\end{proof}

We conclude this subsection on sampling stub-labeled graph spaces with pseudocode for a uniform sampling algorithm. The important distinction between this algorithm and most incorrect algorithms (see Section~\ref{sec:directsample} for a further discussion of sampling algorithms known to be non-uniform) is that incorrect algorithms have a tendency to overlook the resampling step.\footnote{Implementations in Python are available at \href{https://github.com/joelnish/double-edge-swap-mcmc}{https://github.com/joelnish/double-edge-swap-mcmc}}

\subsection{Markov chains on vertex-labeled spaces}

For any analysis of simple graph null models, sampling from the vertex-labeled space is equivalent to sampling from the stub-labeled space: the two distributions are proportional within stub-isomorphism classes (see Section \ref{sec:enumeration} for details on this conversion). For non-simple graphs, the vertex-labeled and stub-labeled spaces are no longer cleanly proportional, but we show it is possible to adapt the double edge swap MCMC procedures to uniformly sample vertex-labeled graph spaces. We begin with the following definition, closely related to the double edge swap defined for stub-labeled spaces.

\begin{definition}[Double edge swap, vertex-labeled]
 A  \underline{vertex-labeled double edge swap} replaces pair of edges $(u,v)$ and $(x,y)$ with edges $(u,x)$ and $(v,y)$. 
\end{definition}

As in the stub-labeled setting, the vertex-labeled double edge swap leads to a Markov chain on the graph of vertex-labeled graphs, which we generically denote with $\mathcal{G}^{vert}$ (in contrast with $\mathcal{G}^{stub}$). In any graph space, stub-labeled double edge swaps map onto vertex-labeled double edge swaps simply by ignoring the stub-labeling: a vertex-labeled graph of graphs $\mathcal{G}^{vert}$ can be created by treating stub-isomorphic graphs within $\mathcal{G}^{stub}$ as a single graph in $\mathcal{G}^{vert}$. This construction of $\mathcal{G}^{vert}$ gives definitions for $\mathcal{G}^{vert}_{l,m}$, $\mathcal{G}^{vert}_{m}$, and $\mathcal{G}^{vert}_{s}$ as agglomerated, weighted, and directed, versions of the stub-labeled graphs of graphs $\mathcal{G}^{stub}_{l,m}$, $\mathcal{G}^{stub}_{m}$, and $\mathcal{G}^{stub}_{s}$, respectively. As a result, they immediately inherit the strong connectivity and aperiodicity properties of their respective stub-labeled spaces, as follows.

\begin{lemma}
 \label{vertex_conn} 
 $\mathcal{G}^{vert}_{l,m}$, $\mathcal{G}^{vert}_{m}$, and $\mathcal{G}^{vert}_{s}$ are strongly connected\footnote{Additionally, the graph space which allows multiedges and single self-loops is connected under edge swaps, while the graph space which allows only single edges, but potentially multi-self-loops, is disconnected unde edge swaps for some degree sequences\cite{nishimura2017swap}.}.
\end{lemma}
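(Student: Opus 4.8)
The plan is to exploit the fact, noted just before the lemma statement, that each $\mathcal{G}^{vert}$ is obtained from the corresponding $\mathcal{G}^{stub}$ by agglomerating every stub-isomorphism class into a single vertex. Formally, I would define the projection map $\pi$ that sends each stub-labeled graph $G \in \mathcal{V}^{stub}$ to its stub-isomorphism class, equivalently to the vertex-labeled graph (adjacency matrix) obtained by deleting the stub labels. This map is surjective onto $\mathcal{V}^{vert}$, since every vertex-labeled graph with the prescribed degree sequence admits at least one stub-labeling. The key structural observation is that $\pi$ is edge-respecting: a stub-labeled double edge swap, once the stub labels are ignored, becomes exactly a vertex-labeled double edge swap, so each directed edge $G_1 \to G_2$ of $\mathcal{G}^{stub}$ is carried to either a directed edge $\pi(G_1) \to \pi(G_2)$ of $\mathcal{G}^{vert}$ (when the swap changes the adjacency matrix) or to a self-loop at $\pi(G_1)$ (when the swap merely permutes stubs within a single stub-isomorphism class).

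With this machinery in place, strong connectivity lifts routinely. To connect arbitrary $H_1, H_2 \in \mathcal{V}^{vert}$, I would choose any preimages $G_1 \in \pi^{-1}(H_1)$ and $G_2 \in \pi^{-1}(H_2)$. By the already-established strong connectivity of the relevant stub-labeled space (Lemma \ref{stub_multi_connected} for $\mathcal{G}^{stub}_{l,m}$, Lemma \ref{stub_multi_noloops_connected} for $\mathcal{G}^{stub}_{m}$, and Lemma \ref{stub_simple_connected} for $\mathcal{G}^{stub}_{s}$), there exists a directed path $G_1 = F_0 \to F_1 \to \cdots \to F_n = G_2$. Applying $\pi$ term-by-term yields a sequence $\pi(F_0), \ldots, \pi(F_n)$ in which consecutive entries are either equal or joined by a directed edge of $\mathcal{G}^{vert}$; deleting the repeated entries collapses this into a genuine directed walk from $H_1$ to $H_2$. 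Since $H_1$ and $H_2$ were arbitrary, each of $\mathcal{G}^{vert}_{l,m}$, $\mathcal{G}^{vert}_{m}$, and $\mathcal{G}^{vert}_{s}$ is strongly connected.

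I expect no serious obstacle here: the substantive work was already carried out in proving connectivity of the three stub-labeled spaces, and what remains is the verification that $\pi$ is a well-defined, surjective, edge-respecting projection. The one point requiring care is direction-preservation---confirming that $\pi$ carries each directed swap-edge to a directed edge (or self-loop) in the \emph{same} orientation---so that a directed path downstairs projects to a directed walk upstairs rather than merely an undirected one. This is immediate from the definition of the vertex-labeled swap together with the agglomeration construction of $\mathcal{G}^{vert}$, but because it is the hinge of the argument it is worth stating explicitly rather than leaving implicit in the phrase ``immediately inherit.''
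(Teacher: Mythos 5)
Your proof is correct and follows essentially the same route as the paper's: the paper argues that each $\mathcal{G}^{vert}$ arises from the corresponding $\mathcal{G}^{stub}$ by repeatedly combining all stub-labelings of each vertex-labeled graph into a single vertex, and that such combining preserves strong connectivity. Your projection map $\pi$, its edge-respecting and direction-preserving properties, and the path-projection argument are a careful formalization of exactly that agglomeration step.
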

\begin{proof}
 Each of the vertex-labeled graph of graphs can be created by repeatedly combining vertices from the analogous stub-labeled graph of graphs until all stub-permutations of the same vertex-labeled graph have been combined together. Since iteratively combining vertices preserves connectivity, $\mathcal{G}^{vert}_{l,m}$, $\mathcal{G}^{vert}_{m}$, and $\mathcal{G}^{vert}_{s}$ inherit strong connectivity from $\mathcal{G}^{stub}_{l,m}$, $\mathcal{G}^{vert}_{m}$, and $\mathcal{G}^{stub}_{s}$.
\end{proof}

\begin{lemma}
 \label{vertex_aperiodic_s} 
 $\mathcal{G}^{vert}_{s}$, $\mathcal{G}^{vert}_{m}$, and $\mathcal{G}^{vert}_{l,m}$ are aperiodic graphs.
\end{lemma}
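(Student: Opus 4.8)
The plan is to reduce aperiodicity to the existence of a single self-loop. Since each of $\mathcal{G}^{vert}_{l,m}$, $\mathcal{G}^{vert}_{m}$, and $\mathcal{G}^{vert}_{s}$ is already strongly connected (Lemma \ref{vertex_conn}), and since a strongly connected directed graph that possesses even one self-loop has cycle lengths with greatest common divisor $1$ (a self-loop being a cycle of length $1$), it suffices to exhibit one self-loop in each graph of graphs, or to dispatch the degenerate degree sequences for which no self-loop can exist. First I would catalogue the two mechanisms that produce self-loops in a vertex-labeled graph of graphs: (i) \emph{rejected swaps}, where a double edge swap would leave the space and is held in place, exactly the swap-and-hold mechanism already used in Lemma \ref{stub_aperiodic}; and (ii) \emph{stub-permuting swaps}, where a double edge swap carries a vertex-labeled graph to a stub-isomorphic graph, which upon agglomeration of stub-isomorphism classes (the construction of $\mathcal{G}^{vert}$ from $\mathcal{G}^{stub}$) collapses to a self-loop.

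For $\mathcal{G}^{vert}_{m}$ and $\mathcal{G}^{vert}_{s}$, mechanism (i) already suffices and the argument mirrors Lemma \ref{stub_aperiodic}. Whenever $\max_{i\in V} k_i \ge 2$, some vertex $u$ carries at least two edges, and the swap $(u,x),(u,y)\leadsto(u,u),(x,y)$ would create a self-loop and thus leave both the multigraph and the simple-graph spaces; this rejected swap is a self-loop in the graph of graphs, so strong connectivity yields aperiodicity. When $\max_{i\in V} k_i<2$, no self-loops or multiedges are realizable, all three spaces coincide structurally with $\mathcal{G}^{stub}_{l,m}$, and aperiodicity is inherited from Lemma \ref{stub_multi_aperiodic}.

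The delicate case is $\mathcal{G}^{vert}_{l,m}$, which has no rejected swaps, so I must rely entirely on mechanism (ii). Here I would reuse the stub-permuting swap from the proof of Lemma \ref{stub_multi_connected}: for a vertex $u$ of degree at least $2$ whose stubs $u_i,u_j$ lie in two \emph{distinct} edges $(u_i,a)$ and $(b,u_j)$, the swap $(u_i,a),(b,u_j)\leadsto(u_i,b),(u_j,a)$ leaves the adjacency matrix unchanged while genuinely relabeling stubs, and hence is a self-loop in $\mathcal{G}^{vert}_{l,m}$. The plan is then to argue that some graph in the space realizes such a configuration whenever $\max_{i\in V} k_i\ge2$ and $M\ge2$ --- for instance by choosing a graph in which $u$ has one edge to another vertex together with at least one further incident edge, or, if $u$'s stubs happen to be forced into self-loops, by swapping between two of those self-loops (which for $\deg u \ge 4$ again preserves the adjacency matrix while relabeling stubs). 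The residual degenerate sequences, namely $M\le1$ (a single-vertex graph of graphs, trivially aperiodic) and $\max_{i\in V} k_i<2$ (handled as above), are then covered.

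I expect the main obstacle to be precisely this last step for $\mathcal{G}^{vert}_{l,m}$: because it lacks swap-and-hold self-loops, I cannot simply quote the stub-labeled argument, and I must verify \emph{uniformly} across all nontrivial degree sequences that a stub-permuting self-loop exists, carefully ruling out pathological sequences (e.g.\ a single isolated vertex, or vertices whose entire degree is consumed by a single self-loop) in which every high-degree vertex appears ``locked.'' Establishing the existence of the right two-distinct-edge (or two-self-loop) configuration in \emph{some} graph of the space, rather than in an arbitrary given graph, is the crux; once it is in hand, strong connectivity closes the argument for all three spaces.
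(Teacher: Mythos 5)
Your proposal is correct and takes essentially the same route as the paper's proof: both rest on the observation that a double edge swap between two graphs in the same stub-isomorphism class collapses, under the agglomeration that defines $\mathcal{G}^{vert}$ from $\mathcal{G}^{stub}$, into a self-loop of the graph of graphs, with rejected (swap-and-hold) swaps supplying such self-loops in $\mathcal{G}^{vert}_{m}$ and $\mathcal{G}^{vert}_{s}$ as in Lemma \ref{stub_aperiodic}, and stub-permuting swaps supplying them in $\mathcal{G}^{vert}_{l,m}$. The paper phrases this as a terse dichotomy---either the proofs of Lemmas \ref{stub_multi_aperiodic} and \ref{stub_aperiodic} transfer directly, or the swaps they rely on are intra-class and hence already self-loops---whereas you make the witness construction explicit; your added case analysis for $\mathcal{G}^{vert}_{l,m}$ is sound detail that the paper leaves implicit.
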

\begin{proof}
 For any fixed degree sequence, the proofs of Lemmas \ref{stub_multi_aperiodic} and \ref{stub_aperiodic} either apply directly, and thereby establish aperiodicity, or the proofs of Lemmas \ref{stub_multi_aperiodic} and \ref{stub_aperiodic} do not apply because they necessitated double edge swaps between two graphs in the same stub-isomorphism class. However, even in this case, the double edge swap between graphs in the same stub-isomorphism class implies there is a self-loop in the graph of graphs, and the graph of graphs is thus aperiodic.
\end{proof}

\begin{algorithm}[H]
 \caption{stub-labeled MCMC \label{alg_stub} }
\begin{algorithmic}
\REQUIRE {initial graph $G_0$, graph space (simple, multigraph, or loopy multigraph)}
\ENSURE {sequence of graphs $G_i$}
 \FOR {$i<$ number of graphs to sample} 
 \STATE choose two edges at random
 \STATE randomly choose one of the two possible swaps
 \IF {edge swap would leave graph space} 
\STATE resample current graph: $G_{i} \leftarrow G_{i-1}$
\ELSE
\STATE swap the chosen edges, producing $G_{i}$
\ENDIF
\ENDFOR
\end{algorithmic}
\end{algorithm}

\begin{figure}
\centering
	\includegraphics[width=0.9\linewidth]{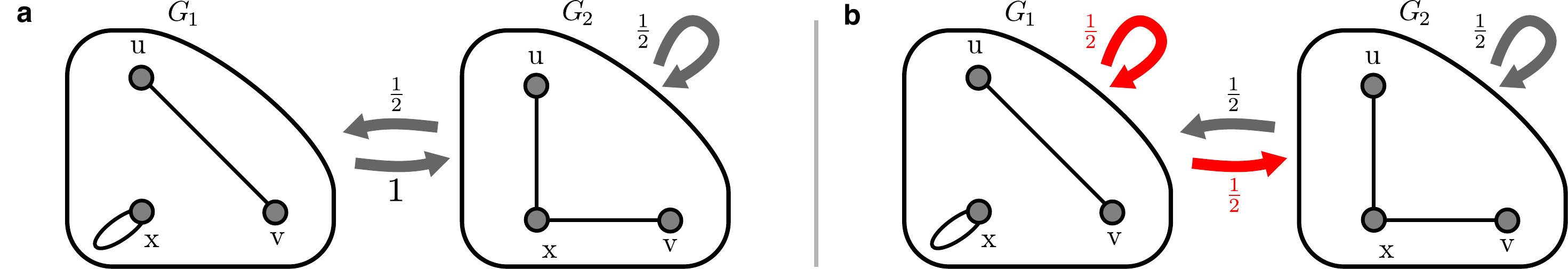} 
	\vspace{-3mm}
	\caption{{\bf Transition probabilities for uniform sampling.} The graph of vertex-labeled loopy multigraphs $\mathcal{G}^{vert}_{l,m}( \{2,1,1 \})$ contains two possible graphs $G_1$ and $G_2$. (a) A random walk on this graph of graphs has $Pr(G_1\to G_2) = 1$ but $Pr(G_2\to G_1)=\frac{1}{2}$ and therefore its corresponding Markov chain will not have a uniform stationary distribution since the graph of graphs is not regular. (b) If transition probabilities are modified such that each graph has equal in-degree weight and out-degree weight (i.e.~transition mass), and these weights are equal to each other, the corresponding Markov chain will have a uniform stationary distribution and will therefore sample each graph with equal probability. }
	\label{vertex_example}
\end{figure}

While connectivity and aperiodicity of vertex-labeled graphs of graphs follow directly from the properties of the stub-labeled spaces, regularity is more complicated. The analysis of stub-labeled graphs of graphs relied on the fact that each swap had a unique reciprocal swap. This reciprocity is not present in vertex-labeled graphs of graphs. For example, consider $\mathcal{G}^{vert}_{l,m}$ on a degree sequence as simple as $\{2,1,1 \}$. As shown in Figure \ref{vertex_example}(a), the graph of graphs $\mathcal{G}^{vert}_{l,m}( \{2,1,1 \})$ contains only two possible graphs: $G_1$ (with self-loop $(x,x)$ and edge $(u,v)$) and $G_2$ (with two adjacent edges $(u,x)$ and $(v,x)$). Every swap originating in $G_1$ creates $G_2$ (both swaps of $(x,x)$ and $(u,v)$ create $(u,x)$ and $(v,x)$), but only one of the two possible swaps originating in $G_2$ reaches $G_1$ ($(u,x),(v,x)\leadsto(u,v),(x,x)$ corresponds to $G_2 \to G_1 $ while $(u,x),(x,v)\leadsto(u,x),(x,v)$ corresponds to $G_2 \to G_2$).
If unaltered, a random walk on $\mathcal{G}^{vert}_{l,m}( \{2,1,1 \})$ has the non-uniform stationary distribution $(Pr(G_1) = \frac{1}{3}, Pr(G_2) =\frac{2}{3})$. 
Restoring the regularity of $\mathcal{G}^{vert}_{l,m}( \{2,1,1 \})$, as in Figure \ref{vertex_example}(b), is achieved by rejecting the swap $G_1\to G_2$ with probability $\frac{1}{2}$ and instead looping back to $G_1$. Figure \ref{vertex_example} shows a difficulty arising from self-loops; vertex-labeled swaps of multiedges suffer a similar problem with a similar resolution.
As we will show, an extra layer of rejection sampling suffices to restore the uniform stationary distribution for any vertex-labeled graph. 

There are two natural ways to implement rejection sampling for vertex-labeled graphs, which we provide in Algorithm~\ref{alg_vertex_basic} and in the supplemental material, Algorithm~\ref{alg_vertex_complex}. The simpler of the two approaches, Algorithm~\ref{alg_vertex_basic}, employs a rejection sampling that modifies all swaps $G_i \to G_j$, $i\ne j$, to have probability $\frac{1}{M(M-1)}$. The following lemma demonstrates that Algorithm \ref{alg_vertex_basic} achieves this uniform probability on all possible swaps.

\begin{lemma}
 \label{vertex_regular_s} 
 A Markov chain defined by a random walk on $\mathcal{G}^{vert}_{l,m}$, $\mathcal{G}^{vert}_{m}$, or $\mathcal{G}^{vert}_{s}$ with transition probabilities given by Algorithm \ref{alg_vertex_basic} has a doubly stochastic transition matrix.
\end{lemma}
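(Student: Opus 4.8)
The plan is to show that the transition matrix $P$ produced by Algorithm~\ref{alg_vertex_basic} is \emph{symmetric}, and then invoke the elementary fact that a symmetric stochastic matrix is automatically doubly stochastic. Since each row of $P$ already sums to $1$ (it is a valid transition matrix with nonnegative entries), symmetry forces each column sum $\sum_i P_{ij} = \sum_i P_{ji}$ to equal the corresponding row sum, hence to equal $1$. The work therefore reduces to verifying that $P_{ij} = P_{ji}$ for every pair of distinct graphs $G_i, G_j$; the diagonal entries are symmetric trivially, so only the off-diagonal entries matter.

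To control the off-diagonal entries I would first record the defining property of Algorithm~\ref{alg_vertex_basic}: it proposes one of the $M(M-1)$ double edge swaps out of the current graph uniformly at random and then applies a rejection step so that, for any vertex-labeled graph $G_j \ne G_i$ reachable from $G_i$, the net transition probability is exactly $P_{ij} = \frac{1}{M(M-1)}$. Concretely, if $n_{ij}$ denotes the number of the $M(M-1)$ swaps out of $G_i$ that realize the vertex-labeled transition to $G_j$, then a uniform proposal reaches the stub-isomorphism class of $G_j$ with probability $\frac{n_{ij}}{M(M-1)}$, and accepting with probability $\frac{1}{n_{ij}}$ collapses this to $\frac{1}{M(M-1)}$. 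All rejected proposals, together with swaps that return to $G_i$ or that would exit the graph space (in the $\mathcal{G}^{vert}_{m}$ and $\mathcal{G}^{vert}_{s}$ cases), are absorbed into the self-loop $P_{ii}$. I would also record the feasibility check that $P_{ii} \ge 0$: since $n_{ij} \ge 1$ for each distinct neighbor and $\sum_{j \ne i} n_{ij} \le M(M-1)$, the number of distinct out-neighbors of $G_i$ is at most $M(M-1)$, so $\sum_{j \ne i} P_{ij} \le 1$ and the leftover mass legitimately loops back.

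The second ingredient is that the vertex-labeled neighbor relation is symmetric: $G_j$ is reachable from $G_i$ by a single vertex-labeled double edge swap if and only if $G_i$ is reachable from $G_j$. This is exactly the reciprocity already used for the stub-labeled spaces---the swap $(u,v),(x,y)\leadsto(u,x),(v,y)$ is undone by $(u,x),(v,y)\leadsto(u,v),(x,y)$---and it descends to the vertex level after forgetting stub labels. In the multigraph and simple-graph spaces the same statement holds because a swap between two in-space graphs is reversed by an in-space swap, while out-of-space proposals contribute only to self-loops. Consequently, for distinct $G_i, G_j$ either both are neighbors, giving $P_{ij} = P_{ji} = \frac{1}{M(M-1)}$, or neither is reachable from the other, giving $P_{ij} = P_{ji} = 0$. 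This is precisely the self-loop phenomenon of Figure~\ref{vertex_example}, where the multiplicities $n_{12}=2$ and $n_{21}=1$ differ yet the equalized probabilities $P_{12}=P_{21}=\tfrac{1}{2}$ agree.

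Combining the two ingredients, $P$ is symmetric, hence doubly stochastic, and the argument applies verbatim to all three spaces $\mathcal{G}^{vert}_{l,m}$, $\mathcal{G}^{vert}_{m}$, and $\mathcal{G}^{vert}_{s}$. I expect the main obstacle to be the bookkeeping in the second paragraph: carefully justifying that the acceptance probability in Algorithm~\ref{alg_vertex_basic} equals $\frac{1}{n_{ij}}$ for every target, i.e.~that the algorithm correctly reads off the swap multiplicity $n_{ij}$ from the local multiedge and self-loop structure at $G_i$, handling the self-loop case and its multiedge analogue uniformly. Once that equalization is established, the symmetry argument and the passage to double stochasticity are immediate.
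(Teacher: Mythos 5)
Your overall strategy is essentially the paper's own: show that every vertex-labeled swap out of any graph carries net probability exactly $\frac{1}{M(M-1)}$ after rejection, absorb all rejected and trivial proposals into the diagonal, and conclude double stochasticity. Your explicit symmetry packaging---reciprocity of vertex-labeled swaps, plus the fact that a transition between distinct graphs $G_i \ne G_j$ is realized by a unique swap (the removed and added edge pairs are forced to be the multiset differences $E_i \setminus E_j$ and $E_j \setminus E_i$), so that $P_{ij}=P_{ji}\in\bigl\{0,\tfrac{1}{M(M-1)}\bigr\}$ and a symmetric stochastic matrix is doubly stochastic---is a clean way of making explicit an implication the paper leaves implicit in its final sentence.

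However, the proposal has a genuine gap: the step you defer as ``bookkeeping'' is not an afterthought, it \emph{is} the proof. The lemma is a statement about the specific rejection probabilities hard-coded in Algorithm~\ref{alg_vertex_basic}, and proving it requires verifying, case by case, that those probabilities equal $\frac{1}{n_{ij}}$; this is precisely the five-row table that constitutes the paper's argument. The verification is not routine, because the multiplicities $n_{ij}$ arise from two different mechanisms that must be matched by two different rejection rules: (i) multiedge copies, giving $w_{e_1}w_{e_2}$ stub-level realizations, cancelled by the factor $\frac{1}{w_{e_1}w_{e_2}}$ (and, when $e_1$ and $e_2$ are parallel copies of the same edge, only $\binom{w_{e_1}}{2}$ unordered choices with a single non-trivial re-pairing, cancelled by $\frac{2}{w_{e_1}(w_{e_1}-1)}$); and (ii) the coincidence that when one chosen edge is a self-loop, \emph{both} re-pairings of the pair produce the same graph, doubling $n_{ij}$, which is what the extra factor of $\frac12$ in the algorithm compensates. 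Your own example from Figure~\ref{vertex_example} ($n_{12}=2$ versus $n_{21}=1$) is exactly an instance of mechanism (ii), so the claim that the algorithm ``correctly reads off $n_{ij}$ from the local multiedge and self-loop structure'' cannot be taken for granted---a plausible-looking algorithm omitting the $\frac12$ factor would satisfy every other sentence of your argument and yet fail the lemma. You would also need the small supporting observation that all stub-level proposals realizing a fixed transition $G_i \to G_j$ involve copies of the same vertex-labeled edges, so they share a common acceptance probability; otherwise ``acceptance $=\frac{1}{n_{ij}}$'' is not even well posed. Until these checks are carried out, the proposal is a correct reduction of the lemma, not a proof of it.
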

\begin{proof}
Algorithm \ref{alg_vertex_basic} randomly selects two edges $e_1$ and $e_2$ and also selects one of the two possible ways to swap $e_1$ and $e_2$. The goal is to make all swaps equally probable. If $e_1$ or $e_2$ is a self-loop then the potential swap is rejected with probability $\frac{1}{2}$. If not rejected, then if both edges connect the same vertices (i.e.~$e_1 = e_2$), the swap is made with probability $\frac{2}{w_{e_1} (w_{e_1}-1) }$, where $w_{e_1}$ is the multiplicity of edge $e_1$, and otherwise the swap is made with probability $\frac{1}{w_{e_1}w_{e_2}}$. If no swap is made or the proposed swap would not change the graph (e.g.~$(u,v)(v,v)\leadsto(u,v)(v,v)$) the current graph is resampled by the chain. To see that these rejection probabilities give all swaps an equal overall probability of success, consider the following table of double edge swaps cases, which presents the form of each possible swap, the number of such possible swaps, and the acceptance probabilities used by Algorithm \ref{alg_vertex_basic}. 

\vspace{2mm}
 \small
 \begin{center} 
 \begin{tabular}{c|c |c |c|c}
 \hline
 $e_1,e_2 \leadsto e_3,e_4$ & \# possible
 & \multicolumn{3}{c}{$Pr($perform swap$)$}  \\
& 
 stub-labeled swaps & if $e_1$ or $e_2$ is a self-loop & if $e_1=e_2$ & if $e_1\not=e_2$ \\
 \hline
 $(u,v),(x,y)\leadsto(u,x),(v,y)$ & $w_{uv} w_{xy}$ & -- & -- &$1/(w_{uv} w_{xy})$\\
 $(u,x),(x,v)\leadsto(u,v),(x,x)$ & $w_{ux} w_{xv}$ & -- & -- &$1/(w_{ux} w_{xv})$\\
 $(x,x),(u,v)\leadsto(u,x),(x,v)$ & $2w_{xx} w_{uv}$ & $1/2$ & -- &$1/(w_{xx} w_{uv})$\\
 $(u,u),(x,x)\leadsto(u,x),(u,x)$ & $2w_{uu} w_{xx}$ & $1/2$ & -- &$1/(w_{uu} w_{xx})$\\
 $(u,x),(u,x)\leadsto(u,u),(x,x)$ & $ \binom{w_{ux}}{2}$ & -- & $1/\binom{w_{ux}}{2}$ & --\\
 \hline
 \end{tabular}
 \end{center}
\normalsize
\vspace{2mm}

On a pair of edges containing a self-loop, both swaps result in the same edges post-swap, giving a factor of $2$ to the number of possible swaps of that type. Notice also that multiplying the factors in a given row results in the same overall transition mass, 1, for each row. Thus, every swap is equally likely with probability $ \frac{1}{M(M-1)}$ and the transition matrix is doubly stochastic.
\end{proof}

As a direct result of Lemma \ref{vertex_regular_s}, the sum of edge weights directed to any graph in $\mathcal{G}$ with these transition probabilities equals one. Algorithm \ref{alg_vertex_basic} can be understood as changing general double edge swap stub-labeled spaces into double edge swap vertex-labeled spaces for any subspace of loopy multigraphs with a fixed degree sequence. Assembling Lemmas \ref{vertex_conn}, \ref{vertex_aperiodic_s} and \ref{vertex_regular_s} gives the following theorem.

\begin{figure}
\centering
	\includegraphics[width=0.85\linewidth]{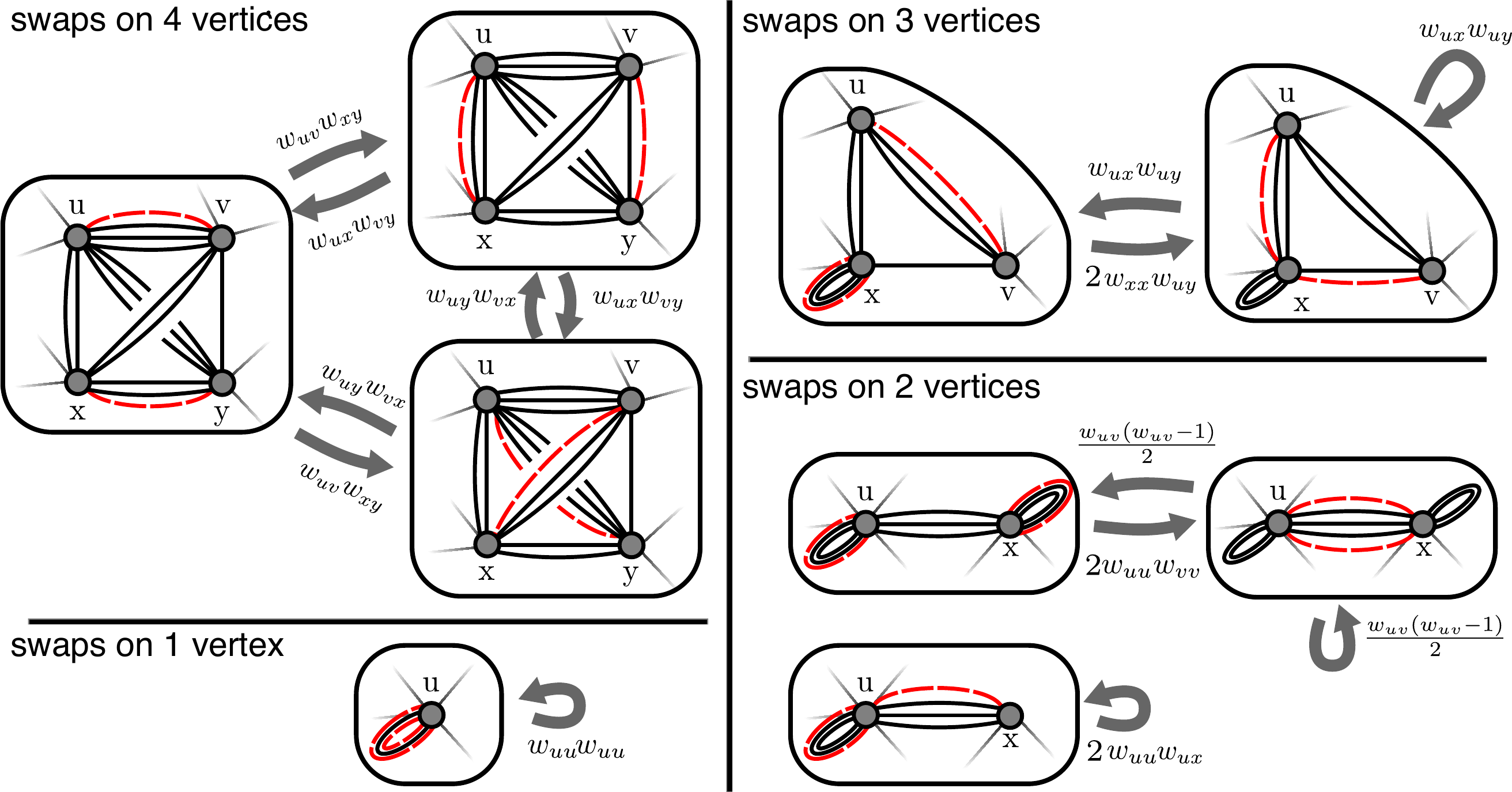}
	\vspace{-3mm}
	\caption{{\bf Possible double edge swaps for Algorithms \ref{alg_vertex_basic} and \ref{alg_vertex_complex}.}  Algorithms \ref{alg_vertex_basic} and \ref{alg_vertex_complex} can be understood by examining the probabilities of each double edge swap that is possible for a given graph. The diagram's labeled transitions give the number of possible double edge swaps that transition between each graph, organized by the number of unique vertices involved, where $w_{ij}$ are the edge multiplicities in the originating graph. Uniform sampling of a graph space can be achieved by down-sampling transitions to be equal in both directions.}
	\label{EdgeSwapCases}
\end{figure}

\begin{theorem}
\label{vertex_MCMC_all} 
A Markov chain on $\mathcal{G}^{vert}_{l,m}$, $\mathcal{G}^{vert}_{m}$, or $\mathcal{G}^{vert}_{s}$ with transition probabilities given by Algorithm \ref{alg_vertex_basic} is ergodic and has a uniform stationary distribution.
\end{theorem}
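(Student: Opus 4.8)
The plan is to assemble the three standard ingredients---regularity, connectivity, and aperiodicity---exactly as in the proofs of Theorems \ref{stub_MCMC} and \ref{stub_MCMC_all}, drawing each from the lemma that established it for the vertex-labeled spaces. Since all the substantive work has already been done in the preceding lemmas, the theorem is essentially a routine assembly; the only point requiring care is the logical order in which the three properties combine to force uniformity.

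First I would invoke strong connectivity (Lemma \ref{vertex_conn}) and aperiodicity (Lemma \ref{vertex_aperiodic_s}) to conclude that a random walk on each of $\mathcal{G}^{vert}_{l,m}$, $\mathcal{G}^{vert}_{m}$, and $\mathcal{G}^{vert}_{s}$ is ergodic. For a finite Markov chain, irreducibility together with aperiodicity guarantees the existence of a \emph{unique} stationary distribution that governs the long-run behavior of the chain; at this stage the distribution is pinned down as unique but is not yet identified.

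Next I would identify that unique distribution as the uniform one using Lemma \ref{vertex_regular_s}, which shows that under the transition probabilities of Algorithm \ref{alg_vertex_basic} the transition matrix $P$ is doubly stochastic. For a doubly stochastic $P$ on a state space of size $N$, the uniform vector $\pi$ with $\pi_i = 1/N$ satisfies $(\pi P)_j = \sum_i \pi_i P_{ij} = \frac{1}{N}\sum_i P_{ij} = \frac{1}{N} = \pi_j$, since each column of $P$ sums to one. Hence the uniform distribution is a stationary distribution, and combining this with the uniqueness supplied by ergodicity, it must be \emph{the} stationary distribution for each of the three spaces.

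The main conceptual obstacle lies not in this theorem but in the machinery it relies on. Unlike the stub-labeled spaces of Theorems \ref{stub_MCMC} and \ref{stub_MCMC_all}, the vertex-labeled graphs of graphs lack the automatic reciprocity of swaps that made regularity immediate (cf.~the example in Figure \ref{vertex_example}), so double stochasticity had to be restored by the rejection-sampling construction of Algorithm \ref{alg_vertex_basic} and verified row-by-row in Lemma \ref{vertex_regular_s}. Once that lemma is in hand, the present theorem follows by exactly the same three-step template used for the stub-labeled spaces, and I would state the conclusion with the explicit uniform stationary distributions $\vec{\frac{1}{|\mathcal{V}^{vert}_{l,m}|}}$, $\vec{\frac{1}{|\mathcal{V}^{vert}_{m}|}}$, and $\vec{\frac{1}{|\mathcal{V}^{vert}_{s}|}}$, respectively.
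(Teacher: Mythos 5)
Your proposal is correct and follows essentially the same route as the paper's own proof: it assembles Lemma \ref{vertex_regular_s} (double stochasticity, hence uniform stationarity), Lemma \ref{vertex_conn} (strong connectivity), and Lemma \ref{vertex_aperiodic_s} (aperiodicity) in exactly the way the paper does. The only difference is that you spell out the column-sum argument for why a doubly stochastic transition matrix has the uniform distribution as a stationary distribution, which the paper leaves implicit.
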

\begin{proof}
Lemma \ref{vertex_regular_s} gives that $\vec{\frac{1}{|\mathcal{V}_{l,m}|} }$, $\vec{\frac{1}{|\mathcal{V}_{m}|} }$, and $\vec{\frac{1}{|\mathcal{V}_{s}|} }$ are the respective stationary distributions; strong connectivity (Lemma \ref{vertex_conn}) and aperiodicity (Lemma \ref{vertex_aperiodic_s}) give that the Markov chain is ergodic.
\end{proof}

We conclude this subsection on sampling vertex-labeled graph spaces with pseudocode for the uniform sampling algorithm, Algorithm \ref{alg_vertex_basic}, used in the above proofs. A more efficient but more complicated approach is given in Algorithm \ref{alg_vertex_complex} in the supplemental material. This more efficient algorithm achieves regularity by computing both the forward and reverse probabilities of any given double edge swap according to the cases in Figure \ref{EdgeSwapCases}. It then down-samples (rejects) the higher probability swap to have the same probability as the lower probability swap. For example, in Algorithm~\ref{alg_vertex_complex} a double edge swap of the edges $(u,v)$ and $(x,y)$ (on distinct vertices $u,v,x,y$) to form $(u,y)$ and $(x,v)$ is accepted with probability $\mathrm{min}(1, \frac{ w_{uy} w_{xv} }{ w_{uv}w_{xy} })$, whereas Algorithm~\ref{alg_vertex_basic} accepts this swap with probability $\frac{1}{w_{uv}w_{xy}}$. While Algorithm~\ref{alg_vertex_complex} requires calculating these forward and reverse probabilities for each swap, we observe empirically that it mixes substantially faster on degree sequences with higher degrees. 

\begin{algorithm}[t]
\caption{vertex-labeled MCMC}\label{alg_vertex_basic}
\begin{algorithmic}
\REQUIRE initial graph $G_0$, graph space (simple, multigraph, or loopy multigraph)
\ENSURE sequence of graphs $G_i$
\FOR {$i<$ number of graphs to sample}
\STATE {choose two distinct edges $e_1$ and $e_2$ uniformly at random}
\STATE {randomly choose one of the two possible swaps}
\IF{edge swap would leave the graph space }
    \STATE resample current graph: $G_{i} \leftarrow G_{i-1}$
\ELSE
    \STATE {$P\leftarrow 1$}
\IF {$e_1$ and $e_2$ are copies of the same multi-edge}
    \STATE {$P \leftarrow \frac{2P}{w_{e_1} (w_{e_1}-1) }$}
\ELSE
    \STATE {$P \leftarrow \frac{P}{w_{e_1}w_{e_2}}$}
\ENDIF
\IF { $e_1$ or $e_2$ are a self-loop}
    \STATE {$P\leftarrow \frac{1}{2}P$}
\ENDIF
\IF {$Unif(0,1)<P$}
    \STATE {swap chosen edges to produce $G_{i}$}
\ELSE
    \STATE {resample current graph: $G_{i} \leftarrow G_{i-1}$}
\ENDIF
\ENDIF
\ENDFOR
\end{algorithmic}
\end{algorithm}

\subsection{Mixing times}
\label{sec:mixingtime}

As discussed in the previous section, a MCMC sampler based on double edge swaps will eventually sample from $\mathcal{G}^{stub}_{l,m}$, $\mathcal{G}^{stub}_{m}$, $\mathcal{G}^{stub}_{s}$, $\mathcal{G}^{vert}_{l,m}$, $\mathcal{G}^{vert}_m$ and $\mathcal{G}^{vert}_s$ uniformly. A natural question, and one of practical importance, is how many swaps it takes before a sample from the Markov chain is negligibly correlated with the starting graph. This question is usually studied in the language of {\it mixing time}, the number of steps in a Markov chain required to produce a sample a prescribed distance from the stationary distribution of the chain \cite{levin2009markov}. A Markov chain on a graph space is said to be {\it rapidly mixing} if the mixing time can be expressed as a polynomial in the number of vertices. Empirical investigations tend to support the notion that the mixing times of edge swap MCMC samplers tend to be reasonable and not prohibitive \cite{milo2003uniform,newman2003mixing}. 
Theoretical investigations have identified various conditions on the degree sequence $k$ which rigorously support these observations \cite{cooper2007sampling,greenhill2015switch}. However, the case of general $k$ is yet to be fully understood.

As first demonstrated in \cite{sinclair1992improved}, the most common argument to derive mixing time bounds uses a multicommodity flow argument, and the most common focus has been on regular simple graphs and regular directed graphs. Thus far, rapid mixing has been proved for double edge swap MCMC methods on simple graphs with regular degree sequences \cite{cooper2007sampling}, regular directed graphs \cite{greenhill2011polynomial}, and half-regular and almost half-regular bipartite graphs \cite{miklos2013towards,erdHos2015approximate}. Beyond regular graphs, there are bounds based on the minimum and maximum degrees, which give polynomial mixing in time $O(k_{max}^{14}M^9(M\log(M) - \log{\epsilon}))$ if $3\le k_{max} \le \frac{1}{4}\sqrt{M} $ \cite{greenhill2015switch}. A loosely related set of investigations shows that while the shortest paths in $\mathcal{G}^{vert}_s$ can be approximated to within a factor of $7/4$, finding the shortest path is NP-hard \cite{bienstock1994degree,erdos2013swap}.

Mixing time results for non-simple graphs are, by comparison, poorly developed. While stub- and vertex-labeled spaces have different transition probabilities and different structures, recall that vertex-labeled graphs of graphs can be created by repeatedly merging vertices in the corresponding stub-labeled graph of graphs. As a result, the total diameter of a vertex-labeled graph of graphs $\mathcal{G}^{vert}$ is necessarily always smaller than the corresponding stub-labeled graph of graphs $\mathcal{G}^{stub}$, but the additional layer of rejection sampling in vertex-labeled MCMC chains may lead mixing times to be large for degree sequences where multiedges and self-loops are more common. Determining the conditions, if any exist, in which the smaller diameter of vertex-labeled spaces corresponds to faster mixing times is an interesting open question. 

In practice, there are well-accepted diagnostics to numerically assess the quality of MCMC mixing \cite{gelman2014bayesian}. One popular method is to compare the variance inside a sequence to variance across multiple sequences, while other methods analyze the correlation inside a sequence. These diagnostics are typically performed on a sequence of graph statistics, rather than directly on a sequence of graphs. One complicating factor of using inter-sequence variation to assess convergence is the difficulty in finding independent starting graphs with which to start the chain \cite{Brooks98someissues}. Ultimately, when considering the potential effect of mixing times, it is important to gauge the risk of a slow mixing time (and thus a biased sampler), against errors associated with uniformly sampling from an inappropriate space, as is often the case with stub-matching.

\section{Other sampling methods and other null models}
\label{sec:othersampling}

Edge swap Markov chains are not the only means of sampling from configuration models, nor are configuration models the most appropriate random graph null model for all analyses. In this section we briefly review other techniques for sampling configuration models, as well as other random graph null models that have been usefully employed in other contexts. Very little is known about the adaptation of the methods in this section to vertex-labeled graph spaces, but such adaptations are discussed when known. 

\subsection{Direct sampling and other sampling methods}
\label{sec:directsample}

Edge swap Markov chain methods work by randomly manipulating an initial graph to produce a new graph, with the idea being that the stationary distribution of this random process is designed to be uniform over the graph space. In contrast, ``direct'' methods sample the same space by constructing one graph at a time without any dependence on previous samples. Sampling uniformly from graph spaces is closely related to enumerating the number of graphs in a given space, a task commonly known as graph enumeration \cite{bayati2010sequential} (see Section \ref{sec:enumeration} for more on these connections).

The stub-matching procedure pioneered by Bollob\'as \cite{bollobas1980probabilistic}, also called the pairing model and discussed in Section~\ref{sec:stub}, is an example of a direct method for sampling the space of loopy multigraphs with a given degree sequence. Stub matching begins with a prescribed number of half edges or stubs attached to each vertex in an otherwise empty graph and then randomly joins pairs of unmatched stubs to form a graph. The graph created by this procedure is a uniform sample from the space of stub-labeled loopy multigraphs. 

For more restricted graph spaces, i.e.~those that omit self-loops and/or multiedges, stub matching must be adapted. Early work on directly sampling simple graphs with specified degree seqeuences focused on regular graphs \cite{mckay1990uniform}, with later results giving approximately uniform sampling for more general degree sequences \cite{bayati2010sequential}. The simplest adaptation of stub matching for restricted graph spaces, e.g.~for simple graphs, is to use {\it rejection sampling}: complete a stub-matching procedure, and if the resulting graph is not in the graph space, reject the sample. This process is repeated until a simple an admissible graph is returned. Using rejection sampling, an unrejected graph is a proper uniform sample from the graph space. Unfortunately, rejection sampling for simple graphs can take exponential time---exponential in the size of the graph---for some degree sequences with degrees that increase in the size of the graph. In contrast to rejection sampling, a more efficient approach is to apply {\it sequential importance sampling} \cite{blitzstein2011sequential}, where edges are possibly rejected during the construction process (rather than waiting until the end to reject the output graph). The basic idea behind sequential importance sampling is to guide the matching process by rejecting edges that push the stub-matching process toward overrepresented simple graphs. Interestingly, a sequential importance sampling technique whereby each edge is rejected with a probability $\frac{k_i k_j}{4M}$ is sufficient to approximately sample uniformly for graph spaces where the max degree $k_{\text{max}}$ obeys $k_{\text{max}} = O(M^{1/4-\tau})$ for some $\tau>0$ \cite{bayati2010sequential}, but this asymptotic statement does not furnish any clear guarantees for an empirical graph of a fixed size. 

Other modifications to stub matching exist, usually posed in the context of creating simple graphs, and each with a mix of desirable and undesirable properties. One approach freely matches stubs, which may create a self-loop or multiedge, but such an edge is immediately removed via a double edge swap \cite{leskovec2016snap}. In contrast to rejection or importance sampling, this loop and multiedge rewiring approach ensures that a graph from the desired space is produced by each full run of the algorithm, which may dramatically improve the rate at which samples are produced. However, it unfortunately biases the sampling in ways that are not yet described or understood. Other methods knowingly generate biased simple graphs via constrained stub-matching, and each sample's relative probability is calculated in order to perform {\it a posteriori} bias corrections that reweight the samples to guarantee uniformity \cite{del2010efficient}. Again, there do not yet exist bounds on the convergence of such methods to the uniform distribution desired. More exotic direct sampling procedures include the so-called {\it Go with the Winners} algorithm \cite{aldous1994go} applied to graph generation \cite{milo2003uniform}. This method employs stub-matching on a collection of graphs in parallel, replacing failed attempts to create simple graphs with cloned copies of non-failed attempts, eventually producing a set of admissible graphs. Finally, it is possible to define an alternative Markov chain based on {\it perfect matchings} to uniformly sample regular simple graphs \cite{jerrum1990fast}; this method can be adapted to non-regular degree sequences but without efficiency guarantees.

Constructive procedures for determining whether a given degree sequence is {\it graphical} (that there exists a simple graph with the given degree sequence \cite{gallai1960graphs}), notably the Havel-Hakimi algorithm \cite{havel1955remark,hakimi1962realizability}, are highly non-uniform direct sampling procedures. The Havel-Hakimi algorithm is useful as a starting point for MCMC methods in contexts where one starts with a degree sequence but no corresponding simple graph---Havel-Hakimi is guaranteed to efficiently produce a simple graph, which one can then use as the initial state of a MCMC method.

\subsection{Markov chains for sampling other spaces}
\label{sec:othermarkov}

Markov chains other than ``double edge swap'' chains can be used to traverse other graph spaces with specified degree sequences, notably spaces of {\it connected graphs}, spaces of {\it loopy graphs} (without multiedges), and spaces of {\it directed graphs}.

{\it Loopy graphs (without multiedges).}
Sampling methods based on the double edge swap Markov chain discussed in Section \ref{sec:sampling} are unfortunately not sufficient for sampling uniformly from the space of loopy graphs with a specified degree sequence. The main challenge to sampling is that for certain degree sequences the double edge swap Markov chain does not connect the entire space of loopy graphs. For example, the degree sequence $\{2,2,2\}$ in the space of loopy graphs admits both a triangle graph and a graph consisting of $3$ self-loops, but on both graphs it is easy to see that any proposed double edge swaps would create a multiedge. Thus the two graphs in the space are not connected by any sequence of double edge swaps that remain in the space of loopy graphs, and this lack of connectivity applies to both the stub- and vertex-labeled spaces. Generalizing this observation, it is the case that the space of loopy graphs is connected for any degree sequence that can wire a simple graph and is neither the degree sequence of a path, $\{2,2,...,2\}$, nor that of a clique, $\{n-1,n-1,...,n-1 \}$ \cite{nishimura2016sampling}. Alternatively, if the Markov chain is modified to occasionally employ a three-edge {\it triangle-loop swap} (the swap $(u,u),(v,v),(w,w)\leadsto(u,v),(v,w),(w,u)$ and the reciprocal swap $(u,v),(v,w),(w,u)\leadsto(u,u),(v,v),(w,w)$), a basic modification of Algorithm \ref{alg_stub} and Algorithm \ref{alg_vertex_basic} suffices to sample uniformly from these spaces; see \cite{nishimura2016sampling} for more details.

{\it Connected graphs.} Many real-world graphs are connected, either by design (e.g.~the architecture of the Internet \cite{maslov2004detection}) or by virtue of how they were measured (using snowball sampling \cite{goodman1961snowball} or other traversal techniques). It is known that double edge swaps can rewire any connected graph to any other with the same degree sequence \cite{taylor1981constrained,bienstock1994degree}. Therefore, if one correctly rejects swaps that would leave the space of connected graphs then Theorems \ref{stub_MCMC_all} and \ref{vertex_MCMC_all} would apply. Thus, we can conclude that there exists a double edge swap MCMC sampler of connected graphs whose stationary distribution is the uniform distribution over connected graphs with a prescribed degree sequence. However, there is no computationally expedient way to certify connectivity\footnote{Checking connectivity can be done in $O(\sqrt{|V|})$ time with each change to the graph \cite{eppstein1997sparsification}.} of the resulting graph for a proposed swap. A useful heuristic solution is to only check connectivity after completing a longer sequence of swaps \cite{gkantsidis2003markov,viger2005efficient}. 

\begin{figure}
	\centering
	\includegraphics[width=0.95\linewidth]{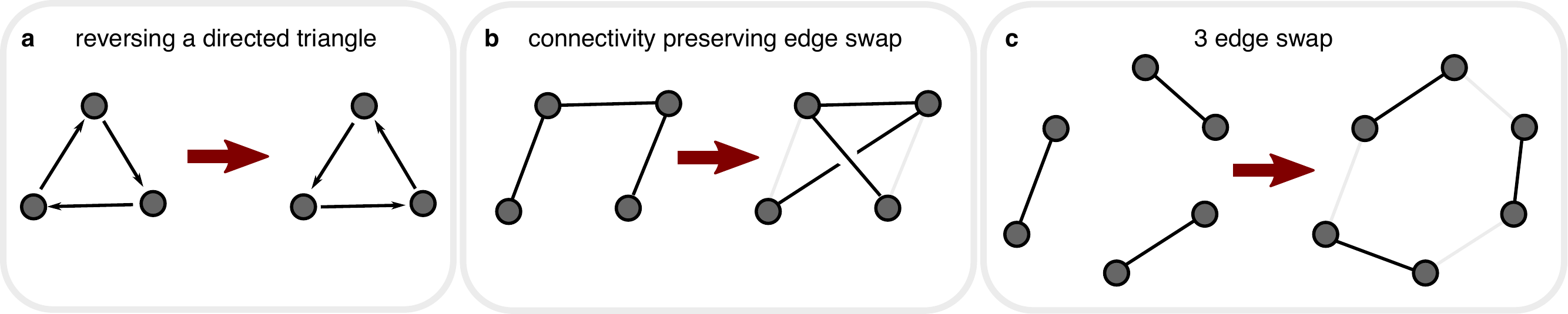}
	\vspace{-3mm}
	\caption{{\bf Other varieties of edge swaps.} In addition to the double-edge swap, a number of other edge swaps have been used to construct MCMC samplers of specific spaces. These auxiliary swap routines are necessary to ensure the underlying graph of graphs associated with the chain is connected.  (a) Markov chains on directed graph spaces require an additional swap that reverses a directed triangle. (b) The $k$-Flipper swap for $k=3$, swapping the endpoints of length-3 paths, preserves connectivity spaces of connected graphs. (c) Swaps involving more than $2$ edges enable sampling graph spaces with more complicated constraints.}
	\label{fig:otherSwaps}
\end{figure}

A more expedient approach for sampling connected simple graphs with a given degree sequence follows from a Markov chain defined by a different swap: a {\it $k$-Flipper} Markov chain in a given graph space selects length-$k$ paths uniformly at random (typically employed with $k=3$, see Figure \ref{fig:otherSwaps}(b)) and swaps the endpoints of the path \cite{mahlmann2005peer}. This swap clearly results in a graph that has the same connectivity before and after the swap. What is less clear is that a chain utilizing this swap does not necessarily explore the full space of connected graphs with a specified degree sequence; a chain occasionally utilizing a small additional swap (dubbed the {\it bowtie swap}) is required to ensure that the graph of graphs is connected, and thus samples the entire space of connected graphs \cite{feder2006local}. This chain has a uniform stationary distribution, and some mixing time results are known under mild assumptions \cite{feder2006local}. Of note, $k$-Flipper techniques cannot be extended (in any obvious way) to graph spaces that allow self-loops as a $k$-Flipper swap is unable to ever create a self-loop. Studies of the space of connected graphs have been focused on simple graphs, and it is an open question to understand what role the choice of stub-labeling vs.~vertex-labeling has in studies of connected multigraphs.

{\it Directed graphs.} Sampling directed graphs using edge swap Markov chains introduces new subtleties that are not present when sampling undirected graphs. 
Most importantly, a directed graph has two separate degree sequences, the in-degree sequence and out-degree sequence, and one may wish to fix either or both of these. The two sequences are coupled because the sum of the graph's in-degree must equal the sum of its out-degrees. Furthermore, in order for a graph of directed graphs to be connected under edge swaps, a {\it directed triangle reversal} swap is needed, Figure \ref{fig:otherSwaps}(a), which reverses the direction of a three-edge cycle \cite{kleitman1973algorithms,rao1996markov,lamar2011directed}. Sampling both stub-labeled and vertex-labeled directed graphs builds on a similar theoretical framework as undirected graphs \cite{carstens2015Proof,carstens2016curveball,carstens2016switching}.

Broadly speaking, as richer network models are considered the sensible value of uniform distributions as statistical null models decreases. Developing an appropriate null model for richer networks, where directed graphs are one example, requires carefully considering and modeling a hypothesized generative processes. For example, a directed version of a citation network should roughly obey causality constraints (cycles would indicate past papers citing future papers), and the statistical properties of such a network might be best captured by comparing it to the output of generative model that explicitly accounts for publication date.

\subsection{Distributions over graphs with edge weights}

In applications, graphs often have scalar weights associated with their edges. In some special cases, these weights are integers and can be interpreted as the number of edges between vertices. The graph then is, in fact, a multigraph and the techniques discussed thus far may be applied directly. However in all other cases, where the weights do not have a natural edge multiplicity interpretation, specifying a null model becomes substantially more difficult. In particular, a decision must be made regarding whether the null model should preserve just vertex degrees, or both vertex degrees and vertex total weight (the sum of the edge weights associated with a vertex). Even in the former, simpler case, a null model that preserves vertex degrees must choose carefully how to additionally randomize the edge weights.

To see the difficultly of this problem, consider any double-edge swap process where at least two edge weights are distinct. The original weights could be assigned at random to the pair of rewired edges, corresponding to a null model in which edge existence and edge weight are entirely independent, but this would not preserve the total weight associated with the involved vertices. On the other hand, edge existence and edge weights could be chosen to be coupled in some way, but that requires actively placing assumptions on the nature of the relationships.  In general, devising a procedure that preserves vertices' degrees and their total weights while randomizing the edges and weights is an open problem.

\subsection{Other distributions over graph spaces} 

Lastly, we note other varieties of distributions over graph spaces that are sometimes employed as null models. Most of these models depart from configuration models in that the constraint to an exact degree sequence $k=\{k_i\}_{i\in V}$ is relaxed. Often these models exhibit specified well-studied degree sequences in expectation.

The random graph model most closely related to configuration models is the {\it Chung-Lu model} \cite{chung2002average}. Rather than being specified by a fixed degree sequence, the Chung-Lu model is parametrized by a sequence of expected degrees, and for most well-behaved degree sequences the model correctly samples graphs with these expected degrees. 

In the context of producing simple graphs, one can also generate a graph via stub-matching and then remove all self-loops and/or multiedges that have been generated, a procedure called the {\it erased configuration model} or {\it Molloy-Reed configuration model} \cite{molloy1995critical,newman2001random}. Deleting an edge necessarily changes the degree sequence, and thus this technique will not sample only graphs with the specified degree sequence. For sufficiently bounded degree sequences, it has been shown that asymptotically there will be only $O(1)$ such deletions in large graphs \cite{molloy1995critical}. Thus, when the degree sequence lacks large degrees and for applications robust to a small number of edge deletions, the erased configuration model may provide a suitable approximation to the uniform distribution over simple graphs.

A separate and significant literature on random graph null models studies ensembles of graphs that are the result of random growth processes. The {\it Price model} \cite{de1965networks,price1976general}, also known as the preferential attachment model \cite{barabasi1999emergence}, generates random graphs with heavy-tailed degree sequences (though many other generative processes also generate such degree sequences \cite{mitzenmacher2004brief,clauset2009power}). Graphs generated by the Price model have very different structural properties than graphs generated by configuration models with the same expected degree sequence: asymptotically almost surely, graphs generated by the Price model are somewhere dense, while for the corresponding degree sequences, graphs generated by the erased configuration model (or Chung-Lu model) are nowhere dense (and in fact have bounded expansion, a stronger property) with high probability \cite{demaine2014structural}. In other words, graphs that are common under one model are extremely rare under the other, and vice versa. Other network growth processes include uniform growth \cite{callaway2001randomly}, again resulting in graphs with properties different from graphs grown under the Price model. For empirical graphs that may have resulted from a growth process, comparing the properties of the graph to the properties of an ensemble of random graphs generated from a growth model may be appropriate. 

Many null models other than configuration models are sampled using Markov chains. For example, Markov chains can be constructed to sample graphs with fixed degree-degree correlations, specifically by specifying each sampled graph to have a fixed joint degree-degree matrix\footnote{A {\it degree-degree matrix} is a matrix $C$ where entry $C_{k,k'}$ denotes the number of edges between vertices of degree $k$ and vertices of degree $k'$. A graph with a given degree-degree matrix also has a fixed degree sequence, which can be easily reconstructed from the degree-degree matrix \cite{stanton2012constructing}.} \cite{amanatidis2008graphic,stanton2012constructing,czabarka2015realizations}; direct sampling methods exist for this space as well \cite{bassler2015exact}. There is a non-trivial relationship between graphs with fixed degree-degree matrices and connected graphs: connectivity imposes constraints on a degree-degree matrix, e.g.~a connected graph of more than three vertices cannot have any degree-one--degree-one connections. Swaps that involve more edges (e.g.~Figure \ref{fig:otherSwaps}(c)) have been tailored to attempt to satisfy more complex constraints such as a fixed number of triangles or fixed component sizes \cite{tabourier2011generating}. However, even many-edge swaps may fail to connect the space for some constraints. For instance, the space of graphs with a fixed number of triangles is disconnected, even for triple or quadruple edge swaps \cite{nishimura2017swap}. Another approach is to allow graphs which break some constraints, but bias edge swaps towards satisfying constraints, such as those used to sample graphs which satisfy constraints on the counts of arbitrary subgraphs of fixed size \cite{orsini2015quantifying}.

{\it Exponential random graph models} (ERGMs) furnish non-uniform distributions over graph spaces that increase the relative probability of observing certain structural properties, and are typically sampled using Markov chain methods \cite{snijders2002markov}, though the mixing times of these chains are sometimes known to be very poor \cite{bhamidi2011mixing,chatterjee2013estimating}. ERGMs generally focus on simple graphs, though some recent work has extended ERGMs to multigraphs \cite{desmarais2012statistical,krivitsky2012exponential,chandrasekhar2014tractable}; identifying differences between ERGMs specified in vertex-labeled vs.~stub-labeled spaces is an open question. A different non-uniform {\it triadic closure Markov Chain}, related to the {\it Strauss model} (a specific ERGM) \cite{strauss1986general}, has also been proposed and studied for its abilities to replicate empirical subgraph frequencies in social networks \cite{ugander2013subgraph}.

Lastly, there is an enormous literature on models of community structure in networks. The most prominent such model is the {\it stochastic block model} \cite{holland1983stochastic}, which generalized the affiliation model \cite{frank1982cluster}. The stochastic block model has also been adapted to model overlapping (mixed-membership) community structure \cite{airoldi2008mixed}, community structure in bipartite networks \cite{larremore2014efficiently}, and hierarchical community structure \cite{peixoto2014hierarchical}. Other related graph null models include the degree-corrected stochastic block model \cite{karrer2011stochastic} and the block two-level Erd\H{o}s-R\'enyi (BTER) model \cite{kolda2014scalable}. The degree-corrected stochastic block model merges the stochastic block model with techniques from the Chung-Lu model to target an expected degree sequence.

\section{Graph enumeration}
\label{sec:enumeration}

Graph enumeration---counting the number of graphs within a space---relates directly to the uniform sampling problems discussed in this paper. Given a vertex-labeled graph $G$, we can calculate the number of stub-labeled graphs that are isomorphic to $G$, highlighting the difference in size and composition between stub- and vertex-labeled spaces, as shown, for example, in Figure~\ref{fig1}.

By efficiently enumerating this correspondence, it is possible to use a simple reweighting scheme to convert a uniform sample taken from one graph space to a uniform sample under another graph space. While theoretically sound, this approach can fail dramatically in practice for many graph spaces. Graphs that are frequent in one distribution can be enormously different from the graphs that are frequent under the other distribution, meaning that unreasonably large sample sizes are required to overcome biases; see Section~\ref{sec:geometers} for an illustration of this with an empirical degree sequence from a collaboration network. 

{\it Labeled graph spaces.}
The correspondence between vertex-labeled and stub-labeled graph enumerations is straightforward. For a vertex-labeled graph $G=(V,E)$ with a degree sequence $k=\{k_i\}$, we define $q_{\text{simple}}(G)$ as the number of stub-labeled simple graphs that correspond to a vertex-labeled simple graph $G$. The set of $k_i$ stubs for vertex $i$ can be arranged in $k_i !$ unique permutations, and this simple counting argument applied to the entire vertex set shows that:
\begin{align}
	q_{\text{simple}}(G)&=\prod_{i=1}^n k_i !.
\end{align}
This count depends only on the degree sequence $\{ k_i \}$ and not any other property of $G$. In other words, for a fixed degree sequence we see that each graph $G$ in the vertex-labeled space has the same number of stub-labeled graphs that correspond to it. Notice that this is true of the two simple graphs examples in Figure~\ref{fig1}(d,e). As a result, for simple graphs---and only for simple graphs---the relative sizes of the the isomorphism classes are the same in the vertex-labeled and stub-labeled spaces. Thus, an ensemble of random vertex-labeled simple graphs can be converted into a sample of stub-labeled simple graphs by randomly assigning stub-labels to each graph in the ensemble, and an ensemble of  stub-labeled simple graphs can be regarded as a sample of vertex-labeled simple graphs by simply ignoring stub labels.  

For graphs with multiedges or self-loops, it is still possible to count the number of stub-labeled graphs that correspond to each vertex-labeled graph, but now the multiplicity depends on more than just the degree sequence. The quantities are derived by adjusting $q_{\text{simple}}(G)$, the numerator in each quantity, for the number of identical configurations involving multiedges and/or self-loops.
Let $w_{ij}$ be the integer number of edges between vertices $i$ and $j$. For a single self-loop $w_{ii}=1$, again counting the number of edges. The multiplicities for each space are then as follows:
\begin{align}
	q_{\text{loopy}}(G) &= 
	q_{\text{simple}}(G) \times
	\frac{1}{\prod_{i=1}^n w_{ii}! (2^{w_{ii}})}\\
	q_{\text{multi}}(G) &= 
	q_{\text{simple}}(G) \times
	\frac{1}{\prod_{i<j}w_{ij}!}\\
	q_{\text{loopy multi}}(G) &= 
	q_{\text{simple}}(G) \times 
	\frac{1}{\prod_{i=1}^n w_{ii}! (2^{w_{ii}})} \times \frac{1}{\prod_{i<j}w_{ij}!}. 
\end{align}

The conversion factors in the equations above can be enormous, illustrating that, as stated above, the graphs that are prevalent in one distribution can be extremely different from those that are prevalent in the other distribution. As a result, a conversion between stub-labeled and vertex-labeled spaces is an infeasible approach to sampling from the less easily sampled space.

{\it Unlabeled graph spaces. }
For any enumeration related to the space of unlabeled graphs (isomorphism classes, see Figure~\ref{fig1}(c), efficient counting is unfortunately infeasible. Let $p_{\text{simple}}(G)$ be the number of vertex-labeled simple graphs that correspond to an unlabeled graph $G$. It is well known that $p_{\text{simple}}(G) = n!/|\text{Aut}(G)|$, where $|\text{Aut}(G)|$ is the size of the automorphism group of $G$, i.e.~the number of distinguishable vertex graph labelings. Determining $|\text{Aut}(G)|$ is polynomial-time equivalent to determining if two vertex-labeled graphs in the group are isomorphic \cite{mathon1979note}, making it as computationally difficult as the famous graph isomorphism problem \cite{babai2016graph}, for which the best known algorithm is quasipolynomial. Enumerating the size of the isomorphism class for loopy graphs, multigraphs, and loopy multigraphs is at least as hard. Thus, there are no known practical and efficient means of transferring between unlabeled and labeled graph spaces. 

This reasoning also tells us that any sampling method that could produce a uniform sample from the space of unlabeled graphs $G$ with a specified degree sequence would furnish a way to count $|\text{Aut}(G)|$, and thus must take at least quasipolynomial time (unless graph isomorphism is in the complexity class P). It is therefore unlikely that the uniform distribution over unlabeled graphs will see a polynomial time direct sampler, or a Markov chain sampler with a polynomial mixing time.

\section{Applications}
\label{sec:vignettes}

In this section we use three real-world examples that demonstrate how a configuration model can be used as a null model, employing the sampling procedures outlined in Section \ref{sec:sampling}, and how the choice of graph space can have substantial impact on hypothesis tests and scientific conclusions. The first example studies a graph of collaborations among researchers to show that the choice of null model graph space greatly impacts null distributions of degree correlations, leading to varying conclusions about the meaning of the observed degree correlation in the network. The second example studies a graph of interactions among barn swallows to show that the choice between vertex-labeled and stub-labeled spaces is non-trivial and directly impacts conclusions about the underlying animal behavior. Finally, the third example uses a graph of social support in South Indian villages to demonstrate that the vertex clusters found by modularity maximization, a popular community detection method traditionally based on the stub-labeled loopy multigraph configuration model, are sensitive to the choice of underlying graph space. Together, these examples illustrate the practical differences between graph spaces and show how the methods presented in this paper can be applied\footnote{Results in this section utilize code available at \href{https://github.com/joelnish/double-edge-swap-mcmc}{https://github.com/joelnish/double-edge-swap-mcmc} with convergence assessed via trace plots, autocorrelation, and effective sample size analyses.}.

\subsection{Degree assortativity in a collaboration network}
\label{sec:geometers}

Degree assortativity measures the extent to which pairs of connected vertices tend to have similar degrees. This degree-degree correlation is an easily computable and single-valued summary of edge patterns in a graph, and it has been used to shed light on the organizational differences between broad categories of social, biological, and technological networks \cite{newman2002assortative,newman2003mixing}. It is most commonly computed as the Pearson correlation between degrees of vertices that are connected in the network.  It is defined as
\begin{equation}
 r = \frac{\frac{1}{M}\sum_{(u,v) \in E} k_u k_v - \mu_k^2}{\sigma^2_{k}},
 	\label{r_deg}
\end{equation}
where $\mu_k$ and $\sigma^2_k$ are the mean and variance of the vertex degrees across stubs in the network. Positive degree-degree correlations ($r>0$) are commonly interpreted as degree assortativity, while negative correlations ($r<0$) are interpreted as degree disassortativity, but meaningful interpretations of $r$ require that we first quantify the possibility that degree-degree correlations are solely a consequence of the specific degree sequence (see, for example, {\it structural disassortativity} described in \cite{boguna2004cut}). In this application of configuration models, we show that not only does the choice of graph space dramatically shift the null distribution of  degree-degree correlations, but that it can even affect the sign of the expected value of the correlation and effectively invert the conclusions drawn from hypothesis tests.

Degree assortativity is common in social networks, and collaboration networks are commonly thought to be no exception, due to collaborations between extremely productive researchers. Here we consider a collaboration network among computational geometry researchers, where vertices represent researchers and edges represent co-authorship on a paper or book.  The data come from the Computational Geometry Database \cite{jones2002computational} and consist of 9,072 vertices and 22,577 edges. In a collaboration network a $c$-author publication induces a $c$-clique in the graph, because every pair of the $c$ co-authors will share an edge, $c(c-1)/2$ edges in total from a $c$-author publication. A collaboration network is naturally a multigraph since researchers often collaborate on multiple papers together, but there are no self-loops by construction. 

\begin{figure}[t]
	\centering
	\includegraphics[width=0.85\linewidth]{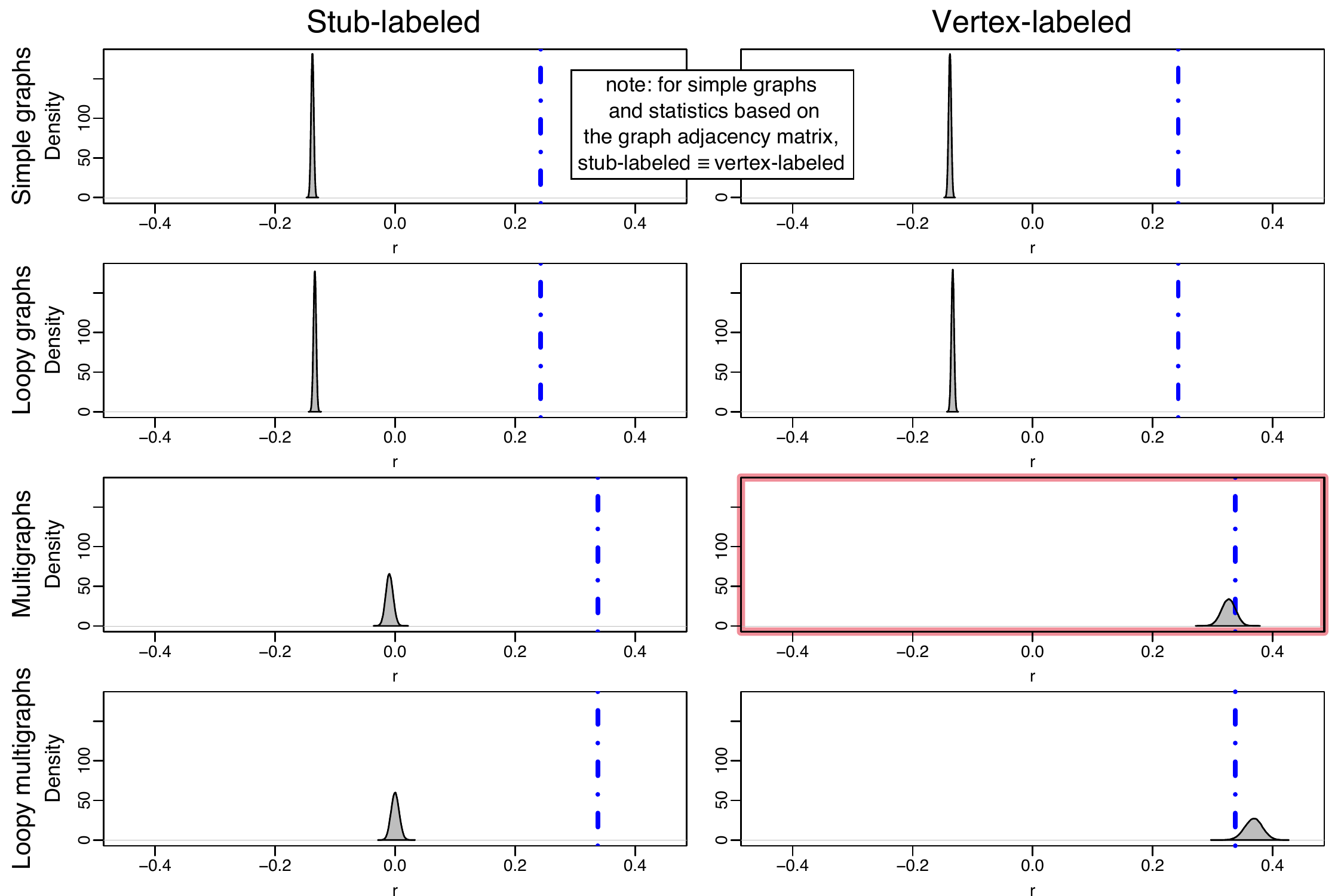}
	\caption{{\bf Degree assortativity of the geometers collaboration network.} Distributions of  degree assortativity corresponding to configuration models over various graph spaces are shown in grey, and the degree assortativity of the geometers collaboration graph is shown in blue. The thick red border around the vertex-labeled multigraph panel indicates the space chosen by answering the three guiding questions listed in Section~\ref{subsec:choosing}. For the spaces of multigraphs and loopy multigraphs, the configuration models uses the degree sequence of the multigraph collaboration network, while for the spaces of simple graphs and loopy graphs, the configuration models uses the degree sequences of the simplified collaboration network. Due to the fact that degree assortativity is a function of the graph adjacency matrix, distributions of assortativity over simple graphs (top row) are identical for both stub- and vertex-labeled spaces.}
	\label{fig:geoAssort}
\end{figure}

In Section \ref{subsec:choosing} and Figure \ref{fig2}, we listed and considered three questions to guide the choice of graph space, which we now answer in order. First, due to the construction of the collaboration graph, the network does not allow self-loops. Second, geometers can co-author multiple papers, so the network allows for multiedges. Third, the crossing of two edges in the multigraph is nonsensical---it is meaningless for author A's first collaboration with author B to be matched with author B's second collaboration with author A, and vice-versa---and therefore this collaboration network should be considered to be a vertex-labeled multigraph.

Although the collaboration network is a multigraph, a researcher might consider ``simplifying'' the observed network into the space of simple graphs by removing all duplicate edges between pairs of vertices, or equivalently thresholding all edge multiplicities at one. Although not applicable here, if the observed graph were to contain self-loops, an analogous removal of self-loops would be necessary to ``simplify'' the graph. Networks are sometimes simplified for convenience, stemming from a desire to analyze a binary simple graph using familiar tools. Simplification may also have a scientific basis, if, for example, the question of interest did not concern the number of relations between a pair of vertices but only whether or not any relation existed. Regardless, we demonstrate here that the decision to simplify can greatly impact conclusions.

Figure \ref{fig:geoAssort} shows distributions of degree assortatitivity over the different configuration models described in this paper, where the correlations of the empirical graphs (the original and the simplified) are shown as blue dashed lines, and the null distributions based on correctly sampled configuration models (Section~\ref{sec:sampling}) are shown as grey probability densities.\footnote{Although the space of loopy graphs is not necessarily connected under double-edge swaps, it can be shown to be connected for the degree sequence of this collaboration network, allowing the use of Algorithm 2 or 3 from Section~\ref{sec:sampling}; see \cite{nishimura2016sampling} for details.} Note immediately that many of the null distributions in Figure~\ref{fig:geoAssort} have almost no overlap in their distributional mass, illustrating two key practical implications of null model selection. First, comparing the panels in each column illustrates a direct impact of the inclusion or exclusion of self-loops and/or multiedges. Second, comparing the panels in each row indicates that, although the space of vertex-labeled graphs is nested within the space of stub-labeled graphs, the frequency at which each vertex-labeled graph appears in the stub-labeled space is so dramatically non-uniform that the ranges of  degree-degree correlations under each null distribution appear disjoint.

Most importantly, the null distribution differences shown in Figure~\ref{fig:geoAssort} lead to conflicting study conclusions. All four stub-labeled configuration models---which our decision framework identify as incorrect models---suggest that the observed collaboration graph is far more assortative than a random graph with the same degree sequence. However, this conclusion is dramatically tempered when using the vertex-labeled multigraph configuration model that was identified by answering the three questions of Section~\ref{subsec:choosing}. Furthermore, if one incorrectly allowed self-loops and sampled the space of vertex-labeled loopy multigraphs one might erroneously conclude that the collaboration network was slightly {\it dis}assortative. The dramatic variation of degree-degree correlations among null models, shown in Figure~\ref{fig:geoAssort}, highlights the importance of correctly choosing a graph space, and avoiding the default null model of stub-labeled loopy multigraphs associated with straightforward stub matching.

\subsection{Trait assortativity in a barn swallow interaction network}\label{sec:swallows}

Trait assortativity measures the extent to which pairs of connected vertices tend to have similar scalar-valued traits. This pairwise correlation is calculated using the same formula as degree assortativity in Eq.~\eqref{r_deg}, but with degrees replaced with trait values \cite{newman2002assortative,newman2003mixing}. As with degree assortativity, measurements of trait assortativity provide clues as to how particular traits are related to the arrangement of a network's edges. And again, as with degree assortativity, large or small values of trait assortativity are uninterpretable without first understanding the distribution of values which might be observed by random chance. In this application of configuration models, we show once more that scientific conclusions are highly sensitive to the graph space chosen as a null model, applying the methods of this paper to a multigraph of interactions among barn swallows and a trait that quantifies the birds' plumage color.

Past studies have shown that plumage color of the Colorado barn swallow (\textit{Hirundo rustica erythrogaster}) is associated with reproductive success \cite{safran2005dynamic}, but it is unknown if this is due to genetic incompatibility between birds of different colors or if it is due to the preferential mixing of birds by color. To investigate whether there is evidence that swallows preferentially interact with other swallows of similar color, we consider network and trait data describing a population of $17$ Colorado barn swallows collected during the 2014 breeding season \cite{levin2016stress}. Each vertex in the network represents a swallow, and each edge represents an interaction: an interaction was recorded between bird pairs whenever their proximity tags registered a close encounter, with interactions aggregated over 15 hours and measured across three days \cite{levin2015performance,levin2016stress}. Researchers also recorded the color of each bird's ventral plumage as a scalar, standardizing colors between bird sexes. To determine whether birds of similar color interacted more than one would expect by chance, while controlling for the fact that some birds have higher interaction counts than others, we compare the observed assortativity by color to the distribution of assortativity values for networks with identical degrees (i.e., interactions counts) but with their interactions randomized.

We now apply the three questions of Section \ref{subsec:choosing} and Figure \ref{fig2} to guide the choice of null model graph space. First, due to the fact that a bird cannot interact with itself, the network does not allow for self-loops. Second, because pairs of swallows may interact multiple times during the data collection period, the network allows for multiedges. Third, the crossing of two edges in the multigraph is nonsensical due to their temporal ordering---it is meaningless for bird A's first interaction to be paired with bird B's second interaction, and vice-versa---and therefore this interaction network is a vertex-labeled multigraph.

As in the previous application, a slight change in the scientific question could change the graph space selected by the three questions. Specifically, if the researchers wished to determine whether birds of similar color tended to {\it ever} interact with each other, the network should be ``simplified'' by reducing all multiedges to single edges, creating a vertex-labeled simple graph in which an edge is present between any pair of swallows that interacted at any point during data collection. It is tempting to think that this simplification will not be impactful---after all, only 34\% of interacting bird pairs interacted more than once, and only 11\% interacted more than twice. However, we now show that this is not the case.

\begin{figure}[t]
\centering
	\includegraphics[width=0.85\linewidth]{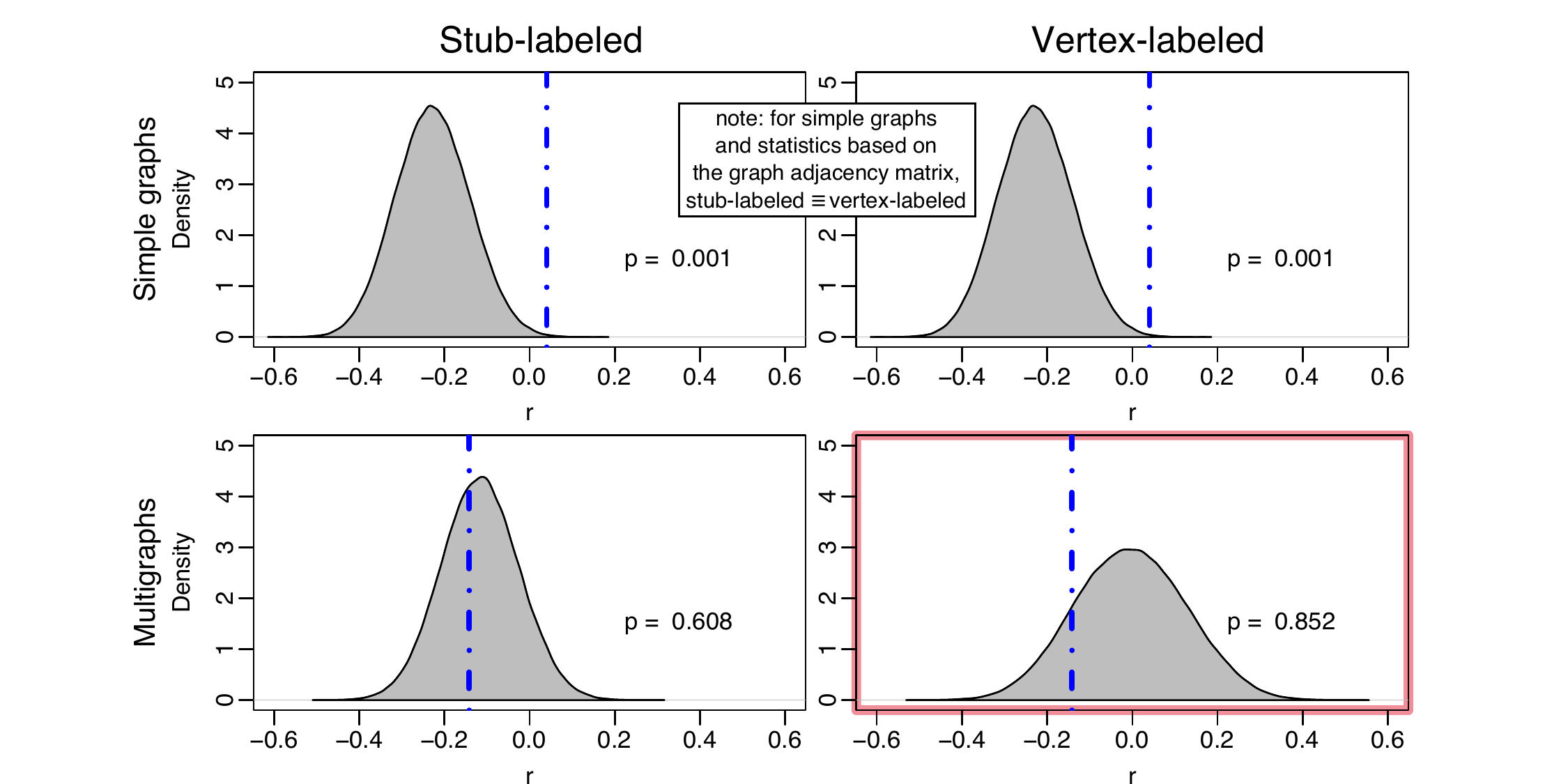}
	\caption{{\bf Color assortativity of the barn swallow interaction network.} Null distributions of color assortativity associated with the uniform distribution over simple graphs, stub-labeled multigraphs, and vertex-labeled multigraphs. The vertical blue line denotes the observed assortativity for the simplified (top row) original (bottom row) barn swallow networks. The thick red border around the vertex-labeled multigraph panel indicates the space chosen by answering the three guiding questions listed in Section~\ref{subsec:choosing}. Also depicted are the (upper-tailed) $p$-values, i.e.~the proportion of the null distribution assortativity values that are greater than the respective observed assortativity. Due to the fact that trait assortativity is a function of the graph adjacency matrix, distributions of assortativity over simple graphs (top row) are identical for both stub- and vertex-labeled spaces.}
	\label{swallows}
\end{figure}

Figure~\ref{swallows} shows color assortativity distributions for the simple graph configuration model and for vertex-labeled and stub-labeled multigraph configuration models, as well as $p$-values for the corresponding one-sided hypothesis tests testing for positive color assortativity (i.e.~whether swallows of similar color preferentially interact). Thus, in each case the $p$-value is equal to the proportion of the null distribution graphs with assortativity values greater than the observed value; small $p$-values are indicative that the observed assortativity is noteworthy. Once more, the choice of configuration model has a substantial and significant impact on the null distributions of color assortativity.  An analysis based on simple graphs would conclude that the presence or absence of interaction is significantly correlated with plumage color ($p=0.001$). However, the related analysis based on vertex-labeled multigraphs---the analysis identified by the three questions from Section~\ref{subsec:choosing}---concludes that there is no evidence that the number of interactions is significantly correlated with plumage color ($p=0.852$). 

This application reveals another, more subtle aspect of choosing a graph space. Due to the fact that both degree and trait assortativity are computed as a Pearson correlation, it is often assumed that in the absence of correlations, i.e., when edges are placed at random, $r=0$ \cite{newman2002assortative}, and that $r>0$ and $r<0$ indicate assortative and disassortative mixing patterns, respectively. However, as shown in Figures~\ref{fig:geoAssort}-\ref{swallows}, zero is the incorrect point for comparison; the distribution of color assortativity in Figure~\ref{swallows} is centered around zero for only one of the three graph spaces shown. Moreover, the simplified network has a near-zero assortativity, indicated by the blue dashed line, yet when compared with its null distribution from a simple configuration model, it is clear that the interaction presence/absence (simple) network is significantly assortative by plumage color. Thus, the choice of configuration model affects not only the scientific hypothesis being tested and its conclusion, but also the baseline against which we should anchor our intuition for correlations in networks. 

\subsection{Community detection in a South Indian village social support network}
\label{sec:modularity}
Community detection is a fundamental task of network science in which the vertices are divided into groups (also called clusters or communities) based solely on the patterns of edges. Often, communities are defined as groups of vertices that are more densely connected to each other than one would expect by chance. Community detection provides a course-grained summary of the network which enables further study of its large-scale organization and may also reveal correlations between vertex attributes and global network structure. Partitions of vertices produced by community detection have been used in a wide variety of applications, including studies of large-scale online social network structure \cite{leskovec2009community}, evolutionary constraints of malaria parasites \cite{larremore2013network}, and constructing experimental treatment groups for randomized controlled trials on networks \cite{ugander2013graph}. 

There are many approaches to community detection in networks \cite{fortunato2010community}, with one of the most popular being modularity maximization \cite{newman2004finding}. Modularity measures the strength of community structure in a network for a particular division of the vertices into groups, and its maximization is based on the premise that communities are groups of vertices that are more densely linked to each other than one would expect by chance---that is, than one would expect, were the edges of the network arranged randomly.  More precisely, modularity  is the average difference between the observed network adjacency matrix $A$ and its expectation $\mathbb E[A | k]$, under a configuration model null model, across all within-group edges in the network. In particular, modularity assumes a stub-labeled loopy multigraph configuration model, for which the expected number of edges between any two vertices $i$ and $j$, with degrees $k_i$ and $k_j$, respectively, would be $\mathbb{E}_{l,m}^{stub}[A_{ij} |  k] = k_i k_j / 2M$.\footnote{Expectations over the Chung-Lu model \cite{chung2002average} and expectations over the stub-labeled loopy multigraph configuration model are identical under a mild assumption on the skew of the degree distribution, that $\max_{i,j} k_i k_j / \sum_\ell k_\ell \le 1$. Thus, for stub-labeled loopy multigraphs, either model may be used to produce the estimate $k_i k_j / 2M$, but as we shall see, this is not the case for other graph spaces, for which the Chung-Lu model cannot be used.} 
The widely used modularity $Q$ is therefore defined as
\begin{equation}
	Q = \frac{1}{2M}\sum_{i,j} \left (A_{ij} - \frac{k_i k_j}{2M} \right ) \delta(g_i,g_j)\ ,
	\label{q_newman}
\end{equation}
where $A$ is the network adjacency matrix, $g_i$ is the community assignment of vertex $i$, and $\delta$ is the Kronecker delta which restricts the sum to within-group edges. 

The null model of modularity maximization, as it is written above, is the space of stub-labeled loopy multigraphs, yet this space is not necessarily an appropriate null model for many real-world networks. Modularity is often used to analyze simple graphs, and this can lead to unexpected or undesirable community partitions \cite{massen2005identifying,cafieri2010loops}. If a simple graph is sufficiently large, sufficiently sparse, and its degree sequence is sufficiently bounded, then the expected number of edges between two vertices in the space of simple graphs is asymptotically the same as the expectation in the space of stub-labeled loopy multigraphs, i.e.~$\mathbb E_{s}[A_{ij}|  k ] \approx k_i k_j / 2M$ \cite{newman2010networks} (where $s$ denotes simple graphs). Thus, Eq.~\eqref{q_newman} will produce asymptotically correct values for simple graphs. However, for finite simple graphs, we lack guarantees about the accuracy of Eq.~\eqref{q_newman}. The definitions and methods introduced in this paper now enable us to estimate these expectations to arbitrary accuracy by first identifying the correct graph space (Section~\ref{subsec:choosing}) and then sampling from it appropriately (Section~\ref{sec:sampling}). We now show that the choice of configuration model, and in particular the choice of a vertex-labeled model, meaningfully changes the results of community detection. 

For this investigation we analyzed a network of social support relationships in a pair of South Indian villages collected by Power \cite{power2015religious}. The number of edges between two members of the villages corresponds to the number of different social supports between them. Due to the differential meaning of each support, for a pair who share $m$ mutual supports, there are $m$ possible ways these can be shared, not $m!$. Thus, the dataset indicates that it belongs to the space of vertex-labeled multigraphs by answering the questions of Section~\ref{subsec:choosing}: self-loops are nonsensical, multiedges exist, and vertices are labeled but stubs are not.  

In order to redefine modularity for an arbitrary graph null model, we rewrite the expected number of edges between two vertices of degree $k$ and $k'$ as $\mathbb{E} [C_{k,k'}] / n_{k}n_{k'}$, where $n_k$ is the number of vertices in the network with degree $k$ and $\mathbb{E} [C_{k,k'}]$ is the expected number of edges between all vertices of degrees $k$ and $k'$ respectively under the specified null model.  We then rewrite modularity in generic form, based on $\mathbb E[C_{k,k'}]$,
\begin{equation}
	Q_{\text{generic}} = \frac{1}{2M}\sum_{i,j} \left (A_{ij} - \frac{\mathbb E[C_{k_i,k_j}]}{n_{k_i} n_{k_j}} \right ) \delta(g_i,g_j)\ .
	\label{q_ckk}
\end{equation}
To change the null model, we need only change the distribution of graphs over which $\mathbb E[C_{k,k'}]$ is defined. For most graph spaces, an analytical expression for $\mathbb E[C_{k,k'}]$ is unknown, but by using the MCMC techniques of Section~\ref{sec:sampling}, we can estimate $\mathbb E[C_{k,k'}]$ for any graph space discussed in this paper. Specifically, for each sample graph, and for all degrees $k$ and $k'$ in the degree sequence, we tally the number of edges between vertices of degrees $k$ and $k'$ and then average these counts over all samples to estimate $\mathbb E[C_{k,k'}]$. 

Figure~\ref{fig-disagreeinglocaloptima}(a) shows the non-uniform differences between $\mathbb E[C_{k,k'}]$ for the stub-labeled loopy multigraph and the vertex-labeled multigraph. In particular, edges between vertices with more disparate degrees are more common under the standard stub-labeled loopy multigraph space than the vertex-labeled multigraph space. As a result, the vertex-labeled multigraph modularity function favors grouping connected vertices with differing degrees more than the stub-labeled loopy multigraph modularity function. The vertex-labeled multigraph null model meaningfully changes the landscape of the modularity objective function, which we demonstrate by studying the behavior of two different modularity maximizing algorithms. 

The first algorithm, based on the Kernighan-Lin algorithm, begins with a random partition of the network's vertices into a fixed number of communities. Then, a deterministic local search proceeds by sequentially proposing to move each vertex into each of the other communities. The proposal that most increases or least decreases modularity is accepted and a single full iteration is completed when every vertex has been forced to moved exactly once. The highest modularity partition from one iteration is then used to seed the next iteration, and the algorithm exits when a full iteration passes with no improvement. 

For our investigation we recorded the final partition returned by the algorithm for $K$ communities, where $K=2, 3, \dots, 10$, beginning from $100$ random initial partitions and using Eq.~\eqref{q_newman} as the objective function. Next, starting from the same $100$ initial partitions, we recorded the final partitions using Eq.~\eqref{q_ckk} as the objective function. The two objective functions produced different final partitions from the same initial partitions in a vast majority of cases for $K>2$, as shown in Figure~\ref{fig-disagreeinglocaloptima}(b), and these differences were substantial, as indicated by a normalized mutual information in Figure~\ref{fig-disagreeinglocaloptima}(c) substantially below one. Additionally, we tested whether the locally maximum modularity partitions of one objective function's were also local maxima of the other function, and found that between $9\%$ and $19\%$ were not, indicating that the two null models are in disagreement about the locations of locally optimal partitions.

\begin{figure}
	\centering
	\includegraphics[width=0.32\linewidth]{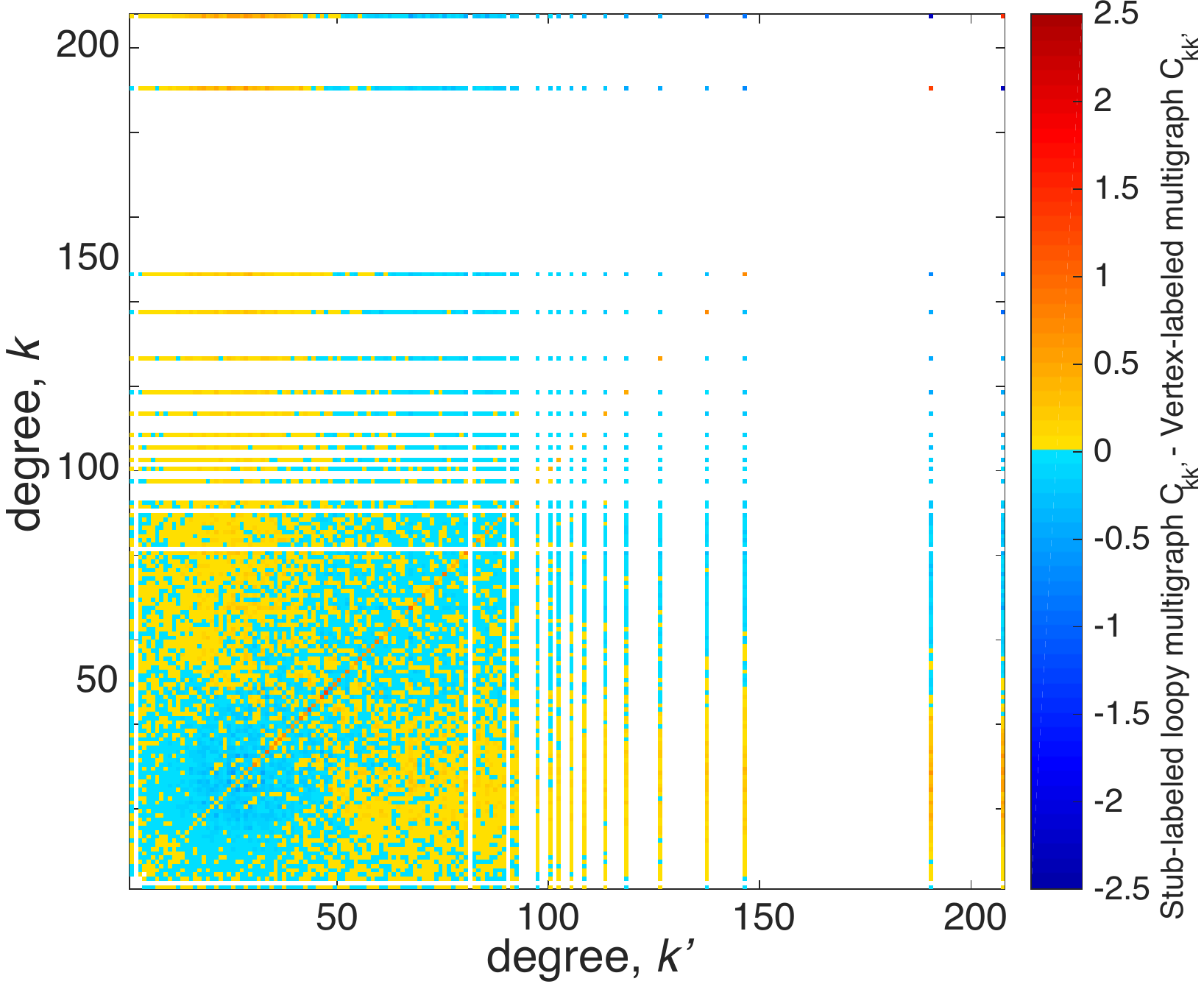}\ 
	\includegraphics[width=0.33\linewidth]{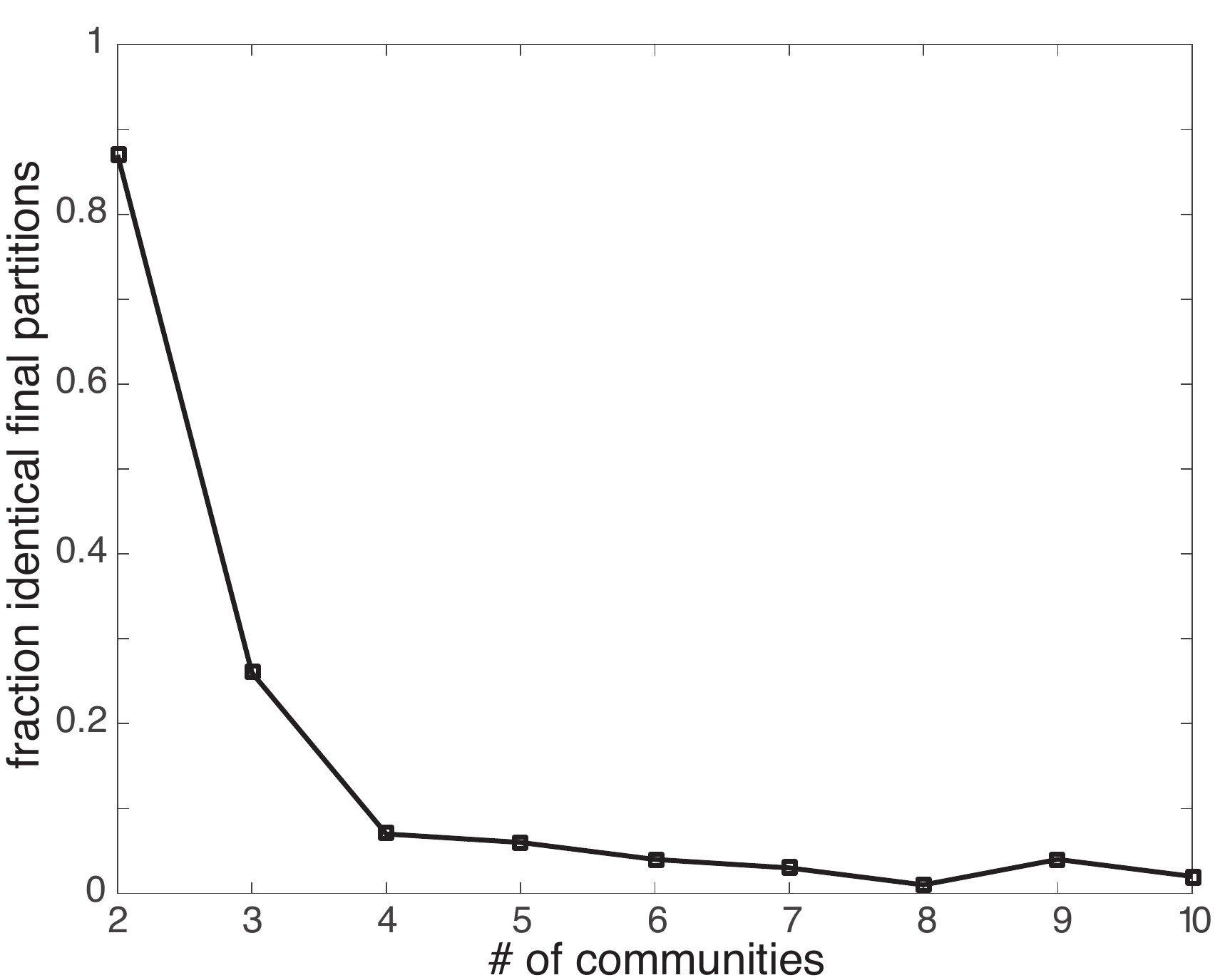}\ 
	\includegraphics[width=0.32\linewidth]{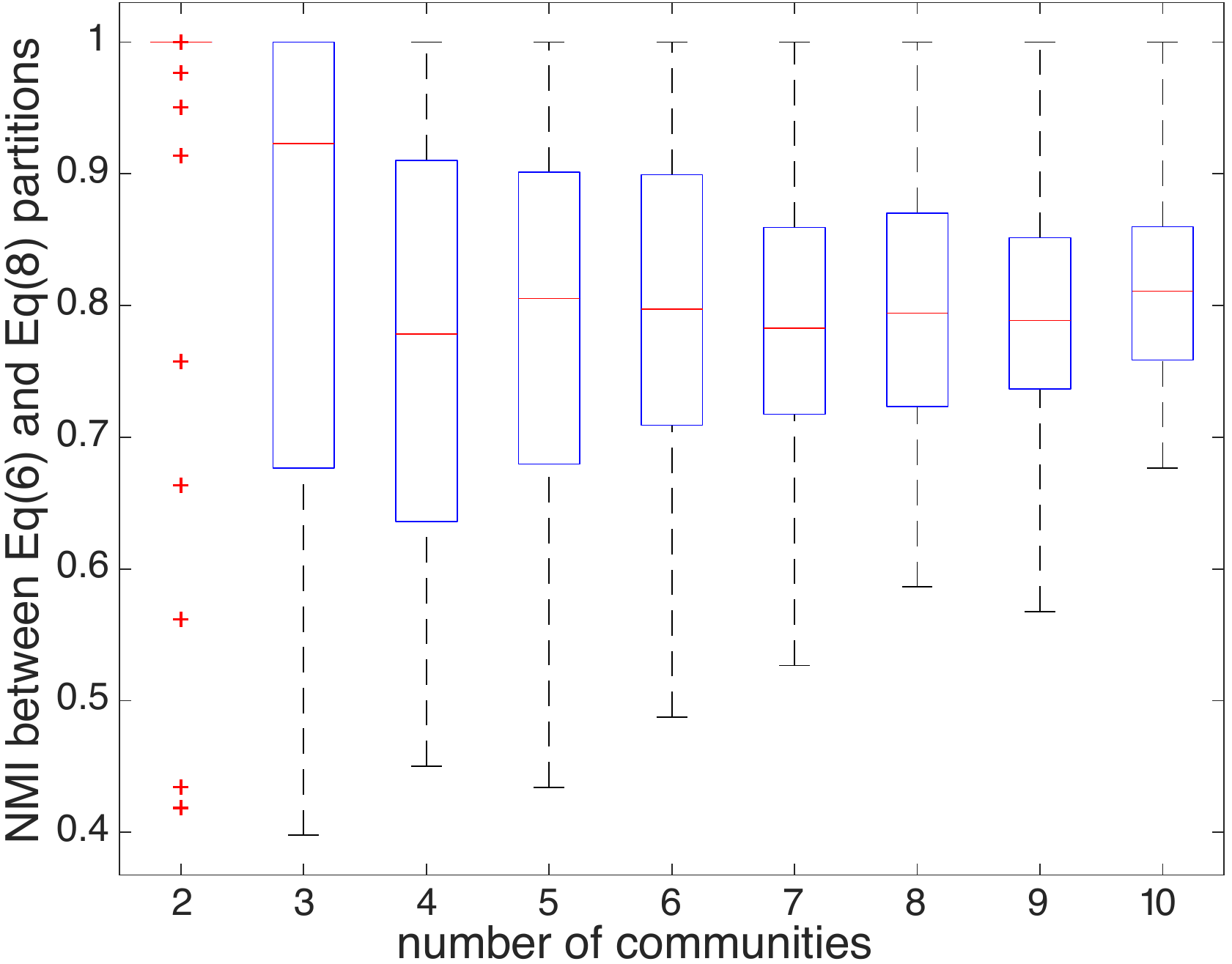}
	\caption{{\bf Choice of configuration model space impacts modularity maximization}. 
	Modularity maximization identified communities in a social multigraph of $782$ vertices under two configuration null models: stub-labeled loopy multigraphs [Eq.~\eqref{q_newman}] and vertex-labeled multigraphs [Eq.~\eqref{q_ckk}].
	(a) Non-uniform differences between the two null model matrices are colored as indicated; white space indicates that there were zero vertices of degree $k$.
	(b) Results of modularity maximization by deterministic local search (see text), starting from identical initial state but using the two different null models, differ for the vast majority of initial states and number of communities $K$, except the case of $K=2$ communities for which $88\%$ agreed. 
	(c) Distributions of normalized mutual information (NMI), which measures similarity of partitions, show differences between the partitions found using the two null models. }
	\label{fig-disagreeinglocaloptima}
\end{figure}

The second modularity maximization algorithm considered is a fully deterministic greedy algorithm that begins with each vertex in a community of its own. Then, at each step of the algorithm there is a proposal to merge every possible pair of communities, and the merger that most increases modularity is chosen \cite{newman2004finding}. This process is repeated sequentially until the vertices are all merged into a single community. From the resulting sequence of partitions and modularity values, we may either select the partition with the highest modularity score or select the partition with a desired number of communities. 

In our investigation of the village social network using this greedy algorithm, the highest modularity partition using both Eqs.~\eqref{q_newman} and \eqref{q_ckk} had 10 communities. These two maximum modularity partitions were identical for both null models, yet partitions were identical at only 43\% (335 of 782) of the agglomerative steps. Thus, while a majority of the algorithm's agglomerative choices differed by null model, a large fraction of partitions remained the same, and both models produced the same optimal partition with 10 communities.

Together, these tests show that modularity maximization, a community detection method based on a configuration model, is sensitive to the particular configuration model used in Eq.~\eqref{q_ckk}. While other algorithms for modularity-based community detection may explore the modularity surfaces using different means, the surfaces themselves are nevertheless distinct. In order to preserve interpretability of modularity maximization's results, practitioners should choose the correct graph space from which the observed network is plausibly drawn.

\section{Conclusions}

Random graphs with fixed degree sequences appear across an enormous number of mathematical and scientific domains, and, until this point, uniform distributions of such graphs have commonly been called {\it the} configuration model. In this paper we showed that the concept of a random graph with a fixed degree sequence can be applied to eight overlapping, yet often meaningfully different graph spaces. We introduced three questions in Section~\ref{subsec:choosing} regarding the presence or absence of self-loops, multiedges, and stub labels, which can be used along with contextual knowledge of a real-world network to decide upon the most appropriate graph space. 

Three applications in Section~\ref{sec:vignettes} highlighted the particularly important distinction between stub-labeled and vertex-labeled spaces. In particular, the use of a stub-labeled configuration model in place of its vertex-labeled counterpart inverted the conclusions of degree-correlation hypothesis tests and changed the optimization landscape for community detection. Simply put, stub- and vertex-labeled spaces are not interchangeable. Simple and non-simple configuration models are not interchangeable either. Although there are widely known asymptotic conditions under which the space of stub-labeled loopy multigraphs contains few graphs with self-loops or multiedges \cite{molloy1995critical}, many graphs analyzed in practical contexts are simply too small or too dense to lean on these asymptotic results. 

As part of our work, we presented three Markov chain Monte Carlo sampling algorithms and proved that they can be used to generate graphs uniformly from the eight graph spaces discussed. To that end, pseudocode and Python implementations are provided, as used in the three applications of Section~\ref{sec:vignettes}. However, as with most algorithms, there are tradeoffs. While these MCMC approaches are proven to uniformly sample from the desired graph space, rigorous mixing time bounds have not been established, and we look forward to future mixing time investigations. 

 Throughout this paper we discussed and reviewed the wide and disparate history of configuration models and their sampling techniques, drawing on literature from sociology, ecology, combinatorics, statistics, and physics. Many results regarding configuration models have been discovered multiple times, in part due to the deep and scattered literature, and in part due to the fact that that there exist various names given to the same set of models, and one name given to multiple different models. It is therefore our hope that the results and summaries in this paper help to clarify and refine the study of configuration models, their graph spaces, and their applications. 

\thanks{{\bf Acknowledgements.} The authors would like to thank Kristen Altenberger, Aaron Clauset, Iris Levin, Stephen Ragain, Henry Scharf, Alice Schwarze and the Networks Journal Club at University of Oxford for thoughtful comments on early versions of the manuscript, Eleanor Power for providing South India social support network data, Iris Levin and Rebecca Safran for providing data on barn swallow contacts, and the American Mathematical Society's Mathematics Research Community on Networks for bringing the authors together.}

\def\refname{References}
\bibliography{references}{}
\bibliographystyle{plain}

\clearpage
\appendix

\section*{Supplementary Materials: Algorithm~\ref{alg_vertex_complex}} 
\small
This algorithm uniformly samples vertex-labeled graph spaces more efficiently than Algorithm~\ref{alg_vertex_basic} by computing both the forward and reverse probabilities of any double-edge swap according to the cases in Figure \ref{EdgeSwapCases}. It then down-samples the higher probability swap to have the same probability as the lower probability swap, accelerating mixing.

\begin{algorithm}
\caption{vertex-labeled MCMC} \label{alg_vertex_complex} 
\end{algorithm}
\begin{algorithmic}
\REQUIRE {initial graph $G_0$, graph space (simple graph, multigraph, or loopy multigraph)}
\ENSURE{sequence of graphs $G_i$}
\FOR {$i<$ number of graphs to sample} 
\STATE choose two distinct edges $(u,v)$ and $(x,y)$ uniformly at random
\IF {$Unif(0,1)<0.5$ }
	\STATE $u,v\leftarrow v,u$ 
\ENDIF
\IF {edge swap would leave the graph space} 
	\STATE resample current graph: $G_{i} \leftarrow G_{i-1}$
\ENDIF
\IF {$\exists$ 4 distinct vertices in $u,v,x,y$}
	\STATE $SwapsTo\leftarrow w_{uv}w_{xy}$ 
	\STATE $SwapsFrom \leftarrow (w_{ux}+1)(w_{vy}+1)$ 
\ELSIF {$\exists$ 3 distinct vertices in $u,v,x,y$}
	\IF {$u=v$ or $x=y$}
    		\STATE $SwapsTo \leftarrow 2w_{uv}w_{xy}$ 
    		\STATE $SwapsFrom \leftarrow (w_{ux}+1)(w_{vy}+1)$ 
	\ELSE
    		\STATE $SwapsTo \leftarrow w_{uv}w_{xy}$ 
    		\STATE $SwapsFrom \leftarrow 2(w_{ux}+1)(w_{vy}+1)$ 
	\ENDIF
\ELSIF {$\exists$ 2 distinct vertices in $u,v,x,y$}
	\IF {only one of $(u,v)$ or $(x,y)$ is a self-loop}
		\STATE $G_{i} \leftarrow G_{i-1}$ 
		\STATE continue 
	\ELSIF{both $(u,v)$ and $(x,y)$ are self-loops}
		\STATE $SwapsTo\leftarrow 2w_{uu}w_{xx}$ 
    		\STATE $SwapsFrom \leftarrow \frac{1}{2}(w_{ux}+2)(w_{ux}+1)$ 
	\ELSE
		\STATE $SwapsTo\leftarrow \frac{1}{2}w_{uv}(w_{uv}-1)$ 
		\STATE $SwapsFrom \leftarrow 2(w_{uu}+1)(w_{vv}+1)$ 
	\ENDIF
\ELSE
	\STATE $G_{i} \leftarrow G_{i-1}$ 
	\STATE continue 
\ENDIF
\STATE $P\leftarrow \min (1,\frac{SwapsFrom }{SwapsTo})$
\IF {$Unif(0,1)<P$}
	\STATE {swap $(u,v),(x,y)\leadsto(u,x),(v,y) $ to produce $G_{i}$}
\ELSE
	\STATE $G_{i} \leftarrow G_{i-1}$ 
\ENDIF
\ENDFOR
\end{algorithmic}

\clearpage

\begin{center}
\begin{figure}[h]
	\centering
	\includegraphics[width=\linewidth]{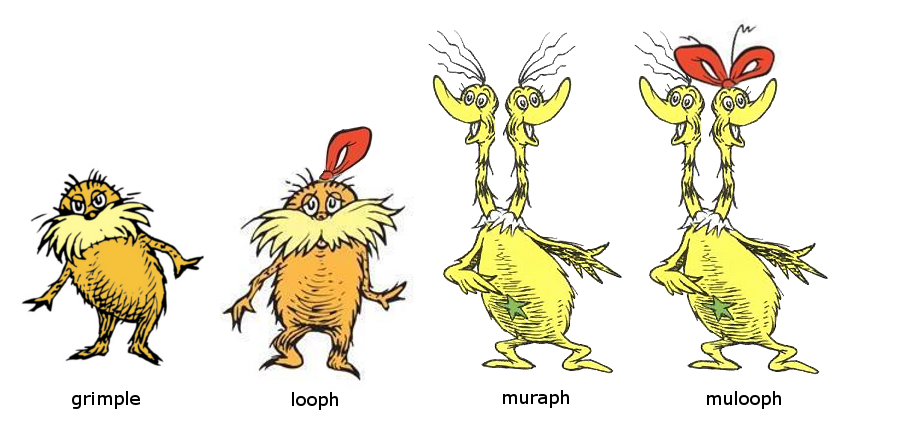}
\end{figure}
Four species of graphs lived on the last page,\\
The wise old mulooph was the most cited sage.\\
$ $

To her left, the muraph, no loop-looping bow,\\
The second key graph in the modeling show.\\
$ $

Yet all was not well with the old grumpy grimple,\\
Since basic stub matching just can't make graphs simple.\\
$ $

Last graph of the four, an elusive young looph,\\
Whose counterexample demands a new proof.\\
$ $

Each graph in its space, with uniform frequence,\\
A place for each case---and with fixed degree sequence!\\

$$ $$
``Don't cry because it's over. Smile because it happened." -- Dr. Seuss
\end{center}

\end{document}